\documentclass[journal]{IEEEtran}

\usepackage{cite,url, amsmath, amssymb, array, balance}
\usepackage{algorithm, algorithmic, calc, color, multicol, multirow}
\usepackage{graphics, epsfig, psfrag, subfigure, graphicx, epstopdf}
\usepackage{amsmath, amsfonts, stmaryrd}
\usepackage{pstricks, pst-node, pst-tree}
\usepackage{verbatim}

\newtheorem{theorem}{Theorem}[section]
\newtheorem{lemma}[theorem]{Lemma}

\newtheorem{corollary}[theorem]{Corollary}
\newtheorem{definition}{Definition}
\newtheorem{example}{Example}

\newcommand{\ie}{{\em i.e., }}
\newcommand{\eg}{{\em e.g., }}

\begin{document}
\IEEEoverridecommandlockouts

\title{A Network Coding Approach to Loss Tomography}

\author{Pegah~Sattari,~\IEEEmembership{Student Member,~IEEE,}
        Athina~Markopoulou,~\IEEEmembership{Senior~Member,~IEEE,}
        Christina~Fragouli,~\IEEEmembership{Member,~IEEE,}
        and~Minas~Gjoka
\thanks{Pegah Sattari, Athina Markopoulou, and Mina Gjoka are with the Department
of Electrical Engineering and Computer Science, University of California, Irvine, CA,
92697 USA (e-mail: psattari, athina, mgjoka@uci.edu).}
\thanks{Christina Fragouli is with the School of Computer and Communication Sciences, EPFL, Lausanne, Switzerland (e-mail: christina.fragouli@epfl.ch).}
\thanks{This work has been supported by the following awards: NSF CAREER 0747110, AFOSR  MURI FA9550-09-0643, AFOSR FA9550-10-1-030; FNS award no PP00P2-128639 and ERC-2009-StG-240317.}
}

\maketitle

\begin{abstract}
Network tomography aims at inferring internal network characteristics based on measurements at the edge of the network. In loss tomography, in particular, the characteristic of interest is the loss rate of individual links and multicast and/or unicast end-to-end probes are typically used. Independently, recent advances in network coding have shown that there are advantages from allowing intermediate nodes to process and combine, in addition to just forward, packets. In this paper, we study the problem of loss tomography in networks with network coding capabilities. We design a framework for estimating link loss rates, which leverages network coding capabilities, and we show that it improves several aspects of tomography including the identifiability of links, the trade-off between estimation accuracy and bandwidth efficiency, and the complexity of probe path selection. We discuss the cases of inferring link loss rates  in a tree topology and in a general topology. In the latter case, the benefits of our approach are even more pronounced compared to standard techniques, but we also face novel challenges, such as dealing with cycles and multiple paths between sources and receivers. Overall, this work makes the connection between active network tomography and network coding.
\end{abstract}

\begin{IEEEkeywords}
Network Coding, Network Tomography, Link Loss Inference.
\end{IEEEkeywords}

\section{Introduction}

\IEEEPARstart{D}{istributed} Internet applications often need to know information about the
characteristics of the network. For example, an overlay or peer-to-peer network may want to detect and recover  from failures or degraded performance of the underlying Internet infrastructure. A company with several geographically distributed campuses  may want to know the behavior
of one or several Internet service providers (ISPs) connecting the campuses, in order
 to optimize traffic engineering decisions and achieve the best end-to-end performance.
 To achieve this high-level goal, it is necessary for the nodes participating in the application or overlay to monitor Internet paths, assess and predict their behavior, and
eventually make efficient use of them by taking appropriate  control and traffic engineering decisions
both at the network and at the application layers. Therefore, accurate monitoring at minimum overhead and complexity is of crucial importance in order to provide the input needed to take such informed decisions. However, there is currently no incentive for ISPs to provide detailed information about their internal operation and performance or to collaborate with other ISPs for this purpose. As a result, distributed applications usually rely on their own end-to-end measurements between nodes they have control over, in order to infer performance characteristics of the network.

Over the past decade,  a significant research effort has been devoted to a class of monitoring
problems that aim at inferring internal network characteristics using measurements at the edge \cite{tomography-survey}. This class of problems is commonly referred to as {\em tomography} due to its analogy to medical tomography. In this work, we are particularly interested in loss tomography, {\em i.e.,} inferring the loss probabilities (or loss rates) of individual links using active end-to-end measurements \cite{minc,general,nowak,tomo-unicast-1,tomo-unicast-2}. The topology is assumed known and sequences of probes are sent and collected between a set of sources and a set of receivers at the network edge. Link-level parameters, in this case loss rates
of links, are then inferred by the observations at the receivers. The bandwidth efficiency of these methods can be measured by the number of probes needed to estimate the loss rates of interest within a desired accuracy. Despite its significance and the research effort invested, loss tomography remains a hard problem for a number of reasons, including complexity (of optimal probe routing and of estimation), bandwidth overhead, and identifiability (the fundamental fact that tomography is an inverse problem and we cannot directly observe the parameters of interest). Moreover, there are some practical limitations such as the lack of cooperation of ISPs, the need for synchronization of sources in some schemes, etc.

Recently, a new paradigm to routing information has emerged with the advent of network coding \cite{nc1,nc2,netcodingwebpage}. The main idea in network coding is that, if we allow intermediate nodes to not only forward but also combine packets, we can obtain significant benefits in terms of throughput, delay and robustness of distributed algorithms. Our work is based on the observation that, in networks equipped with network coding capabilities, we can leverage these capabilities to significantly improve several aspects of loss tomography. For example, with network coding, we can combine probes from different paths into one, thus reducing the bandwidth needed to cover a general graph and also increasing the information per packet. Furthermore, the problem of optimal probe routing, which is known to be NP-hard, can be solved with linear complexity when network coding is used.

This paper proposes a framework for loss tomography (including mechanisms for probe routing, probe and code design, estimation, and identifiability guarantees) in networks that already have network coding capabilities. Such capabilities do not exist yet on the Internet today, but are available in wireless mesh networks, peer-to-peer and overlay networks and we expect them to appear in more environments as network coding becomes more widely adopted. We show that, in those settings, our network coding-based approach improves the following aspects of the loss tomography problem:  how many links of the network we can infer (identifiability); the tradeoff between how well we can infer link loss rates (estimation accuracy) and how many probes we need in order to do so (bandwidth efficiency); how to select sources and receivers and how to route probes between them (optimal probe routing). Overall, this is a novel application of network coding techniques to a practical networking problem, and it opens a promising research direction.

The structure of the paper is as follows. Section \ref{sec:related} discusses related work. Section \ref{sec:statement} states the problem and summarizes the challenges and main results.
Section \ref{sec:singlelink} presents a motivating example and provides the conditions of identifiability. Sections \ref{sec:trees} and \ref{sec:general} present in detail the framework and mechanisms in the cases of trees and general topologies, respectively. Section \ref{sec:conclusion} concludes the paper.

\section{\label{sec:related}Related Work}

{\bf Network Tomography.} The term network tomography typically refers to a family of problems that aim at inferring internal network characteristics from measurements at the edge of the network. Internal characteristics of interest may include link-level parameters (such as loss and delay metrics) or the network topology. Another type of tomography problem aims at inferring path-level traffic intensity (\eg traffic matrices) from link-level measurements \cite{vardi}. Our paper focuses on inferring the loss rates of internal links using active end-to-end measurements and assuming that the topology is known. Therefore, it is related to the literature on loss tomography, part of which is discussed below.

Caceres {\em et al.} considered a single multicast tree with a known topology and inferred the link loss rates from the receivers' observations \cite{minc}. In particular, they developed a low-complexity algorithm to compute the maximum likelihood estimator (MLE), by taking into account the dependencies introduced by the tree hierarchy to factorize the  likelihood function and eventually compute the MLE in a recursive way. Throughout this paper, we refer to the MLE for a multicast tree, developed in \cite{minc}, as MINC, and we build on it. Bu {\em et al.} used multiple multicast trees to cover a general topology and proposed an EM algorithm for link loss rate estimation \cite{general}. Follow-up approaches have been developed for unicast probes \cite{tomo-unicast-1, tomo-unicast-2}, joint inference of topology and link loss rates \cite{nowak}, and adaptive tomography and delay inference \cite{tomo-delay}. The above list of references is not comprehensive. Good surveys of network tomography can be found in \cite{tomography-survey, gmichael-survey}.

{\bf Active vs. Passive Tomography.}
Tomography can be based either on active (generating probe traffic) or on passive (monitoring traffic flows and sampling existing traffic) measurements.  Passive approaches have been most commonly used for estimating path-level information, in particular, origin-destination traffic matrices, from data collected at various nodes of the network \cite{vardi}. This approach and problem statement are well-suited for the needs of a network provider. For the problem of inferring link loss rates, active probes are typically used, and information about individual packets received or lost  is analyzed at the edge of the network.  This approach is better suited for end users that do not have access to the network. However, there are also papers that study link loss inference by using existing traffic flows to sample the state of the network \cite{passive1, passive2}. Once measurements have been collected following either of the two methods, statistical inference techniques are applied to determine network characteristics that are not directly observed.

The passive approach has the advantage that it does not impose additional burden on the network and that it measures the actual loss experienced by real traffic. However, it must also ensure that the characteristics of the traffic ({\em e.g.,} TCP) do not bias the sample. In the active approach, one has more control over designing the probes, which can thus be optimized for efficient estimation. The downside is that we inject measurement traffic that may increase the load of the network, may be treated differently than regular traffic, or may even be dropped {\em e.g.,} due to security concerns.

{\bf Network Coding and Inference.}
An extensive body of work on network coding \cite{netcodingwebpage, monograph} has emerged after the seminal work of Ahlswede {\em et al.} \cite{nc1} and Li {\em et al.} \cite{nc2}. The main idea in network coding is that, if we allow intermediate nodes to not only forward but also combine packets, we can realize significant benefits in terms of throughput, delay, and robustness of distributed algorithms. Within this large body of work, closer to ours are a few papers
that leverage the headers of network coded packets for passive inference of properties of a network. In \cite{ho-monitoring}, Ho {\em et al.} showed how information contained in network codes can be used for passive inference of possible locations of link failures or losses.
In \cite{jaggi}, Sharma {\em et al.} considered random intra-session network coding and showed that nodes can passively infer their upstream network topology, based on the headers of the received coded packets they observe (which play essentially the role of probes). The main idea is that the transfer matrix ({\em i.e.,} the linear transform from the sender to the receiver) is distinct for different networks, with high probability. All possible transfer matrices are enumerated, and matched to the observed input/output, and a large finite field is used to ensure that all topologies remain distinguishable. An extended version of this work to erroneous networks is provided by Yao {\em et al.} in \cite{jaggi-arxiv}, where different (ergodic or adversarial) failures lead to different transfer functions. The approach in \cite{jaggi, jaggi-arxiv} has the advantage of keeping the measurement bandwidth low (not higher than the transmission of coefficients, which is anyway required for data transfer with network coding) and the disadvantage of high complexity. In \cite{mahti}, Jafarisiavoshani {\em et al.} considered peer-to-peer systems and used subspace nesting structures to passively identify local bottlenecks. Similar to these papers, we leverage network coding operations for inference; in contrast to these papers, which use the headers of network-coded packets for passive inference of topology, we use the contents of active probes for inference of link loss rates.

{\bf Our Work.} We make the connection between active network tomography and network coding capabilities. In \cite{allerton05}, we introduced the basic idea of leveraging network coding capabilities to improve network monitoring. In \cite{ita07}, we studied link loss estimation in tree topologies. In \cite{globecom08}, we extended the approach to general graphs. In \cite{netcod2011}, we built on MINC \cite{minc}, and we provided the MLEs of the loss rates for all links simultaneously, in multiple-source tree topologies with multicast and network coding; similarly to MINC, we presented an efficient algorithm for computing the MLEs, we proved the correctness, and we analyzed the rate of convergence. This paper combines ideas from these preliminary conference papers into a common framework, and extends them by a more in-depth analysis of identifiability, routing, estimation and code design.

Our approach is active in that probes are sent/received from/to the edge of the network and observations at the receivers are used for statistical inference. Intermediate nodes forward packets using unicast, multicast and simple coding operations. However, the operations at the intermediate nodes need to be set-up once, fixed for all experiments, and be known for inference. Therefore, our approach requires more support from the network than traditional tomography, for the benefit of more accurate/efficient estimation. Our methods may also be applicable to passive tomography, where instead of sending specialized probes, one can view the coding coefficients on a network coded packet as the ``probe'', thus overloading them with both communication and tomographic goals, as it is the case in \cite{jaggi, jaggi-arxiv}. In this paper, we focus exclusively on the tomographic goals by taking an active approach, {\em i.e.,} sending, collecting, and analyzing specialized probes for tomography.

\section{Problem Statement}\label{sec:statement}

\subsection{Model and Definitions} \label{sec:model-definitions}

\subsubsection{Network and Monitoring Scheme}
We consider a network represented as a graph $G=(V,E)$, where  $V$ is the set of nodes and $E$ is
the set of edges corresponding to {\em logical links}\footnote{A logical link results
from combining several consecutive physical links into a single link. This results in a graph $G$ where every intermediate vertex has degree at least three, and in-degree and out-degree at least one. This is a standard assumption in the tomography literature, which is imposed for identifiability purposes, as discussed after Definition \ref{def_ident}.}. We use the notation $e=AB$ for the link $e$ connecting vertex $A$ to vertex $B$. We assume that $G$ has no self-loops and that there is a loss rate associated with every edge in $G$.\footnote{In general, the loss rates in the two directions of an edge can be different, as it is the case on the Internet due to different congestion levels.} The topology $G=(V,E)$ is assumed to be known.

We assume that packet loss on a link $e\in E$ is i.i.d Bernoulli with probability $0\leq \overline \alpha_e < 1$, where $\overline \alpha_e=1-\alpha_e$, and $\alpha_e$ is the success probability of link $e$. Losses are assumed to be independent across links. Let $\alpha=(\alpha_e)_{e \in E}$ be the vector of the link success probabilities\footnote{Note that the notation $\alpha$ refers to the vector of all success probabilities, and $\alpha_e$ refers to the success prob. of an individual edge $e$.}. In loss tomography, we are interested in estimating all or a subset of the parameters in $\alpha$. We use additional notation for the case of tree topologies, as we explain in Section~\ref{sec-model}.

A set $S$ of $|S|=M$ source nodes in the periphery of the network can inject probe packets, while a set $R$ of $|R|=N$ receivers can collect such packets. Several problem variations in the choice of sources and receivers are possible, and we will discuss the following  in this paper:  (i) the set of sources and the set of receivers are given and fixed; (ii) a set of nodes that can act as either sources or receivers  is given (and we can select among them); (iii) we are allowed to select any node to act as a source or a receiver. We assume that intermediate nodes are equipped with unicast, multicast and network coding capabilities. Probe packets are routed and coded inside the network following specific paths and according to specified coding operations. We assume that the packets incur zero transmission, propagation and processing delay as they travel through the network. The routes selected and the operations the intermediate nodes perform are part of the design of the tomography scheme: they are chosen once at set-up time and are kept the same throughout all experiments; all operations of intermediate nodes are known during estimation. For the theoretical results of this paper, we focus on {\em synchronized acyclic networks with zero delay}\footnote{Note that the link delays will only affect where the probe packets would meet in the network; they will not affect our general model.}; for cyclic networks, we convert them to acyclic networks by a proper choice of routing and sources/receivers.

In general, a probe packet is a vector of $M$ symbols, with each symbol being in a finite field $F_q$. This includes as special cases: scalar network coding (for $M=1$), operations over binary vectors (for $q=2$), and more generally, vector network coding (for $M>1$)\footnote{What is important is that a probe can take one of the $q^M$ possible values. We note, however, that there is an equivalence between operations with elements in a finite field and operations with vectors of appropriate length. {\em E.g.,} in \cite{vnc}, the multicast scenario was considered, and scalar network coding over a finite field of size $2^M$ was used equivalently to vector network coding over the space of binary vectors of length $M$. Thinking in terms of one of the aforementioned special cases is appropriate in special topologies, as we will see, {\em e.g.,} in tree and reverse tree topologies, where scalars and binary vectors are used, respectively.}.
In one {\em experiment}, we send probes from all sources and we collect probes at the receivers: each source $S_i\in S$ injects one probe packet $x_i$ in the network, and each receiver $R_j\in R$ receives one probe $X_j$. The observations at all receivers $R$ is a vector $X_{(R)}=(X_1, X_2, ... X_N)$ in the space $\Omega \subseteq (F_{q^M})^N$.
For a given set of link success probabilities $\alpha=({\alpha_e})_{e\in E}$, the probability distribution of all observations $X_{(R)}$ will be denoted by $P_{\alpha}$. The probability mass function for a single observation $x\in \Omega$ is $p(x;\alpha)=P_{\alpha}(X_{(R)}=x)$.

To estimate the success rates of links, we perform a sequence of $n$ independent experiments. Let $n(x)$ denote the number of probes for which the observation $x\in \Omega$ is obtained, where $\sum_{x\in \Omega} n(x)=n$. The probability of $n$ independent observations $x^1, \cdots , x^n$ (each $x^t=(x^t_k)_{k\in R}$) is:
\begin{equation}
\label{eq-probability}
p(x^1, \cdots , x^n;\alpha)=\prod_{t=1}^{n} p(x^t;\alpha)=\prod_{x\in \Omega} p(x;\alpha)^{n(x)}
\end{equation}
It is convenient to work with the log-likelihood function, which calculates the logarithm of this probability:
\begin{equation}
\label{eq-likelihoodSum1}
\mathcal{L}(\alpha) = \log p(x^1, \cdots , x^n;\alpha) = \sum_{x\in \Omega} n(x) \log p(x;\alpha)
\end{equation}
We make two assumptions, which are both realistic in practice and standard in the tomography literature:
\begin{itemize}
\item We perform sufficient measurements so that each observation $x\in \Omega$ at the receivers occurs at least once, {\em i.e.,} $n(x)> 0$. This ensures that no term in the likelihood function becomes a constant (due to a zero exponent). Note that the final equality in Eq.(\ref{eq-probability}) and Eq.(\ref{eq-likelihoodSum1}) is valid due to this assumption.
\item The probability of loss $\overline \alpha_i$ on a link $i$ is not 1, {\em i.e.,} $\overline \alpha_i\in [0,1)$. This ensures that the log-likelihood function is well-defined and differentiable.
\end{itemize}

The goal is to use the observations at the receivers, the knowledge of the network topology, and the knowledge of the routing/coding scheme to estimate the success rates of internal links of interest. We may be interested in estimating the success rate on a subset of links, or on all the links.

\begin{definition}\label{def_scheme}
A {\em monitoring scheme} for a given graph $G$ refers to a set of $M$ source nodes, a set of $N$ receivers, a set of paths that connect the sources to the receivers,
the probe packets that sources send, and the operations that intermediate nodes perform on these packets.
\end{definition}

We use the notion of link identifiability as it was defined in \cite{minc} (Theorem 3, Condition (i)):
\begin{definition}\label{def_ident}
A link $e$ is called {\em identifiable} under a given monitoring scheme iff: $\alpha, \alpha' \in (0,1]^{|E|}$ and $P_\alpha=P_{\alpha'}$ implies $\alpha_e=\alpha'_e$.
\end{definition}

To illustrate the concept, consider two consecutive links $e_1=AB$ and $e_2=BC$ in a row, where node $B$ has degree 2, and is neither a source nor a receiver. These links are not identifiable, as maximizing the log-likelihood function would only allow us to identify the value of the product $\alpha_{e_1}\alpha_{e_2}$, and thus, would lead to an infinite number of solutions. This is because, it is not possible to distinguish whether a packet gets dropped on link $e_1$ or $e_2$. Note, however, that the case of having two links in a row is ruled out by our assumption of working on a graph with logical links (all vertices in the graph have degree three or greater). Another case that $e_1, e_2$ are not identifiable, which is possible even on a graph with logical links, is when both links belong to every path used from any source to any receiver.

Identifiability is not only a property of the network topology, but also depends on the
monitoring scheme. One of the main goals of the monitoring scheme design is to maximize the number of identifiable links. However, our definition of identifiability does not depend on the estimator employed. Essentially, identifiability depends on the probability distribution $P_\alpha$ and on whether this uniquely determines $\alpha$.

\subsubsection{Estimation}

The maximum likelihood estimator (MLE) $\breve{\alpha}$ identifies the parameters $(\alpha_e)_{e\in E}$ that maximize the probability of the observations $\mathcal{L}(\alpha)$:
\begin{equation}\label{eq_ML}
\breve{\alpha}= \mbox{argmax}_{\alpha \in (0,1]^{|E|}} \mathcal{L}(\alpha)
\end{equation}
Candidates for the MLE are the solutions $\hat{\alpha}$ of the {\em likelihood equation}:
\begin{equation}
\frac{\partial \mathcal{L}}{\partial \alpha_e}(\alpha)=0, \quad e\in E
\end{equation}
We can compute the MLE for tree networks as we see in Section~\ref{sec-estimation1}. However, it becomes computationally hard for large networks; this creates the need for faster algorithms that provide good approximate performance in practice.

To measure the per link estimation accuracy, we use the
mean-squared error (MSE): $\mbox{MSE}=E(|\alpha_e-\hat{\alpha}_e|^2)$.
In order to measure the estimation performance on all links $e \in E$, we need a metric that summarizes all links. We use an entropy measure $ENT$ that captures the residual
uncertainty. Since we expect the scaled estimation errors to be
asymptotically Gaussian (similar to the case in \cite{minc}), we
define the quality of the estimation across all links as
\begin{equation}
\label{eq:Ent} 
\mathrm{ENT} = \sum_{e\in E}\log\left (
E[\hat{\alpha}_e - \alpha_e]^2\right ),
\end{equation}
which is a shifted version of the entropy of independent Gaussian
random variables with the given variances \cite{CoverThomas91}. If
the entire error covariance matrix $\mathcal{R}$ is available, then we can
compute the metric as $ENT = \log \mathrm{det}\mathcal{R}$, which captures
also the correlations among the errors on different links. The
metric $ENT$ defined above captures only the diagonal elements
of $\mathcal{R}$, {\em i.e.}, the $MSE$ for each link independently of the others.

In some cases, we approximate the error covariance matrix $\mathcal{R}$ using the Fisher information matrix $\mathcal{I}$. Under mild regularity conditions (see for example Chapter 7 in \cite{Lehmann99}), the scaled asymptotic covariance matrix of the optimal estimator is lower-bounded by the Cramer-Rao bound $\mathcal{I}^{-1}$. The Fisher information matrix $\mathcal{I}$ is a square matrix with element $\mathcal{I}_{p,q}$ defined as
\begin{equation}
\label{eq:FIM}
\begin{aligned}
&\mathcal{I}_{p,q}(\alpha) =
-E\left [
\frac{\partial}{\partial \alpha_p} \log p(X_{(R)};\alpha)
\frac{\partial}{\partial \alpha_q} \log p(X_{(R)};\alpha) \right ]
\end{aligned}
\end{equation}
where $\alpha_p,\alpha_q$ are the success probabilities of two links. In particular, under the regularity conditions, the MLE is asymptotically efficient; {\em i.e.,} it asymptotically, in sample size achieves this lower bound.

\subsection{\label{sec:decomposition}Subproblems}
Given a certain network topology, a monitoring scheme for loss tomography can be designed by solving  the following subproblems.

1) {\bf Identifiability:} For each link $e \in E$, derive conditions that the scheme should satisfy so that the edge is identifiable. Whether the goal is to maximize the number of identifiable edges, or to measure the link success rate on a particular set of edges, the identifiability conditions will guide the routing and  code design choices.

2) {\bf Routing:} Select the sources and receivers of probe packets, the paths through which probes are routed, and the nodes where they will be linearly combined.\footnote{Depending on the practical constraints, such flexibility may or may not be available. If one cannot choose the source/receiver nodes and/or routing, as it is the case in most of the tomography literature, then this step can be skipped. If one can choose some of these parameters, then this can lead to further optimization of identifiability and estimation accuracy.}
The design goals include minimizing the utilized bandwidth, and improving the estimation accuracy, while respecting the required identifiability conditions.

3) {\bf Probe and Code Design:}
Select the contents of the probes sent by the sources and the operations performed at intermediate nodes. The goal is to use the simplest  operations and the smallest finite field, while ensuring that the identifiability conditions are met.

4) {\bf Estimation Algorithm}: This is the algorithm that processes the collected probes at the receivers
and estimates the link loss rates. The objective is low complexity with good estimation performance. There is clearly a tradeoff between the estimation error and the measurement bandwidth.

We note that these steps are {\em not} independent from each other. In fact, the design of routing, probe and code design needs to be done with identifiability and estimation in mind.

\subsection{Main Results}
\label{sec-mainResults}

In this paper, we propose a monitoring scheme for loss tomography in networks that have multicast and network coding capabilities. In Sections \ref{sec:trees} and \ref{sec:general}, we present our design for the cases of  trees and  general topologies, respectively. We evaluate all our schemes through extensive simulation results. Below we preview the main results, in each subproblem.

1) {\bf Identifiability:}
(1) We provide simple necessary and sufficient conditions for {\em identifying} the loss rate of a single link. In (logical) tree topologies, all links are  identifiable, using a very simple monitoring scheme\footnote{This scheme is described in Section~\ref{sec-model}: it selects some leaf nodes as sources, and the remaining leaf nodes as receivers; the sources send simple binary vectors, and the intermediate nodes do simple \texttt{XOR} operations or multicast.}. In general topologies, where identifiability depends on the routing and code design as well, these conditions still apply. (2) We also prove a structural property, which we call {\em reversibility}: if a link is identifiable under a given monitoring scheme, it remains identifiable if we reverse the directionality of all paths and exchange the role of sources and receivers (which we call the {\em dual configuration}). 

2) {\bf Routing:} (1) For a given set of sources and receivers over an arbitrary topology,
the problem of selecting a routing that meets the identifiability conditions while minimizing
the employed bandwidth is NP-hard. We prove that, when network coding is used, this problem can be solved in polynomial time. (2) Moreover, we demonstrate, via simulation, that the choice of sources and receivers affects the estimation accuracy. (3) Finally, we present heuristic orientation algorithms for general  graphs, designed to  achieve identifiability, small number of receivers, and high estimation accuracy.

3) {\bf Probe and Code Design:} (1) In trees, we show that binary vectors sent by the sources and deterministic code design with \texttt{XOR} operations at the intermediate nodes are sufficient. (2) In general graphs, we need to use operations over higher finite fields. We provide bounds on the required alphabet size, and we propose and evaluate deterministic code design.  

4) {\bf Loss Estimation:} 
(1) In a  tree topology (under mild conditions on the selection of sources and receivers),  we develop a low-complexity method for computing the MLE of the loss rates {\em for all links simultaneously}. Our algorithm builds on and extends MINC (the well-known ML estimator \cite{minc} for a multicast tree) to multiple-source multiple-destination tree topologies (with multicast at branching points and network coding at joining points). We describe the algorithm, prove its correctness, and analyze its rate of convergence. (2) A key property that we formulate, prove, and extensively use in this work, is {\em reversibility}, \ie the fact that the MLE's for a configuration and its dual (defined as the same topology, but with the role of sources and receivers reversed) have the same functional form. For example, the MLE for a {\em reverse multicast tree} (with several sources and one receiver) has the same functional form as MINC for a multicast tree (with the role of the source and the receivers reversed); we refer to the MLE for the reverse multicast tree as RMINC. (3) For topologies other than trees, no efficient MLE algorithm is known for  estimating the loss rates of all links simultaneously. Therefore, we propose a number of heuristic algorithms, including belief propagation and subtree decomposition algorithms, and we evaluate their performance through simulation. 
(4) We provide a simple algorithm for computing the MLE of {\em a single link} at a time in {\em any} topology. This is particularly useful in practice because: (i) a few bottleneck links are typically congested, thus of interest; and (ii) the method is applicable to {\em any} topology, even if it is not of the type (1) above. 

The use of network coding at intermediate nodes, in addition to unicast and multicast, offers several benefits for loss tomography: it increases the number of identifiable links; it improves the  tradeoff between number of probes and estimation accuracy; and it reduces  the complexity of selecting probe paths for minimum cost monitoring of a general graph  from NP-hard to linear. The approach gracefully generalizes from trees to general topologies ({\em e.g.,} having the same identifiability conditions, using the same estimation algorithm, and avoiding the use of overlapping trees or paths), where its advantages are amplified.

\section{\label{sec:singlelink}Motivating Example}

In this section, we present a motivating example to demonstrate the benefits of network coding in identifying the link loss rates; we derive the conditions of identifiability for a single link; and we discuss the identifiability of all links in the network.

\begin{figure*}[t!]\centering
\begin{center}
\psset{unit=0.045in}
\begin{pspicture}(5,40)(105,100)
\psset{linewidth=0.5mm}
\begin{small}

\rput(15,91){ \large{\em Tree 1}}
\rput(5,85){\circlenode{A}{A}}
\rput(25,85){\circlenode{B}{B}}
\rput(15,77){\circlenode{C}{C}}
\rput(15,63){\circlenode{D}{D}}
\rput(5,55){\circlenode{E}{E}} 
\rput(25,55){\circlenode{F}{F}}
\ncline[linewidth=0.5mm,linecolor=blue]{->}{A}{C}\Bput{$x_1$}
\ncline[]{->}{B}{C}
\ncline[linewidth=0.5mm,linecolor=blue]{->}{C}{D}\Aput{$x_1$}
\ncline[linewidth=0.5mm,linecolor=blue]{->}{D}{E}\Bput{$x_1$}
\ncline[linewidth=0.5mm,linecolor=blue]{->}{D}{F}\Aput{$x_1$}

\rput(55,91){\large{\em Tree 2}} 
\rput(45,85){\circlenode{A2}{A}} 
\rput(65,85){\circlenode{B2}{B}} \rput(55,77){\circlenode{C2}{C}}
\rput(55,63){\circlenode{D2}{D}} \rput(45,55){\circlenode{E2}{E}}
\rput(65,55){\circlenode{F2}{F}}
\ncline[]{->}{A2}{C2}
\ncline[linewidth=0.5mm,linecolor=green]{->}{B2}{C2}\Aput{$x_2$}
\ncline[linewidth=0.5mm,linecolor=green]{->}{C2}{D2}\Aput{$x_2$}
\ncline[linewidth=0.5mm,linecolor=green]{->}{D2}{E2}\Aput{$x_2$}
\ncline[linewidth=0.5mm,linecolor=green]{->}{D2}{F2}\Bput{$x_2$}

\rput(95,91){\large{\em Network Coding}}
\rput(85,85){\circlenode{A3}{A}} 
\rput(105,85){\circlenode{B3}{B}}
\rput(95,77){\circlenode{C3}{C}}
\rput(95,63){\circlenode{D3}{D}}
\rput(85,55){\circlenode{E3}{E}}
\rput(105,55){\circlenode{F3}{F}}
\ncline[]{->}{A3}{C3}\Bput{$x_1$}
\ncline[]{->}{B3}{C3}\Aput{$x_2$}
\ncline[linewidth=0.5mm,linecolor=red]{->}{C3}{D3}\Aput{$x_1+x_2$}
\ncline[]{->}{D3}{E3}\Bput{$x_1+x_2$}
\ncline[]{->}{D3}{F3}\Aput{$x_1+x_2$}

\end{small}
\end{pspicture}
\end{center}
\vspace{-5em}
\caption{\label{figure_1} Link loss monitoring for the basic 5-link topology. Nodes $A$ and $B$ are sources, and nodes $E$ and $F$ are receivers. Using multicast-based tomography, the topology can be covered using two multicast trees 1 and 2. Alternatively, the topology can be covered using coded packets, if node  $C$ can add (\texttt{XOR}) incoming packets.}
\end{figure*}
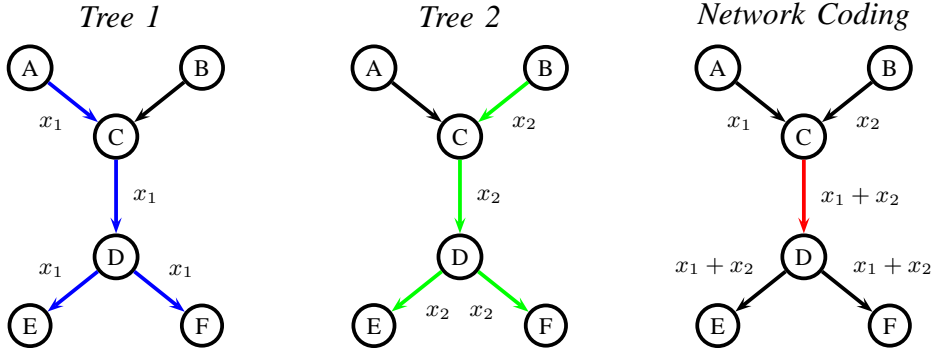

\begin{example} \label{ex:ex1}

Consider the 5-link topology depicted in Fig.~\ref{figure_1}. Nodes $A$ and $B$ send probes and nodes $E$ and $F$ receive them. Every link can drop a packet according to an i.i.d. Bernoulli distribution, with probability $\overline \alpha_e$, independently of other links. We are interested in estimating the success probabilities of all links, namely $\alpha_{AC}$, $\alpha_{BC}$, $\alpha_{CD}$, $\alpha_{DE}$, and $\alpha_{DF}$.

\begin{table*}[t!]
\scriptsize
\centering
\begin{tabular}{|c||c|c|c|c|c||c|c|c|c||c|c|c||c|c|}
\hline
  \# &  \multicolumn{5}{|c||}{Is link working (1) or not (0)?} & \multicolumn{2}{c|}{{\bf Original (5-link) Tree}} & {\footnotesize Prob.} & {\footnotesize \#times} & \multicolumn{3}{c||}{{\bf Reduced Multicast Tree}}  & \multicolumn{2}{c|}{{\bf Reduced Reverse}} \\
 ~ &  \multicolumn {5}{|c||}{~} & \multicolumn{2}{c|}{actual probes received at:} & ~ & ~ & \multicolumn{3}{c||}{observations} & \multicolumn{2}{c|}{{\bf Multicast Tree}} \\
~ & AC & BC & CD & DE & DF  & E & F &  $P_{\alpha}$ & ~ & E & F & $P_{\alpha}^{m}$ & EF & $P_{\alpha}^{r}$\\
\hline \hline
1 & \multicolumn{5}{c||}{Multiple possible events} & - & -     & $p_{0}$ & $n_0$  & 0 & 0 & $p_0$ &$[0,0]$ & $p_0$ \\
\hline
2 &  1 & 0 & 1 & 1 & 0  & $x_1$  & -               & $p_{1}$ & $n_1$ & \multirow{3}{*}{1} & \multirow{3}{*}{0}& \multirow{3}{*}{$p_1+ p_2+ p_3$} & $[1,0]$ & $p_1+p_4+p_7$ \\
3 & 0 & 1 & 1 & 1 & 0  & $x_2$ & -              & $p_{2}$ & $n_2$ & & & & $[0,1]$ & $p_2+p_5+p_8$\\
4 &  1 & 1 & 1 & 1 & 0  & $x_1 \oplus x_2$ & -            & $p_{3}$ & $n_3$ & & & & $[1,1]$ & $p_3+p_6+p_9$\\
\hline
5 &  1 & 0 & 1 & 0 & 1 & - & $x_1$ & $p_{4}$ & $n_4$ & \multirow{3}{*}{0} & \multirow{3}{*}{1} & \multirow{3}{*}{$p_4+ p_5+ p_6$} & $[1,0]$ & $p_1+p_4+p_7$ \\
6 &  0 & 1 & 1 & 0 & 1 &  -    & $x_2$  &$p_{5}$ & $n_5$ & & & & $[0,1]$  & $p_2+p_5+p_8$  \\
7 & 1 & 1 & 1 & 0 & 1  & - & $x_1 \oplus x_2$     & $p_{6}$ & $n_6$ & & & & $[1,1]$  & $p_3+p_6+p_9$\\
\hline
8 & 1 & 0 & 1 & 1 & 1  & $x_1$ & $x_1$            & $p_{7}$ & $n_7$ & \multirow{3}{*}{1} & \multirow{3}{*}{1} & \multirow{3}{*}{$p_7+p_8+ p_9$} &$[1,0]$ & $p_1+p_4+p_7$ \\
9 & 0 & 1 & 1 & 1 & 1  & $x_2$ & $x_2$            & $p_{8}$ & $n_8$ & & & & $[0,1]$ & $p_2+p_5+p_8$ \\
10 & 1 & 1 & 1 & 1 & 1  & $x_1 \oplus x_2$ & $x_1 \oplus x_2$ & $p_{9}$ & $n_{9}$ & & & & $[1,1]$ & $p_3+p_6+p_9$\\
\hline
\end{tabular}
\vspace{0.5em}
\caption{The 10 leftmost columns of this table refer to the 5-link topology in Fig.~\ref{figure_1}(c). They show the possible pairs of probes collected ({\em i.e.,} the observations $x \in \Omega$) at the receivers $E$, $F$, their probabilities $P_\alpha$, and the number of times $n_i$ each observation occurred. These observations depend on the combination of loss (0) and success (1) on the five links, which happen w.p. $\alpha$. The remaining rightmost columns show how the same probes can be interpreted as observations at the receiver(s) of the reduced topologies, namely the multicast and the reverse multicast trees (as we describe in Section~\ref{sec-reductions}), and their corresponding probabilities. \label{tabl_1} }
\end{table*}

The traditional multicast-based tomography approach would use two multicast trees rooted at nodes $A$ and $B$ and ending at $E$ and $F$. This approach is depicted in Fig. \ref{figure_1}-(a) and (b).
At each experiment, source $A$ sends packet $x_1$ and source  $B$ sends packet $x_2$.
The receivers $E$ and $F$ infer the link loss rates by keeping track of how many times they
receive packets $x_1$ and $x_2$. Note that, due to the overlap of the two trees,
for each experiment, links $CD$, ${DE}$, and ${DF}$ are used twice, leading to inefficient bandwidth usage. Moreover, from this set of experiments, we cannot calculate $\alpha_{CD}$,
and thus edge  ${CD}$ is not identifiable. Indeed, by observing the outcomes of experiments on each multicast tree, we cannot distinguish whether packet  $x_1$ is dropped on edge ${AC}$ or  ${CD}$; similarly, we cannot  distinguish whether packet  $x_2$ is dropped on edge ${BC}$ or  ${CD}$.
(Note that if we restricted ourselves to unicast only, four unicast probes from $A,B$ to $E,F$ would be needed to cover all five links. Not only would the problems of identifiability and overlap of probe paths still be present, but they would be further amplified.)

If network coding capabilities are available, they can help alleviate these problems. Assume that the intermediate node $C$ can combine incoming packets before forwarding them to outgoing links. Node $A$ sends to $C$ a probe packet with payload that contains the binary string $x_1=[1\;0]$. Similarly, node $B$ sends probe packet $x_2=[0\;1]$ to node $C$. If node $C$ receives only $x_1$ or only $x_2$, then it just forwards the received packet to node $D$; if $C$ receives both
packets $x_1$ and $x_2$, then it creates a new packet, with payload their linear combination $x_3=[1\;1]$, and forwards it to node $D$; more generally, $x_3=x_1 \oplus x_2$, where $\oplus$ is the bit-wise \texttt{XOR} operation. Node $D$ multicasts the incoming packet $x_3$ to both outgoing links $DE$ and $DF$. The flow of packets in this experiment is shown in Fig. \ref{figure_1}(c).
In every experiment, probe packets $(x_1,x_2)$ are sent from $A$,
$B$, and may or may not reach $E$, $F$, depending on the state of the links.
Observe that with the network coding approach, link $CD$ becomes identifiable. Moreover,
we have avoided the overlap of probes on link CD during each experiment.

Table \ref{tabl_1} lists the 10 possible observed outcomes, the state of the links that leads to a particular outcome, the probability $p_{i}$, $i=0,...,9$ of observing this outcome, and the number of times $n_i$, $i=0,...,9$ we observe this outcome in a sequence of $n$ independent experiments. The probability of observing an outcome $p_i$ can be computed from the success probabilities $\alpha=(\alpha_{AC}, \alpha_{BC}, \alpha_{CD}, \alpha_{DE}, \alpha_{DF})$ of the five links. {\em E.g.,} for outcomes 1-4:
\begin{equation}
\begin{split}
p_{0}& =1-p_{1} \cdots -p_{9}= 1-(1-\overline \alpha_{AC} \overline \alpha_{BC})\alpha_{CD}(1-\overline \alpha_{DE} \overline \alpha_{DF}) \\
p_{1}& = \alpha_{AC} \overline \alpha_{BC} \alpha_{CD} \alpha_{DE} \overline \alpha_{DF} \\
p_{2}& = \overline \alpha_{AC} \alpha_{BC} \alpha_{CD} \alpha_{DE} \overline \alpha_{DF} \\
p_{3}& =  \alpha_{AC} \alpha_{BC} \alpha_{CD} \alpha_{DE} \overline \alpha_{DF} \\
\cdots
\end{split}
\end{equation}
and we can write similar expressions for the probabilities of the remaining observations. Thus, we can explicitly write down the probability distribution of the observations $P_{\alpha}$.

In a sequence of $n=\sum_{i=0}^{i=9}{n_i}$ independent experiments, the frequency of each event $i$ is $\hat{p}_i = \frac{n_i}{n}$. After sending $n$ independent probes, the log-likelihood function of the observations given the set of parameters $(\alpha_e)$ is:
$\mathcal{L}(\alpha_{AC},\alpha_{BC},\alpha_{CD}, \alpha_{DE}, \alpha_{DF})=\sum_{i=0}^{i=9}{n_i \log p_{i}(\alpha)}$. The MLE would compute the $\alpha$'s that maximize $\mathcal{L}(\alpha)$.
\hfill{$\square$}
\end{example}

In general, we may be interested in estimating one of the $\alpha$ variables, some of them, or all five of them. In the next Section, we discuss a single link, namely link $CD$. Note that the remaining four links can depict the equivalent paths connecting $CD$ to the sources and receivers. In Section~\ref{sec:trees-identify}, we discuss the identifiability of all links.

\subsection{Identifiability of One Link} \label{sec-identify-one-link}

Let us focus on a single link $CD$ with success probability $\alpha_{CD}$. Consider Fig.~\ref{fig_basic} , which generalizes the motivating example of the previous Section. Note that links other than $CD$ can be viewed as summarizing paths: {\em e.g.,} AC could correspond to a path from A to C, possibly consisting of the concatenation of several links.

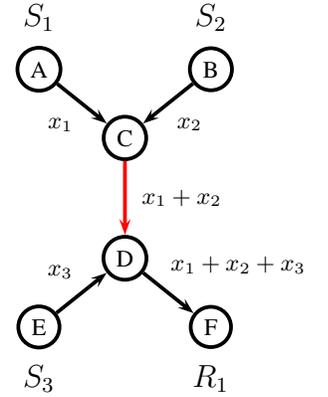
\begin{figure}[t!]
\centering
\vspace{1.5em}
\begin{center}
\psset{unit=0.045in}
\begin{pspicture}(-2,-8)(82,145)
\psset{linewidth=0.5mm}
\begin{small}
\rput(15,147){ \normalsize{\em 3-links, Multicast Tree}}
\rput(15,142){\large{$S_1$}}
\rput(15,137){\circlenode{C12}{C}}
\rput(15,124){\circlenode{D12}{D}}
\rput(5,115){\circlenode{E12}{E}}
\rput(5,110){\large{$R_1$}}
\rput(25,115){\circlenode{F12}{F}}
\rput(25,110){\large{$R_2$}}
\ncline[]{->}{C12}{D12}\Bput{$x_1$}
\ncline[]{->}{D12}{E12}\Bput{$x_1$}
\ncline[]{->}{D12}{F12}\Aput{$x_1$}

\rput(55,147){\normalsize{\em 3-links, Reverse Multicast Tree}}
\rput(45,142){\large{$S_1$}}
\rput(45,137){\circlenode{A11}{A}}
\rput(65,142){\large{$S_2$}}
\rput(65,137){\circlenode{B11}{B}}
\rput(55,128){\circlenode{C11}{C}}
\rput(55,115){\circlenode{D11}{D}}
\rput(55,110){\large{$R_1$}}
\ncline[]{->}{A11}{C11}\Bput{$x_1$}
\ncline[]{->}{B11}{C11}\Aput{$x_2$}
\ncline[linewidth=0.5mm,linecolor=red]{->}{C11}{D11}\Aput{$x_1+x_2$}

\rput(15,102){ \normalsize{\em 5-links, Case 1}}
\rput(5,98){\large{$S_1$}}
\rput(5,92){\circlenode{A}{A}}
\rput(25,98){\large{$S_2$}}
\rput(25,92){\circlenode{B}{B}}
\rput(15,84){\circlenode{C}{C}}
\rput(15,70){\circlenode{D}{D}}
\rput(5,62){\circlenode{E}{E}} 
\rput(5,56){\large{$R_1$}}
\rput(25,62){\circlenode{F}{F}}
\rput(25,56){\large{$R_2$}}
\ncline[]{->}{A}{C}\Bput{$x_1$}
\ncline[]{->}{B}{C}\Aput{$x_2$}
\ncline[linewidth=0.5mm,linecolor=red]{->}{C}{D}\Aput{$x_1+x_2$}
\ncline[]{->}{D}{E}\Bput{$x_1+x_2$}
\ncline[]{->}{D}{F}\Aput{$x_1+x_2$}
\rput(55,102){\normalsize{\em 5-links, Case 2}} 
\rput(45,98){\large{$S_1$}}
\rput(45,92){\circlenode{A2}{A}} 
\rput(65,98){\large{$R_3$}}
\rput(65,92){\circlenode{B2}{B}} 
\rput(55,84){\circlenode{C2}{C}}
\rput(55,70){\circlenode{D2}{D}} 
\rput(45,62){\circlenode{E2}{E}}
\rput(45,56){\large{$R_1$}} 
\rput(65,62){\circlenode{F2}{F}}
\rput(65,56){\large{$R_2$}}
\ncline[]{->}{A2}{C2}\Bput{$x_1$}
\ncline[]{<-}{B2}{C2}\Aput{$x_1$}
\ncline[linewidth=0.5mm,linecolor=red]{->}{C2}{D2}\Aput{$x_1$}
\ncline[]{->}{D2}{E2}\Aput{$x_1$}
\ncline[]{->}{D2}{F2}\Aput{$x_1$}

\rput(15,47){\normalsize{\em 5-links, Case 3}} 
\rput(5,43){\large{$S_1$}}
\rput(5,37){\circlenode{A3}{A}} 
\rput(25,43){\large{$R_1$}}
\rput(25,37){\circlenode{B3}{B}} 
\rput(15,29){\circlenode{C3}{C}}
\rput(15,15){\circlenode{D3}{D}} 
\rput(5,7){\circlenode{E3}{E}}
\rput(5,1){\large{$S_2$}} 
\rput(25,7){\circlenode{F3}{F}}
\rput(25,1){\large{$R_2$}}
\ncline[]{->}{A3}{C3}\Bput{$x_1$}
\ncline[]{<-}{B3}{C3}\Aput{$x_1$}
\ncline[linewidth=0.5mm,linecolor=red]{->}{C3}{D3}\Aput{$x_1$}
\ncline[]{<-}{D3}{E3}\Bput{$x_2$}
\ncline[]{->}{D3}{F3}\Aput{$x_1+x_2$}
\rput(55,47){\normalsize{\em 5-links, Case 4}} 
\rput(45,43){\large{$S_1$}}
\rput(45,37){\circlenode{A4}{A}} 
\rput(65,43){\large{$S_2$}}
\rput(65,37){\circlenode{B4}{B}} 
\rput(55,29){\circlenode{C4}{C}}
\rput(55,15){\circlenode{D4}{D}} 
\rput(45,7){\circlenode{E4}{E}}
\rput(45,1){\large{$S_3$}}
\rput(65,7){\circlenode{F4}{F}}
\rput(65,1){\large{$R_1$}}
\ncline[]{->}{A4}{C4}\Bput{$x_1$}
\ncline[]{->}{B4}{C4}\Aput{$x_2$}
\ncline[linewidth=0.5mm,linecolor=red]{->}{C4}{D4} \Aput{$x_1+x_2$}
\ncline[]{<-}{D4}{E4}\Bput{$x_3$}
\ncline[]{->}{D4}{F4}\Aput{{$x_1+x_2+x_3$}}
\end{small}
\end{pspicture}
\end{center}
\vspace{-3.0em}
\caption{{\bf Configurations ({\em i.e.,} combinations of Conditions 1 and 2) that allow us to identify the success rate of a single link (CD).} Recall that links, other than CD, can correspond to paths with the same loss probability. The top of the figure shows a 3-link topology where C is a source (of a multicast tree) or D is a receiver (of a reverse multicast tree). The trivial case that C is a source and D is a receiver corresponds to a single-link topology and is omitted here. The bottom of the figure shows a 5-link topology and four configurations (choices of sources and receivers), where neither $C$ nor $D$ are edge nodes and packets are sent and received at the edge nodes $A$, $B$, $E$ and $F$. Case $1$ is our familiar motivating example; Case $2$ is similar to a single multicast tree rooted at $A$; Case $3$ uses sources $A$ and $E$ and linear combinations whenever the two flows meet; Case $4$ does the same thing for sources $A$, $B$ and $E$, and is equivalent to an inverse multicast tree (with sink at $F$).}
\vspace{-1.0em}
\label{fig_basic}
\end{figure}

For a given choice of sources and receivers and a coding scheme described in Section~\ref{sec-model} (which is extremely simple: just pick any leaf or leaves as sources and the remaining leaves as receivers; sources send binary vectors; intermediate nodes simply code using bit-wise \texttt{XOR} or multicast), we want to translate the conditions for identifiability of link $CD$ in Definition~\ref{def_ident} to graph properties of the network. Our intuition is that a link $CD$ is identifiable if $C$ is a source, a coding point or a branching point, and $D$ is a receiver, a coding point or a branching point. These are the structures depicted in Fig.~\ref{fig_basic}, where we want to identify the link success rate associated with edge $CD$, and interpret the remaining edges as corresponding to paths. The top two cases of Fig.~\ref{fig_basic} depict the simple cases where node $C$ is a source, or node $D$ is a receiver; the four bottom cases depict the cases where $C$ and $D$ are coding or branching points.

To formalize this intuition, consider the following two conditions:
\begin{itemize}
\item {\bf Condition 1:} At least one of the following holds:\\
{\bf (a)} $C \in S$.\\
{\bf (b)} There exist two edge-disjoint paths $(X_1, C)$ and $(X_2,C)$ that do not employ edge $CD$, with distinct $X_1,X_2 \in S$.\\
{\bf (c)} There exist two paths $(X_1, C)$ and $(C,X_2)$ that do not employ edge $CD$, with $X_1 \in S$, $X_2 \in R$.
\item {\bf Condition 2:} At least one of the following holds:\\
{\bf (a)} $D \in R$. \\
{\bf (b)} There exist two edge-disjoint paths $(D,X_1)$ and $(D,X_2)$ that do not employ edge $CD$, with distinct $X_1,X_2 \in R$.\\
{\bf (c)} There exist two paths $(X_1, D)$ and $(D,X_2)$ that do not employ edge $CD$, with $X_1 \in S$, $X_2 \in R$.
\end{itemize}

\begin{theorem} \label{theorem_1}
For a given choice of sources and receivers and for the simple coding scheme described above, link $CD$ is identifiable if and only if both Conditions 1 and 2 hold.
\end{theorem}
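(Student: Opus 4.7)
I will prove the two directions separately.

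\emph{Necessity.} I plan to prove the contrapositive: if Condition 1 fails, then $CD$ is not identifiable (the case when Condition 2 fails is symmetric and follows from the reversibility property previewed in Section~\ref{sec:statement}). My strategy is to exhibit two distinct success-rate vectors $\alpha \neq \alpha'$ that induce the same observation distribution $P_\alpha = P_{\alpha'}$. Suppose Condition 1 fails: then $C \notin S$, there do not exist two edge-disjoint paths from distinct sources to $C$ avoiding $CD$, and there is no pair of paths $(X_1,C)$, $(C,X_2)$ avoiding $CD$ with $X_1 \in S$, $X_2 \in R$. Applying Menger's theorem to the subgraph spanned by all source-to-receiver paths that traverse $CD$, one deduces that every such path crosses a single common edge $e^\star$ immediately upstream of $C$; moreover, no alternative receiver is reachable from $C$ bypassing $CD$. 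Consequently, on every probe reaching a receiver via $CD$, the edges $e^\star$ and $CD$ are traversed in sequence with no independent observation of either, so $P_\alpha$ depends on the pair $(\alpha_{e^\star},\alpha_{CD})$ only through the product $\alpha_{e^\star}\cdot\alpha_{CD}$. Perturbing $\alpha_{e^\star} \mapsto \alpha_{e^\star}/t$ and $\alpha_{CD} \mapsto t\cdot \alpha_{CD}$ for $t$ in a neighborhood of $1$ then leaves $P_\alpha$ unchanged, violating Definition~\ref{def_ident}.

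\emph{Sufficiency.} Suppose both Conditions 1 and 2 hold. I will identify a sub-configuration around $CD$ matching one of the six cases in Fig.~\ref{fig_basic} and reduce identifiability of $CD$ to the established identifiability for a multicast tree (MINC, \cite{minc}) or its dual, the reverse multicast tree (RMINC). Specifically, Condition~1 supplies one of: (a) $C$ as a source, (b) two edge-disjoint incoming probe streams at $C$ that can be combined at $C$ by XOR, or (c) a source-to-receiver path threading $C$ but avoiding $CD$ that serves as a ``calibration'' branch for the upstream loss rate. Condition~2 supplies the mirror-image downstream structure at $D$. Enumerating the $3 \times 3 = 9$ combinations and collapsing parallel or coded branches, the local neighborhood of $CD$ reduces to one of the six topologies in Fig.~\ref{fig_basic}; in each of these, $CD$ appears as an internal edge of a (coded) multicast or reverse multicast tree whose per-edge success rates are uniquely determined by the receiver distribution, by the MINC/RMINC identifiability results. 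Thus $\alpha_{CD}$ is uniquely determined by $P_\alpha$.

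\emph{Main obstacle.} The delicate part is the necessity direction: I must argue that the failure of Condition~1 \emph{globally} forces probe flow through $CD$ to share one preceding bottleneck, so that only $\alpha_{e^\star}\cdot\alpha_{CD}$ is observable. Menger's theorem gives the edge cut, but one has to verify that the absence of the alternative ``source-in, receiver-out'' path at $C$ (case 1c) is precisely what prevents any auxiliary observation from decoupling $\alpha_{e^\star}$ from $\alpha_{CD}$; without this check, a downstream receiver fed from $C$ via another branch would break the bottleneck. The sufficiency direction is more mechanical once the case reductions to MINC/RMINC are written out carefully for cases 1b and 1c (and their Condition~2 counterparts), since these require a coding operation at $C$ (resp.\ $D$) for the reduced tree to have full-rank receiver observations.
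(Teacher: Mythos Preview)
Your proposal is correct and takes essentially the same approach as the paper: necessity via a confounding upstream edge whose success rate enters $P_\alpha$ only through its product with $\alpha_{CD}$, and sufficiency via reduction to the MINC/RMINC identifiability results through the case analysis of Fig.~\ref{fig_basic}. The one refinement the paper makes explicit for Case~1 (Conditions 1(b)+2(b)) is that the 5-link configuration is reduced \emph{simultaneously} to a 3-link multicast tree (identifying $\alpha_{DE},\alpha_{DF}$ and the aggregate $\alpha_{ABCD}$) and to a 3-link reverse-multicast tree (identifying $\alpha_{AC},\alpha_{BC}$ and the aggregate $\alpha_{CDEF}$), and the two are combined by contradiction to isolate $\alpha_{CD}$; your sketch should spell out this double reduction rather than describing the 5-link case as itself a single (reverse-)multicast tree.
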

The proof is provided in Appendix A.1.

\subsection{\label{sec:trees-identify}Identifiability of All Links}

In fact, we can identify {\em all} links at the same time. It is sufficient to ensure that each link is identifiable, according to the conditions of Theorem \ref{theorem_1}. This is true in all directed trees, where each leaf node is either a source or a receiver, and each intermediate node satisfies the following mild conditions: (i) it has degree at least three (which is true in all logical topologies); (ii) it has in-degree at least one (otherwise, the node should be a source); and (iii) it has out-degree at least one (otherwise, the node should be a receiver).

\begin{example}
Table~\ref{tabl_2} lists which links are identifiable in the four bottom cases of Fig. \ref{fig_basic},
if we use our approach vs. if we use multicast tomography. All four configurations depict the same basic 5-link topology, but they differ in the choice of sources and receivers. Our approach is able to identify all links for any sets of sources and receivers. This is not always the case for the multicast tomography. \hfill{$\square$}
\end{example}
\begin{table}[h!]
\scriptsize
\centering
\begin{tabular}{|c|c|c|}\hline
Case & Network Coding  & Multicast Probes  \\\hline 1 &    all links
& $DE$, $DF$    \\\hline 2 &    all links  &     all links  \\\hline
3 & all links  &  $AC$, $CB$     \\\hline 4 &   all links & no links
\\\hline
\end{tabular}
\vspace{-0.5em}
\caption{\label{tabl_2} Identifiable links in the four cases (different choices of sources and receivers, for the same 5-link topology) depicted at the bottom of Fig. \ref{fig_basic}.}
\vspace{-1.5em}
\end{table}

\section{\label{sec:trees}Tree topologies}

In this Section, we consider tree topologies, and we describe our design choices in the four subproblems: we have already discussed identifiability in the previous Section. Next, we describe routing in Section~\ref{sec:trees-pov}, probe and code design in Section~\ref{sec-model} (operation of sources and intermediate nodes), and estimation algorithms in Sections~\ref{sec-estimation1},~\ref{sec-estimation-onelink}, and \ref{sec-estimation2}.

\subsection{\label{sec:trees-pov}Routing, Selection of Sources and Receivers}

Routing in trees is well defined: there exists a single path that connects a source to a receiver, through which probes flow.  For a tree with $L$ leaf nodes, some leaves act as sources $S$ and the remaining leaves act as receivers $R=L\setminus S$. Intermediate nodes simply combine (\texttt{XOR}) the probes coming on all incoming links and forward (multicast) to all their outgoing links. This Section looks at situations where we may have some freedom in the choice of the nodes  that act as sources and receivers. If such flexibility is not available (as it is assumed in most tomography work), this step can be skipped. We study the effect of the selection of sources and receivers on estimation accuracy and we come up with empirical guidelines for source selection, obtained through a number of examples and simulation scenarios.

In Example 2, we saw that, with network coding, all links are identifiable, while if we use two multicast trees, they are not. In Appendix B.2, we revisit the basic 5-link topology of Fig. \ref{fig_basic} and we show that, even though with network coding links are identifiable for all four cases, the estimation accuracy differs depending on the number of sources and their relative positions in the tree. This idea also applies to larger topologies. For example, in \cite{technicalReport}, we consider a 9-link tree and we run simulations for different number and location of sources and we summarize the intuition obtained.

Link loss tomography is essentially a parameter estimation problem, and different choices of sources and receivers lead to different estimators. That is, for a fixed number of probes, each topology leads to a different estimation accuracy; put differently, to achieve the same  mean square error ($MSE$), we may need a different number of probes for each topology. In general, the optimal selection of the number and location of sources depends on the network topology, the values of link loss rates, and possibly the number of employed probes. This is currently an open problem.

\subsection{Maximum Likelihood Estimation of All Link Loss Rates} \label{sec-estimation1}

In this Section, we focus on tree topologies 
 and we develop an efficient maximum likelihood estimator to estimate all link loss rates from the observations at the receivers. In the special case where the topology is a {\em multicast tree}, \ie probes are sent between one source and several receivers,  an efficient ML estimator (MINC) has been designed in the pioneering paper \cite{minc}. We build on MINC, and we extend it to multiple-source multiple-receiver trees, where multicast is used at all branching points and network coding is used at all joining points \cite{netcod2011}.  We propose Alg. \ref{alg-MLE} in Section \ref{sec-MLE}, which provides an efficient way to compute the MLE of {\em all links at the same time}. 

A key property that we formulate, prove, and extensively use in this Section, is {\em reversibility}, as discussed in Section~\ref{sec-mainResults}, and as we describe in detail in Section~\ref{sec-MINCandRMINC}.
 In Section~\ref{sec-estimation-onelink}, we also describe how to efficiently compute the  MLE for {\em a single link} at a time (in both trees and general topologies).
 In Section~\ref{sec-estimation2}, we describe heuristic estimation algorithms, some of which apply to general topologies as well.

\subsubsection{Model and framework}
\label{sec-model}
We first describe the model of tree networks for which we derive the MLE.

\begin{figure}[t!]
\centering
\includegraphics[scale=0.38]{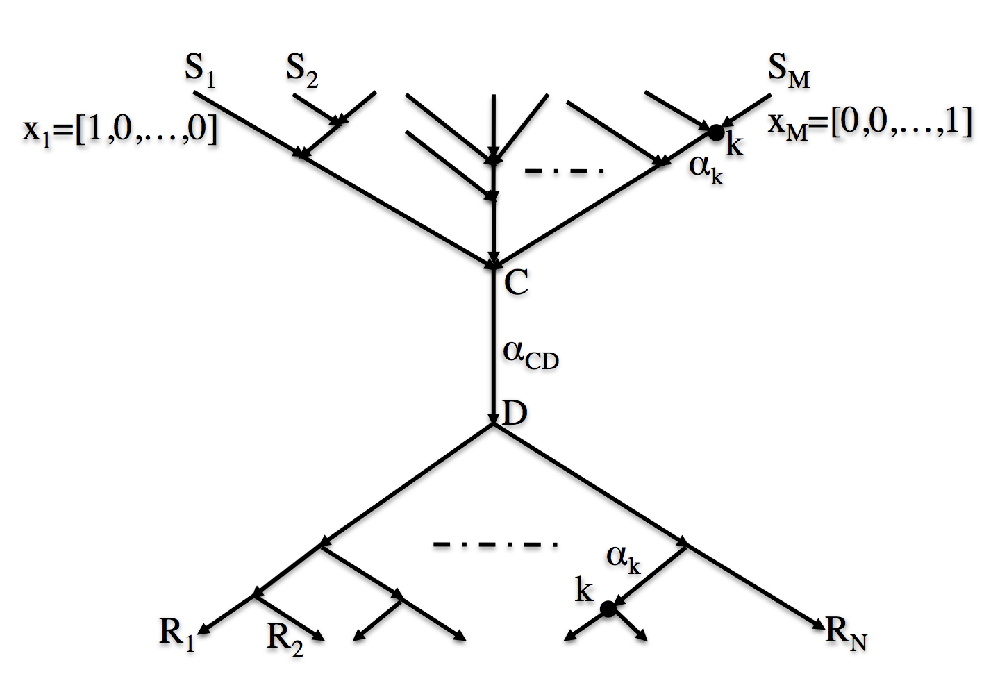}
\vspace{-0.5em}
\caption{A tree topology with multiple sources and multiple receivers. All sources are located at the top $M$ leaves, and all receivers are located at the bottom $N$ leaves. Multicast is used in all branching points and network coding is used in all joining/coding points. All coding points are located above all branching points. (This is a mild assumption that can be enforced if we are allowed to appropriately pick the sources and receivers.) For this tree topology, we have designed an algorithm that efficiently computes the MLE for {\em all links simultaneously.}}
\label{fig-generalTreeModel}
\vspace{-0.5em}
\end{figure}

{\bf Logical Tree.} We consider a tree topology, like the one depicted in Fig.~\ref{fig-generalTreeModel}, $G = (V,E)$ consisting of the set $V$ of nodes and the set $E$ of directed links. $M$ leaf nodes, shown on top of the tree, act as sources of probe packets. The remaining $N$ leaves, shown at the bottom of the tree, act as receivers. As typically assumed in tomography problems (as described in Section~\ref{sec:statement}), this is a ``logical'' tree topology, \ie every intermediate node has degree at least three.  An intermediate node is either a coding point (with multiple incoming links and one outgoing link) or a branching point (with one incoming link and multiple outgoing links).  For each node $j$, we denote the set of its parents (nodes with a link outgoing to $j$) by $f(j)$ and the set of its children (nodes with a link coming from $j$) by $d(j)$. The source nodes $S=\{S_1,...,S_M\}$ have no parent and the receiver nodes $R=\{R_1,...,R_N\}$ have no children. $G, S, R$ are considered known and fixed throughout the experiments. 

In this Section, we focus on the tree topology shown in Fig. \ref{fig-generalTreeModel}, which has the property that all coding points are located above all branching points. This is actually a mild assumption: starting from an undirected tree, if one is allowed to choose the sources among the leaf nodes, then one can always ensure this property.\footnote{Once the sources are properly chosen, the rest of the leaves are receivers; the direction of the links is uniquely defined along the paths from the sources to the receivers; and intermediate nodes perform either coding or multicast, as uniquely dictated by the direction of their incoming and outgoing edges.}  Note that this tree model includes all cases in  Fig.~\ref{fig_basic} (except for Case 3 in the 5-link topology, which is treated separately in Section~\ref{sec-estimation-onelink}).

{\bf Operation of Sources.} Each source $S_i$ sends a probe packet $x_i$, which is a vector of length $M$ in the form of: 
\begin{displaymath}
x_i=[\overbrace{\underbrace{0, \cdots, 0, 1}_i, 0, \cdots, 0}^M], \quad i=1,2,\cdots,M
\end{displaymath}
{\bf Operation of Intermediate Nodes.} Each coding point (bit-wise) \texttt{XOR}s all packets it receives from its parents, and forwards the result to its child\footnote{We assume that the network is delay-free and all packet arrivals at a coding point are synchronized. Link delays only affect where the probe packets would meet.}. This very simple design effectively keeps the presence of each source orthogonal from every other source. This ensures versatility, in the sense that no matter which probe packets get \texttt{XOR}-ed, they will not cancel each other out. For most practical purposes, this simple probe design is sufficient: a single IP packet can be up to 1500B (including the headers) and thus, can accommodate roughly 12,000 probe sources (bits). In large networks, one can also spatially reuse probe packets by  allocating the same probe packet to all sources whose packets do not meet. Finally, each branching point multicasts the packet it receives from its parent to all its children.

One can see that there will be a node after which $x_1+x_2+\cdots+x_M$ flows thought the network. We denote this node by $C$. Node $C$ is the last coding point in the tree. Node $C$ has $P$ parents $f(C)_1,\cdots,f(C)_P$, and only one child, which we denote by node $D$. Node $D$ multicasts the packet it receivers from node $C$ to all its $Q$ children $d(D)_1, \cdots,d(D)_Q$.

We use the notation that $k< k'$, $k,k'\in V$ when $k$ is a descendant of $k'$, and that $k>k'$ when $k$ is an ancestor of $k'$. Every node $k>C$ has multiple parents and only one child, while every node $k<D$ has one parent and multiple children. We are going to treat these two sets of nodes differently in the rest of Section~\ref{sec-estimation1}. We name any link of the tree that is above node $C$ by its starting point, and we name any link that is below node $D$ by its end point. In other words, link $k$ denotes a link between nodes $(k,j)$ if $k>C$ and $j>C$, while link $k$ denotes a link between nodes $(j,k)$ if $j<D$ and $k<D$.

{\bf Loss Model.} As described in Section~\ref{sec:statement}, we model the loss rate of individual links by an i.i.d. Bernoulli process, independent across links. In particular, we use the following notation:
\begin{itemize} 
\item A packet that traverses a link $k$ above node $C$ is lost with probability $\overline \alpha_k=1-\alpha_k$ and arrives at node $j$ with probability $\alpha_k$.
\item A packet that traverses a link $k$ below node $D$ is lost with probability $\overline \alpha_k=1-\alpha_k$ and arrives at node $k$ with probability $\alpha_k$.
\item Finally, we denote the loss rate of link $CD$ by $\overline \alpha_{CD}$.
\end{itemize}

In general, we use the notation $\overline \alpha=1-\alpha$ for any quantity $0<\alpha<1$.

Let $X_k$ denote the packet observed at node $k$, and let $X=(X_k)$, $k\in V$ denote the set of all $X_k$'s. $X_k$ is a binary vector of length $M$. Its $i^{th}$ element, $(X_k)_i$, represents the probe packet of source $i$: $(X_k)_i=1$ indicates that the probe packet of source $i$ reaches node $k$, and 0 that it does not. For the sources, $X_{S_i}=x_i$, thus $(X_{S_i})_i=1$ and $(X_{S_i})_{i'}=0$, $\forall i'\neq i$. For any node $k\geq C$, if $(X_j)_i=1$ for $j$ a parent of $k$, $(X_k)_i=1$ with probability $\alpha_j$, and $(X_k)_i=0$ with probability $\overline \alpha_j$, independently for all the parents of $k$. For any node $k\leq D$, if $X_k=[0,0,\cdots,0]$ (the all-zero vector), then $X_j=[0,0,\cdots,0]$, for the children $j$ of $k$ (and hence for all descendants of $k$). If $X_k\neq [0,0,\cdots,0]$, then for $j$ a child of $k$, $X_j=X_k$ with probability $\alpha_j$, and $X_j=[0,0,\cdots,0]$ with probability $\overline \alpha_j$, independently for all the children of $k$.

{\bf Data, Likelihood, and Inference.} As described in Section~\ref{sec:model-definitions}, in each experiment, one probe is dispatched from each source. The outcome of a single experiment is a record of whether or not each source probe was received at each receiver, which is the set of vectors $X_k$ observed at receiver $k\in R$. It is denoted by $X_{(R)}=(X_k)_{k\in R}$ and is an element of the space $\Omega \subseteq \{[\cdots,0,1,\cdots]\}^N$ of all such outcomes. For a given set of link probabilities $\alpha = (\alpha_k)_{k\in V\backslash\{C,D\}} \cup \alpha_{CD}$, the distribution of the outcomes $X_{(R)}$ on $\Omega$ will be denoted by $P_{\alpha}$. The probability mass function for a single outcome $x\in \Omega$ is $p(x;\alpha)=P_{\alpha}(X_{(R)}=x)$.

We perform $n$ experiments. The probability of $n$ independent observations $x^1,\cdots,x^n$ (each $x^t=(x^t_k)_{k\in R}$) is given by Eq.(\ref{eq-probability}). Our task is to estimate $\alpha$ using maximum likelihood, from the data $(n(x))_{x\in \Omega}$. We work with the log-likelihood function $\mathcal{L}(\alpha)$ given in Eq.(\ref{eq-likelihoodSum1}). The MLE of the loss rates $\breve{\alpha}$ is the $\alpha$ that maximizes $\mathcal{L}(\alpha)$, as given by Eq.(\ref{eq_ML}).

\subsubsection{The Likelihood Equation and its Solution}
\label{sec-MINCandRMINC}

Candidates for the MLE are solutions $\hat{\alpha}$ of the {\em likelihood equation:}
\begin{equation}
 \frac{\partial \mathcal{L}}{\partial \alpha_k}(\alpha)=0, \quad k\in V
\end{equation}
We need to define some additional variables to compute the MLEs. For each node $k\geq D$, let $\Omega^r(k)$ be the set of outcomes $x\in \Omega$ such that $(x_a)_j\neq 0$ for at least one source $j\in S$ that is an ancestor of $k$ and for any arbitrary set of receivers $\{a\}\subset R$. Let $\gamma^r_k=\Gamma^r_k(\alpha) =P_{\alpha}[\Omega^r(k)]$; an estimate of $\gamma^r_k$ can be computed from:
\begin{equation}
\label{eq-gamma-r}
\hat{\gamma}^r_k = \sum_{x\in \Omega^r(k)} \hat{p}(x),  \quad \text{where} \quad  \hat{p}(x) = \frac{n(x)}{n}
\end{equation}
is the observed proportion of experiments with outcome $x$. $\gamma^r_k$ shows the probability of the set of outcomes $\Omega^r(k)$ in which link $k$ has definitely worked. Note that link $k$ may have worked for some other outcomes as well, but they are not included in $\Omega^r(k)$. Also note that $\gamma^r_k$ can be directly estimated from the observations at the receivers.

For each node $k\leq C$, we define $\Omega^m(k)$ to be the set of outcomes $x\in \Omega$ such that $x_j\neq [0,0,\cdots,0]$ for at least one receiver $j\in R$ which is a descendant of $k$. Let $\gamma^m_k=\Gamma^m_k(\alpha) =P_{\alpha}[\Omega^m(k)]$; an estimate of $\gamma^m_k$ is:
\begin{equation}
\label{eq-gamma-m}
\hat{\gamma}^m_k = \sum_{x\in \Omega^m(k)} \hat{p}(x)
\end{equation}
$\gamma^m_k$ is the probability of the outcomes $\Omega^m(k)$ in which link $k$ has definitely worked; and it can be directly estimated from the observations at the receivers. Our goal is to compute $\hat{\alpha}$ from $\hat{\gamma}={(\hat{\gamma}^r_k \cup \hat{\gamma}^m_k)}_{k\in V}$.

{\bf Special Case (i): Multicast Tree (MINC).} If $M=1$, the general model turns into a multicast tree with a single source, which is the case considered in \cite{minc}. We represent the source node by $0\in V$. Each node $j$ other than the source node, has one parent $f(j)$, and a set $d(j)$ of children. We denote the link loss rates by $\overline \alpha_k$, where $k$ is the end point. We simply assume that $\alpha_0=1$.

The outcome of each experiment is $X_{(R)}=(X_k)_{k\in R}$, where each $X_k$ is a single binary value (instead of a binary vector of length $M$ in the general case), corresponding to whether the source probe is observed at each receiver $k\in R$ or not. The state space of the observations $X_{(R)}$ is $\Omega=\{0,1\}^N$. We say that a link $k$ is at level $l^m(k)$ if there is a chain of $l^m(k)$ ancestors $k<f(k)<f^2(k)\cdots<f^{l^m(k)}(k)=0$ leading back to the source. 

Only $\Omega^m(k)$ is used for each node $k$ in the multicast tree; it is the set of outcomes $x\in \Omega$ where $x_j=1$ for at least one receiver $j\in R$ that is a descendant of $k$. The definition of $\gamma^m_k$ is like before.

The MLE for the multicast tree has been computed in \cite{minc}: Let $A^m_k = \prod_{i=0}^{l^m(k)} \alpha_{f^i(k)}$ show the probability that the path from the source to node $k$ works, which we denote by $P(Y_{0\rightarrow k}=1)$. Its estimate $\hat{A}^m_k$ can be computed as follows. For the source node, $\hat{A}^m_0=1$, for the leaf nodes $k\in R$, $\hat{A}^m_k=\hat{\gamma}^m_k$, and for all other nodes $k\in V\backslash \{0,R\}$,  $\hat{A}^m_k$ is the unique solution in $(0,1]$ of:
\begin{equation}
\label{eq-MC1}
1-\frac{\hat{\gamma}^m_k}{\hat{A}^m_k} = \prod_{j\in d(k)} (1-\frac{\hat{\gamma}^m_j}{\hat{A}^m_k})
\end{equation}
$\hat{\alpha}_k$ can then be computed from $\hat{\gamma}^m_k$, \ie $\hat{\alpha}={\Gamma^{m}} ^{-1}(\hat{\gamma^m})$, as follows:
\begin{equation}
\label{eq-MC2}
\hat{\alpha}_k = \frac{\hat{A}^m_k}{\hat{A}^m_{f(k)}}, \quad k\in V\backslash \{0\} \quad (\hat{\alpha}_0=1)
\end{equation}
We refer to Eq.(\ref{eq-MC2}) as MINC in the rest of the paper.

{\em Note.} Eq.(\ref{eq-MC1}) is obtained from the following relations, after some computations in \cite{minc}, which we repeat here for completeness. Let $\beta^m_k=P[\Omega^m(k)|X_{f(k)}=1]$ denote the conditional probability of $\Omega^m(k)$ given that $f(k)$ has observed something. Failure can be due to either $\overline \alpha_k$ (failure of link $k$), or all paths towards the destinations failing. Therefore, the $\beta^m_k$ obey the following recursion:
\begin{equation}
\label{eq-MCbeta1}
\overline \beta^m_k = \overline \alpha_k + \alpha_k \prod_{j\in d(k)} \overline \beta^m_j, \quad k\in V\backslash R
\end{equation}
\begin{equation}
\label{eq-MCbeta2}
\beta^m_k = \alpha_k, \quad k\in R
\end{equation}
Eq.(\ref{eq-MC1}) then follows from the following relation between $\alpha$ and $\gamma^m$:
\begin{equation}
\label{eq-MCmainRelation}
\gamma^m_k = \beta^m_k \prod_{i=1}^{l^m(k)} \alpha_{f^i(k)}
\end{equation}

{\bf Special Case (ii): Reverse Multicast Tree (RMINC).} If $N=1$, the general model turns into a reverse multicast tree with a single receiver, which we denote by $0\in V$. Each node $j$ other than $0$ has one child $d(j)$, and a set $f(j)$ of parents. We denote link loss rates by $\overline \alpha_k$, where $k$ is the starting point. We assume that $\alpha_0=1$.

The outcome of each experiment, $X_R$, is a binary vector of length $M$. Each of its elements, $(X_R)_i$, represents whether the probe packet of source $i$ is observed at the receiver or not. The state space of the observations $X_{R}$ is $\Omega=\{0,1\}^M$. We say that a link $k$ is at level $l^r(k)$ if there is a chain of $l^r(k)$ descendants $k>d(k)>d^2(k)\cdots>d^{l^r(k)}(k)=0$ leading down to the receiver. 

Only $\Omega^r(k)$ is used for each node $k$ in the reverse multicast tree; it is the set of outcomes $x\in \Omega$ where $x_j=1$ for at least one source $j\in S$ that is an ancestor of $k$. The definition of $\gamma^r_k$ is like before.

The MLE for the reverse multicast tree is similar to the multicast tree. Let $A^r_k =  \prod_{i=0}^{l^r(k)} \alpha_{d^i(k)}$ show the probability that the path from node $k$ to the receiver node works, which we denote by $P(Y_{k\rightarrow 0}=1)$. Its estimate $\hat{A}^r_k$ can be computed as follows. For the receiver node, $\hat{A}^r_0=1$, for the source nodes $k\in S$, $\hat{A}^r_k=\hat{\gamma}^r_k$, and for all other nodes $k\in V\backslash \{S,0\}$, $\hat{A}^r_k$ is the unique solution in $(0,1]$ of:
\begin{equation}
\label{eq-RMC1}
1-\frac{\hat{\gamma}^r_k}{\hat{A}^r_k} = \prod_{j\in f(k)} (1-\frac{\hat{\gamma}^r_j}{\hat{A}^r_k})
\end{equation}
We can then compute $\hat{\alpha}_k$ from $\hat{\gamma}^r_k$, \ie $\hat{\alpha}={\Gamma^r}^{-1}(\hat{\gamma^r})$, as follows:
\begin{equation}
\label{eq-RMC2}
\hat{\alpha}_k = \frac{\hat{A}^r_k}{\hat{A}^r_{d(k)}}, \quad k\in V\backslash \{0\} \quad (\hat{\alpha}_0=1)
\end{equation}
We refer to Eq.(\ref{eq-RMC2}) as RMINC in the rest of the paper.

{\em Note.} Eq.(\ref{eq-RMC1}) results from the following relations. Let $\beta^r_k=P[\Omega^r(k)|Y_{d(k)\rightarrow 0}=1]$ denote the conditional probability of $\Omega^r(k)$ given that the path from $d(k)$ to the receiver works. We have that:
\begin{equation}
\label{eq-RMCbeta1}
\overline \beta^r_k = \overline \alpha_k + \alpha_k \prod_{j\in f(k)} \overline \beta^r_j, \quad k\in V\backslash S
\end{equation}
\begin{equation}
\label{eq-RMCbeta2}
\beta^r_k = \alpha_k, \quad k\in S
\end{equation}
\begin{equation}
\label{eq-RMCmainRelation}
\gamma^r_k = \beta^r_k \prod_{i=1}^{l^r(k)} \alpha_{d^i(k)}
\end{equation}

{\bf Comparison of MINC and RMINC.} The reader will notice that the MLE for the multicast tree and the reverse multicast tree have the same functional form. This is a special case of the more general ``reversibility'' property, first observed in \cite{globecom08}. Indeed, there is a 1-1 correspondence between the observable outcomes in the two cases; furthermore, the corresponding outcomes have the same probability, as a function of $\alpha_k$'s, thus leading to the same MLE. In the following, we describe the reversibility property in more detail.

{\bf Reversibility -- A Structural Property.} 
Consider a tree topology $G=(V,E)$ with $L$ leaf nodes, some of which act as sources $S$ and the remaining ones, $R=L\setminus S$, act as receivers of probes. Routing from $S$ to $R$ is given ({\em e.g.,} determined in the routing subproblem) and defines a direction on every link $e\in E$, along which probes flow. 
\begin{definition}
We call the triplet ($G, S,R)$ a {\em configuration}. 
\end{definition}

We define as dual the configuration that results from reversing the orientation of all links in the network, and from having the sources $S$ become receivers, while the receivers $R$ act as sources. More formally:
\begin{definition}
Consider the original configuration $(G,S,R)$. Consider the graph $G^d=(V,E^d)$ that has the same nodes but reversed edges, {\em i.e.,} $e=(i,j)\in E$ iff $e^d=(j,i)\in E^d$, and success rate $\alpha_e^d=\alpha_e$, associated with every edge $e^d\in E^d$. Select sources $S^d=R$ and receivers $R^d=S$. We call the $(G^d, S^d, R^d)$ the {\em dual configuration} of $(G, S,R)$.
\end{definition}

For example, a multicast tree is the dual configuration of a reverse multicast tree (Cases $2$ and $4$ in Fig.~\ref{fig_basic}). In Appendix B, we show that the dual configurations of Fig.~\ref{fig_fischer_5}(a) and Fig.~\ref{fig_fischer_5}(b) result in the same mean square error bound. In fact, a closer look reveals that not only the values but also the functional forms of these two ML estimators coincide.
The following theorem generalizes this notion to general trees.

\begin{theorem}
\label{th_reverse} Consider a configuration $(G,S,R)$ with observations at the receivers $\Omega$, and probability distribution $P_\alpha=\{p(x;\alpha), x\in \Omega\}$. Consider its dual configuration $(G^d,R,S)$, with observations $\Omega^d$ and probability  distribution $P_{\alpha}^d$. Then, there is a bijection between outcomes and their probabilities in the original $(x\in \Omega, p(x;\alpha))$ and in the dual configuration $(x^d\in \Omega^d, p(x^d;\alpha))$.
\end{theorem}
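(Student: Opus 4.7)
The plan is to parametrize every outcome by the bipartite source-to-receiver \emph{connectivity matrix} $C\in\{0,1\}^{M\times N}$ defined by $C_{ij}=1$ iff every link on the unique tree-path from $S_i$ to $R_j$ carries a packet successfully in the experiment. Because $G$ is a tree, each source probe has a unique route to any given node, so the packet observed at any node $v$ is unambiguously $\bigoplus_{i:\,\text{path}(S_i,v)\text{ works}} x_i$: the XORs at coding points combine exactly those basis vectors whose paths have survived, and branching points merely replicate an incoming packet to their children. Since the $x_i$ are standard basis vectors, the binary vector $X_j$ received at $R_j$ is precisely the $j$-th column of $C$, and the full observation $x=(X_1,\dots,X_N)\in\Omega$ encodes $C$ column-by-column.

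Applying the same reasoning to the dual configuration $(G^d,R,S)$, each dual receiver $S_i$ observes a binary vector of length $N$ equal to the $i$-th row of a matrix $C^d$ defined analogously via paths in $G^d$. The unique $R_j\to S_i$ path in $G^d$ traverses exactly the same edges as the $S_i\to R_j$ path in $G$ (only reversed), and by construction $\alpha_{e^d}=\alpha_e$; hence for any joint link-state vector $\omega\in\{0,1\}^{|E|}$ and its mirror $\omega^d$, we have $C^d(\omega^d)=C(\omega)^\top$. The dual outcome $x^d\in\Omega^d$ thus encodes $C^\top$ row-by-row, and I would define the bijection $\phi:\Omega\to\Omega^d$ by sending the outcome encoding $C$ to the outcome encoding $C^\top$. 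This is manifestly a bijection because both spaces are in one-to-one correspondence with $\{0,1\}^{M\times N}$.

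To finish, I would establish $p(x;\alpha)=p(\phi(x);\alpha)$ by conditioning on the link-state vector:
\begin{equation}
p(x;\alpha)=\sum_{\omega:\,C(\omega)=C(x)}\prod_{e\in E}\alpha_e^{\omega_e}\,\overline\alpha_e^{\,1-\omega_e},
\end{equation}
and analogously for the dual with the constraint $C^d(\omega^d)=C(x)^\top$. The edge correspondence $e\leftrightarrow e^d$ is a measure-preserving bijection on link-state vectors that carries one summation domain onto the other, so the two sums are termwise equal. The main obstacle is purely bookkeeping: one must verify that internal-node operations swap cleanly under edge reversal (a pure coding point of in-degree $d$ in $G$ becomes a pure branching point of out-degree $d$ in $G^d$, and vice versa), so that the dual is itself a valid monitoring scheme of the type defined in Section~\ref{sec-model}; and that no outcomes lie outside the image of the link-state map. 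Both facts are immediate consequences of $G$ being a tree and of the source probes forming the standard basis, so the argument reduces to a clean path-algebraic verification.
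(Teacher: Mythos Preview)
Your argument is correct and rests on the same core observation as the paper: in a tree with \texttt{XOR} coding, the outcome at the receivers is completely determined by which source--receiver paths survive, and reversing every edge simply swaps the roles of sources and receivers while leaving the underlying path structure (and hence the event probabilities) unchanged. The paper packages this differently: it associates with each edge $e$ the pair $(S(e),R(e))$ of sources flowing through it and receivers downstream of it, observes that this pair uniquely identifies $e$ in a tree, and then argues that single-edge-failure events map bijectively between $G$ and $G^d$ before extending to unions. Your connectivity-matrix parameterization $C\in\{0,1\}^{M\times N}$ with the bijection $C\mapsto C^\top$ is a cleaner and more direct formalization of the same idea; it avoids the intermediate step of characterizing individual edges and goes straight to the sufficient statistic. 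One small sloppiness: your claim that both $\Omega$ and $\Omega^d$ are ``in one-to-one correspondence with $\{0,1\}^{M\times N}$'' is true of the ambient observation spaces as defined in the paper, but many matrices $C$ are unachievable (have probability zero); your probability argument handles this correctly since empty sums on both sides still match, so this is only a cosmetic issue.
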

\begin{proof}
Let $G=(V,E)$ be the original tree graph, and $G^d$ its dual. In every experiment, there exist $2^{|E|}$ possible error events, depending on  which subset of the links fail. Observing the outcomes at the receivers corresponds to observing unions of events,
that occur with the corresponding probability ({\em e.g.,} as in the example of Table \ref{tabl_1}).
We show that for each observable outcome, which occurs with probability $p$ in $G$, there exists exactly one observable outcome that occurs with the same probability in  $G^d$
and vice-versa. This establishes a bijection.

With every edge $e$ of $G$, we can associate a set of sources
$S(e)\subset V$ that flow through this edge, and a set of receivers
$R(e)\subset V$ that observe the flow through $e$. Our main
observation is that the pair  $\{S(e),\;R(e) \}$ uniquely identifies
$e$, {\em i.e.,} no other edge has the same pair. In the dual configuration
$G^d$, edge $e$ is uniquely identified by the pair $\{R(e),\;S(e)
\}$. If in $G$, edge $e$ fails while all other edges do not, the
receivers $R(e)$ will not receive the contribution in the probe
packets of the sources  $S(e)$. If in $G^d$, edge $e$ fails while all
other edges do not, the receivers $S(e)$ will not receive the
contribution in the probe packets of the sources $R(e)$. Thus, there
is a one-to-one mapping between these events. Using this
equivalence, an observable outcome consisting of a union of events
can be mapped to an observable outcome in the reverse tree.
\end{proof}

\begin{corollary}
The maximum likelihood estimators for a configuration and its dual have the same
functional form.
\end{corollary}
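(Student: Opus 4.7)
The plan is to reduce the statement to a direct consequence of Theorem \ref{th_reverse}. The MLE is defined through the log-likelihood
\begin{equation*}
\mathcal{L}(\alpha) \;=\; \sum_{x\in \Omega} n(x)\,\log p(x;\alpha),
\end{equation*}
and is obtained by solving the likelihood equations $\partial \mathcal{L}/\partial \alpha_e = 0$ for $e \in E$. Writing down the analogous expression $\mathcal{L}^d(\alpha)$ for the dual configuration, the claim is that these two objective functions are identical as functions of $\alpha$, once the outcomes in $\Omega^d$ are relabeled through the bijection.

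First, I would invoke Theorem \ref{th_reverse} to fix a bijection $\phi : \Omega \to \Omega^d$ such that $p(x;\alpha) = p^d(\phi(x);\alpha)$ as functions of $\alpha$ for every $x \in \Omega$. Since the edge set and the associated link parameters $\alpha_e$ are the same in $G$ and $G^d$ (only the orientations and the roles of $S,R$ are swapped), both log-likelihoods are functions of the same parameter vector $\alpha \in (0,1]^{|E|}$. Substituting gives
\begin{equation*}
\mathcal{L}^d(\alpha) \;=\; \sum_{x^d \in \Omega^d} n^d(x^d)\,\log p^d(x^d;\alpha) \;=\; \sum_{x \in \Omega} n^d(\phi(x))\,\log p(x;\alpha),
\end{equation*}
which has the same functional form in $\alpha$ as $\mathcal{L}(\alpha)$, with the empirical counts $n(x)$ simply replaced by $n^d(\phi(x))$.

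Next, I would differentiate and observe that, because $p(x;\alpha)$ and $p^d(\phi(x);\alpha)$ coincide as functions of $\alpha$, the systems $\partial \mathcal{L}/\partial \alpha_e = 0$ and $\partial \mathcal{L}^d/\partial \alpha_e = 0$ are the same system of equations in the unknown $\alpha$, with coefficients that are the empirical frequencies $\hat p(x) = n(x)/n$ and $\hat p^d(\phi(x)) = n^d(\phi(x))/n$, respectively. Hence the MLE $\hat\alpha$ is expressed through the very same functional of the observed outcome frequencies in both configurations; this is exactly the ``same functional form'' statement (and, as a sanity check, it recovers the MINC/RMINC coincidence noted in Section \ref{sec-MINCandRMINC}).

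The only subtlety, rather than a real obstacle, is to be precise about what ``same functional form'' means: it is a statement about the estimator as a rule mapping empirical distributions on the (bijected) outcome space to estimates of $\alpha$, not a statement that the numerical values of $\hat\alpha$ agree across two independent experiments. Once that is made explicit, the corollary is immediate from the bijection of Theorem \ref{th_reverse}.
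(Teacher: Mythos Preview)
Your argument is correct and follows the same approach as the paper: invoke the bijection of Theorem~\ref{th_reverse} to conclude that the two configurations have the same outcome probabilities as functions of $\alpha$, hence the same likelihood function and the same MLE. Your version is more explicit (writing out $\mathcal{L}^d$ via $\phi$ and clarifying that ``same functional form'' refers to the estimator as a map from empirical outcome frequencies to $\hat\alpha$), but the underlying idea is identical to the paper's two-line proof.
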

\begin{proof}
The bijection established above implies that a configuration and its dual have  the same set of observable outcomes, with the same probabilities. Therefore, they have the same likelihood function and thus, the same maximum likelihood estimator.
\end{proof}
We note that this corollary establishes reversibility only for the
maximum likelihood estimation. The performance of suboptimal
algorithms may differ when applied to a configuration and its dual.

{\bf A note on directional networks.} It is also important to note that
the notion of dual configurations does {\em not} assume that the loss rates in both directions of a link are the same. Reversibility means that the two ML estimators for a configuration and its dual are described by the same function. However, the loss parameters we try to estimate (using the same estimator function) in the two directions may have different values.

\subsubsection{Maximum Likelihood Estimation of Loss Rates}
\label{sec-reductions}

We now present how to ``reduce'' the original tree to a multicast and to a reverse multicast tree, and how to estimate $\alpha_{CD}$. These intermediate results are then used in the MLE algorithm in Section~\ref{sec-MLE}.

{\bf Reduction to a Multicast Tree (m).} If we take the upper part of the original tree in Fig.~\ref{fig-generalTreeModel} and consider it as an aggregate link, we obtain the reduced multicast tree in Fig.~\ref{fig-generalMCreduction}. The aggregate link $agg^m$ summarizes the operation of all links above node $C$ and link $CD$. Node  $D$ receives a packet if at least one path from the sources to node $C$ works and link $CD$ works. In other words, the success probability of the aggregate link, $\alpha_{agg}^m$, depends on the paths from the sources to node $C$, and also link $CD$.

\begin{figure}[t!]
\subfigure[Reduced multicast tree.]{\includegraphics[width=4.3cm, height=3.4cm]{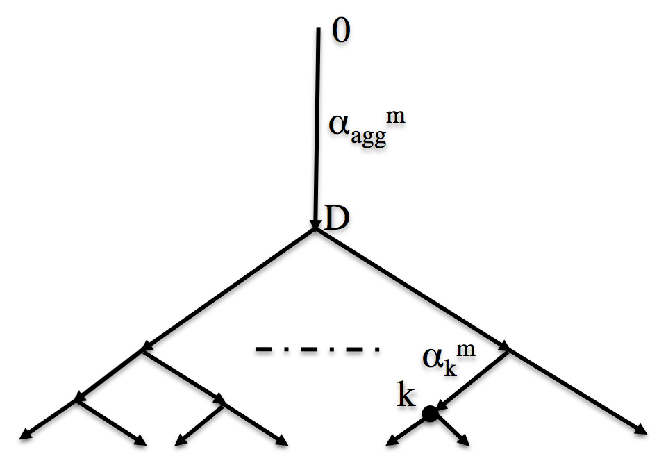}\label{fig-generalMCreduction}}
\hspace{0cm}
\centering \subfigure[Reduced reverse multicast tree.]{\includegraphics[width=4.3cm, height=3.4cm]{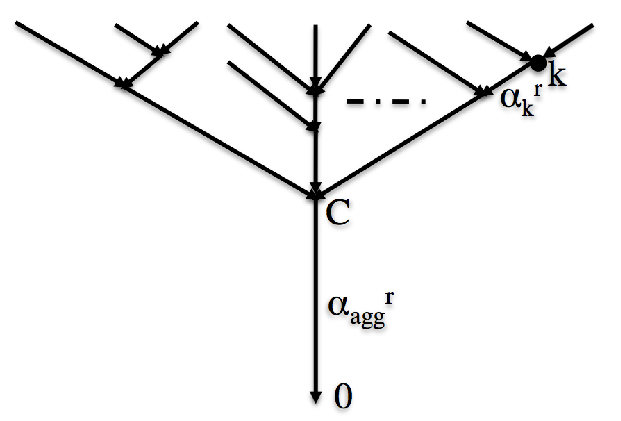}\label{fig-generalRMCreduction}}
\caption{\label{fig-generalTreeReductions} Reducing the tree topology in Fig.~\ref{fig-generalTreeModel} to a multicast tree and to a reverse multicast tree.}
\end{figure}

More formally, we map the outcomes $x\in \Omega$ of the original tree to the outcomes $x^m$ of the multicast tree, as follows. Each $x$ is a set of $N$ binary vectors, each of length $M$, while each $x^m$ is a single binary vector of length $N$. Any outcome $x^m$ is obtained by taking a set of outcomes $\{x\}$, in all of which the same receivers have observed all-zero vectors\footnote{Note that if a receiver does not receive any packet, then this is treated as an all-zero vector.} and the same receivers have observed non-zero vectors, and by replacing each non-zero vector (that may contain any of the source probes $x_1,...,x_M$) by value 1, and each all-zero vector by value 0. {\em I.e.:}
\begin{equation}
\label{eq-nRelation}
\begin{split}
& \sum_{x_{R_t}\neq [0,0,\cdots,0], x_{R_{t'}}=[0,0,\cdots,0]} n(x) = n^m(x^m), \\
& \quad x^m_{R_t}=1, x^m_{R_{t'}}=0, \; t,t'\in \{1,\cdots N\}, t\neq t' \\
\end{split}
\end{equation}
If the original tree has link success rates $\alpha$ and an associated probability distribution of outcomes $P_{\alpha}$, then the multicast tree is defined with parameters $\alpha^{m}$ and associated probability distribution $P_{\alpha}^{m}$, such that:
\begin{equation}
\label{eq:mt}
\alpha_{k}^m=\alpha_{k},  \, k<D, \quad \alpha_{agg}^m=\alpha_{CD} (1-\prod_{i=1}^{P} \overline \beta^r_{f(C)_i})
\end{equation}
$P_{\alpha}^m$ can be directly calculated from $P_{\alpha}$, since each event in $P_{\alpha}^m$ is the union of a disjoint subset of events in $P_{\alpha}$ and has probability equal to the sum of probabilities of those events in $P_{\alpha}$ (such as the 5-link example in Table~\ref{tabl_1}). 

{\bf Reduction to a Reverse Multicast Tree (r).} Similarly, if we consider the lower part of the original tree in Fig.~\ref{fig-generalTreeModel} as an aggregate link, we obtain the reduced reverse multicast tree in Fig. \ref{fig-generalRMCreduction}, with parameters $\alpha^{r}$ and associated probability distribution $P_\alpha^{r}$, such that:
\begin{equation}
\label{eq:rmt}
\alpha_{k}^{r}=\alpha_{k}, \, k>C, \quad \alpha_{agg}^r=\alpha_{CD} (1-\prod_{j=1}^{Q} \overline \beta^m_{d(D)_j})
\end{equation}

{\bf The Relation Between the Two Reduced Trees.}

\begin{lemma}
\label{thr-gammasEqual}
We have that: $\hat{\gamma}^r_C = \hat{\gamma}^m_D = 1-\hat{p}([0,0,\cdots,0])$.
\end{lemma}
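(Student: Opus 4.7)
The plan is to show that both $\Omega^r(C)$ and $\Omega^m(D)$ coincide with the event that the receivers' joint observation is \emph{not} the all-zero outcome, from which both equalities follow immediately by summing $\hat{p}(x)$ over that common event.

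First I would exploit the special position of nodes $C$ and $D$ in the assumed tree topology. By construction, $C$ is the last coding point in the tree, so every source $S_i \in S$ is an ancestor of $C$; symmetrically, $D$ is the branching point immediately below $C$, so every receiver $R_j \in R$ is a descendant of $D$. Substituting ``$k = C$'' into the definition of $\Omega^r(\cdot)$ then makes the quantifier ``at least one source $j \in S$ that is an ancestor of $C$'' reduce to ``at least one source $j \in S$''; substituting ``$k = D$'' into the definition of $\Omega^m(\cdot)$ similarly makes ``at least one receiver $j \in R$ that is a descendant of $D$'' reduce to ``at least one receiver $j \in R$''.

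Next I would unwind the two resulting events. An outcome $x = (x_{R_1},\ldots,x_{R_N})$ lies in $\Omega^r(C)$ iff there exist some receiver $R_a$ and some source index $j$ with $(x_{R_a})_j \neq 0$, i.e.\ iff some receiver's observed vector is nonzero. But this is exactly the complement of the event that $x_{R_a} = [0,\ldots,0]$ for every receiver $R_a$, namely the complement of $\{x = [0,\ldots,0]\}$ (using the shorthand where $[0,\ldots,0]$ denotes the all-zero joint observation). The same unwinding shows $\Omega^m(D)$ equals this very same complement event, since an outcome lies in $\Omega^m(D)$ iff $x_{R_j} \neq [0,\ldots,0]$ for some receiver $R_j$.

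Finally I would apply \eqref{eq-gamma-r} and \eqref{eq-gamma-m}. Since $\Omega^r(C) = \Omega^m(D) = \Omega \setminus \{[0,\ldots,0]\}$, summing the empirical probabilities $\hat{p}(x)$ over this common event gives
\[
\hat{\gamma}^r_C \;=\; \sum_{x \in \Omega^r(C)} \hat{p}(x) \;=\; 1 - \hat{p}([0,\ldots,0]) \;=\; \sum_{x \in \Omega^m(D)} \hat{p}(x) \;=\; \hat{\gamma}^m_D,
\]
where the middle equality uses $\sum_{x \in \Omega} \hat{p}(x) = 1$. There is no real obstacle here; the only point requiring care is to verify from the tree model of Section~\ref{sec-model} that $C$ sits above \emph{all} sources (in the ancestor sense) and $D$ sits above \emph{all} receivers, which is immediate from the assumption that all coding points lie above all branching points and that $C$ is the unique last coding point with child $D$.
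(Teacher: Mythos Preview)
Your argument is correct and is exactly the approach the paper has in mind: the paper's one-line proof just says the result ``directly results from the definition of $\gamma^m_D$ in the reduced multicast tree and $\gamma^r_C$ in the reduced reverse multicast tree,'' and you have faithfully unpacked those definitions to show $\Omega^r(C)=\Omega^m(D)=\Omega\setminus\{[0,\ldots,0]\}$. One small slip to fix: in your final sentence you write that ``$C$ sits above all sources,'' but you mean (and earlier correctly argued) that all sources are ancestors of $C$, i.e.\ $C$ sits \emph{below} all sources.
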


The proof directly results from the definitions of $\gamma^m_D$ in the reduced multicast tree and $\gamma^r_C$ in the reduced reverse multicast tree.

{\bf Estimating $\mathbf{\alpha_{CD}}$.} The MLE of $\alpha_{CD}$ can be obtained from:
\begin{equation}
\label{eq-alphaCDestimate}
{\hat{\alpha}}_{CD} = \frac{\hat{A}^r_C\cdot \hat{A}^m_D}{\hat{\gamma}^r_C} = \frac{\hat{A}^r_C\cdot \hat{A}^m_D}{\hat{\gamma}^m_D}
\end{equation}
The proof can be found in Appendix A.2.

\begin{algorithm}[t!]
{\footnotesize
\caption{\label{alg-MLE} Computing the MLE of all link loss rates in the original tree topology of Fig. \ref{fig-generalTreeModel}.}
\begin{algorithmic}[1]
\FORALL{links $k$, where $k<D$}
\STATE Reduce the original tree to a multicast tree. Use MINC \cite{minc} (Eq.(\ref{eq-MC2})) to compute the MLEs ${\hat{\alpha}}_{k}^m$ and ${\hat{\alpha}}_{agg}^m$.
\STATE Let ${\hat{\alpha}}_{k}={\hat{\alpha}}_{k}^m$.
\ENDFOR
\FORALL{links $k$, where $k>C$}
\STATE Reduce the original tree to a reverse multicast tree. Use RMINC (Eq.(\ref{eq-RMC2})) to compute the MLEs ${\hat{\alpha}}_{k}^r$ and ${\hat{\alpha}}_{agg}^r$.
\STATE Let ${\hat{\alpha}}_{k}={\hat{\alpha}}_{k}^r$.
\ENDFOR
\STATE Use Eq.(\ref{eq-alphaCDestimate}) to compute the MLE ${\hat{\alpha}}_{CD}$.
\end{algorithmic}
}
\end{algorithm}

\subsubsection{The Analysis of the MLE}
\label{sec-MLE}

In this Section, we propose the MLE algorithm, we discuss its complexity, and we illustrate our results through the example tree topology in Fig.~\ref{figure_1}(c).

{\bf MLE Algorithm.} Algorithm~\ref{alg-MLE} computes the MLE of all link loss rates in the tree topology of Fig.~\ref{fig-generalTreeModel}; it proceeds in the following steps: (i) it computes $\hat{\alpha}_k$ for any link $k$ below node $D$ from the reduced multicast tree using Eq.(\ref{eq-MC2}); (ii) it computes $\hat{\alpha}_k$ for any link $k$ above node $C$ from the reduced reverse multicast tree using Eq.(\ref{eq-RMC2}); and (iii) it computes $\hat{\alpha}_{CD}$  from Eq.(\ref{eq-alphaCDestimate}).
These are indeed the MLEs of the link loss rates, $\hat{\alpha}$, for the tree of Fig.~\ref{fig-generalTreeModel}.

\begin{figure*}[t!]
{\scriptsize
\vspace*{4pt} \hrulefill
\begin{displaymath}
\mathcal{I}^{-1}(\alpha)=\left( \begin{array}{ccccc} \frac{\alpha_A\overline \alpha_A}{\alpha_B\alpha_{CD}(\alpha_E+\alpha_F-\alpha_E\alpha_F)} & \frac{\overline \alpha_A\overline \alpha_B}{\alpha_{CD}(\alpha_E+\alpha_F-\alpha_E\alpha_F)} &  \frac{-\overline \alpha_A\overline \alpha_B}{\alpha_B(\alpha_E+\alpha_F-\alpha_E\alpha_F)} & 0 & 0 \\ \frac{\overline \alpha_A\overline \alpha_B}{\alpha_{CD}(\alpha_E+\alpha_F-\alpha_E\alpha_F)} & \frac{\alpha_B\overline \alpha_B}{ \alpha_A\alpha_{CD}(\alpha_E+\alpha_F-\alpha_E\alpha_F)} & \frac{-\overline\alpha_A\overline \alpha_B}{ \alpha_A(\alpha_E+\alpha_F-\alpha_E\alpha_F)} & 0 & 0 \\ \frac{-\overline \alpha_A\overline \alpha_B}{\alpha_B(\alpha_E+\alpha_F-\alpha_E\alpha_F)}  &  \frac{-\overline\alpha_A\overline \alpha_B}{ \alpha_A(\alpha_E+\alpha_F-\alpha_E\alpha_F)} & \mathcal{I}^{-1}_{33}(\alpha) &  \frac{-\overline\alpha_E\overline \alpha_F}{ \alpha_F(\alpha_A+\alpha_B-\alpha_A\alpha_B)} & \frac{-\overline\alpha_E\overline \alpha_F}{ \alpha_E(\alpha_A+\alpha_B-\alpha_A\alpha_B)} \\ 0& 0 & \frac{-\overline\alpha_E\overline \alpha_F}{ \alpha_F(\alpha_A+\alpha_B-\alpha_A\alpha_B)} & \frac{\alpha_E\overline \alpha_E}{\alpha_{CD} \alpha_F(\alpha_A+\alpha_B-\alpha_A\alpha_B)} & \frac{\overline\alpha_E\overline \alpha_F}{ \alpha_{CD}(\alpha_A+\alpha_B-\alpha_A\alpha_B)} \\ 0 & 0 & \frac{-\overline\alpha_E\overline \alpha_F}{ \alpha_E(\alpha_A+\alpha_B-\alpha_A\alpha_B)} & \frac{\overline\alpha_E\overline \alpha_F}{ \alpha_{CD}(\alpha_A+\alpha_B-\alpha_A\alpha_B)} & \frac{\alpha_F\overline \alpha_F}{\alpha_{CD}\alpha_E(\alpha_A+\alpha_B-\alpha_A\alpha_B)}
 \end{array} \right)
\end{displaymath}
\begin{displaymath}
\begin{split}
\mathcal{I}^{-1}_{33}(\alpha) & = \frac{1}{\alpha_A\alpha_B\alpha_E\alpha_F(-\alpha_A\overline \alpha_B-\alpha_B)(-\alpha_E\overline \alpha_F-\alpha_F)} (-\alpha_{CD}(-\alpha_B\overline\alpha_B\alpha_E\alpha_F-\alpha_A^2\overline \alpha_B\alpha_E(-1+\alpha_B(2+\alpha_{CD}(-\alpha_E\overline \alpha_F-\alpha_F)))\alpha_F\\
& + \alpha_A(-\alpha_E\alpha_F+\alpha_B^2\alpha_E\alpha_F(-3+\alpha_{CD}(\alpha_E+\alpha_F-\alpha_E\alpha_F))+\alpha_B(-\alpha_F\overline \alpha_F+\alpha_E(-1+7\alpha_F-3\alpha_F^2)+\alpha_E^2(1-3\alpha_F+2\alpha_F^2)))))
\end{split}
\end{displaymath}
\hrulefill
}
\vspace{-0.25em}
\caption{The inverse of the Fisher information matrix governing the confidence intervals for models in Eq.(\ref{eq-confidenceInterval1}). Here, the order of the coordinates is $\alpha_A,\alpha_B,\alpha_{CD},\alpha_E,\alpha_F$.} \label{fig-equation2}
\vspace{-0.75em}
\end{figure*}

\begin{theorem}
\label{thr-MLE}
The estimates computed by Algorithm~\ref{alg-MLE} are the MLEs of the link loss rates in the original tree topology in Fig.~\ref{fig-generalTreeModel}.
\end{theorem}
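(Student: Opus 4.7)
The plan is to show that the output $\hat\alpha$ of Algorithm~\ref{alg-MLE} maximizes $\mathcal{L}(\alpha)$ by decomposing $\mathcal{L}$ into three pieces that can be bounded independently, and then verifying that $\hat\alpha$ simultaneously saturates each piece.

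First I would establish an exact factorization of the single-experiment probability. Because the subtree rooted at $D$ only multicasts, every receiver that sees anything sees the \emph{same} packet, so each outcome $x\in\Omega$ is in bijection with a pair $(x^{m},x^{r})$, where $x^{m}\in\{0,1\}^{N}$ marks which receivers got a nonzero vector and $x^{r}\in\{0,1\}^{M}$ is the common payload (taken to be zero when $x^{m}=0$). Unwinding the Bernoulli loss model, $p(x{=}0;\alpha)=1-\theta$, and for $x^{m}\neq 0$
\begin{equation}
p(x;\alpha) \;=\; \frac{p^{m}(x^{m};\alpha^{m})\,p^{r}(x^{r};\alpha^{r})}{\theta},
\end{equation}
where $\theta:=\gamma^{m}_{D}=\gamma^{r}_{C}$ is the probability that at least one receiver sees something and $\alpha^{m},\alpha^{r}$ are the reduced parameters of Eqs.~(\ref{eq:mt})--(\ref{eq:rmt}). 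This is just the conditional independence, given that some receiver is hit, of the downstream-driven $x^{m}$ and the upstream-driven $x^{r}$. Summing $n(x)\log p(x;\alpha)$ across experiments and using $n^{m}(0)=n^{r}(0)=n(x{=}0)$ collapses the log-likelihood to
\begin{equation}
\mathcal{L}(\alpha) \;=\; \mathcal{L}^{m}(\alpha^{m})+\mathcal{L}^{r}(\alpha^{r})-g(\theta),
\end{equation}
with $g(\theta):=n(x{=}0)\log(1-\theta)+(n{-}n(x{=}0))\log\theta$.

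Next I would bound each piece separately and verify joint attainment. The MINC analysis of \cite{minc} gives the unique maximizer $\hat\alpha^{m}$ of $\mathcal{L}^{m}$ via Eqs.~(\ref{eq-MC1})--(\ref{eq-MC2}); the mirror argument (RMINC) gives the unique maximizer $\hat\alpha^{r}$ of $\mathcal{L}^{r}$ via Eqs.~(\ref{eq-RMC1})--(\ref{eq-RMC2}); and elementary calculus places the maximum of $-g$ at $\hat\theta=(n-n(x{=}0))/n$. Since $\alpha^{m}$, $\alpha^{r}$ and $\theta$ are deterministic functions of $\alpha$, this yields the pointwise bound $\mathcal{L}(\alpha)\le \mathcal{L}^{m}(\hat\alpha^{m})+\mathcal{L}^{r}(\hat\alpha^{r})-g(\hat\theta)$. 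To see that $\hat\alpha$ from Algorithm~\ref{alg-MLE} attains this bound, I would use that the MINC recursion forces $\gamma^{m}_{D}(\hat\alpha^{m})=\hat\gamma^{m}_{D}$, which Lemma~\ref{thr-gammasEqual} identifies with $\hat\theta$, and symmetrically for $\gamma^{r}_{C}(\hat\alpha^{r})$. Writing $\gamma^{m}_{D}=\alpha^{m}_{\mathit{agg}}B^{m}$ and $\gamma^{r}_{C}=\alpha^{r}_{\mathit{agg}}B^{r}$, where $B^{r},B^{m}$ are the $\overline{\beta}$-products appearing in (\ref{eq:mt})--(\ref{eq:rmt}), and plugging $\hat\alpha_{CD}=\hat A^{r}_{C}\hat A^{m}_{D}/\hat\theta$ from Eq.~(\ref{eq-alphaCDestimate}), a direct algebraic check gives $\alpha^{m}(\hat\alpha)=\hat\alpha^{m}$, $\alpha^{r}(\hat\alpha)=\hat\alpha^{r}$, and $\theta(\hat\alpha)=\hat\theta$. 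All three pieces of the decomposition are simultaneously maximized at $\hat\alpha$, and hence $\hat\alpha$ is an MLE.

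The main obstacle is the factorization step: the ``silent'' event $\{x^{m}=0\}$ aggregates many underlying link-state configurations, and one has to argue that its probability simultaneously equals $1-\theta$, $P^{m}(0;\alpha^{m})$, and $P^{r}(0;\alpha^{r})$, while for $x^{m}\neq 0$ the joint probability collapses cleanly into the product $p^{m}(x^{m})p^{r}(x^{r})/\theta$. Once this clean decomposition is in hand, the remainder of the argument is essentially bookkeeping on top of the MINC and RMINC machinery and the already-proven Lemma~\ref{thr-gammasEqual}.
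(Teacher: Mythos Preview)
Your approach is correct and takes a genuinely different route from the paper. The paper (via Lemmas~\ref{thr-bottomLinks}--\ref{thr-topLinks} and Appendix~A.3--A.4) works at the level of \emph{likelihood equations}: it writes out $\partial\mathcal{L}/\partial\alpha_k$ for the original tree, groups the sum into the same three pieces that appear in $\partial\mathcal{L}^m/\partial\alpha^m_k$, uses the count-matching relation Eq.~(\ref{eq-nRelation}) to identify the terms one by one, and concludes that the stationary points coincide; it then separately verifies $\partial^2\mathcal{L}/\partial\alpha_{CD}^2<0$ to certify that the stationary point in $\alpha_{CD}$ is a maximum. You bypass all first- and second-order calculus by establishing the exact identity $\mathcal{L}(\alpha)=\mathcal{L}^m(\alpha^m)+\mathcal{L}^r(\alpha^r)-g(\theta)$ and bounding each summand from above over its full parameter range, so global optimality comes for free. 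The crux of your argument, which the paper never isolates, is that the three separate upper bounds are \emph{simultaneously} attainable: this hinges on the MINC moment-matching property $\Gamma^m(\hat\alpha^m)=\hat\gamma^m$ (and its RMINC analogue), which forces $\theta(\hat\alpha^m)=\theta(\hat\alpha^r)=\hat\theta$ and makes the pair $(\hat\alpha^m,\hat\alpha^r)$ lie in the image of $\alpha\mapsto(\alpha^m,\alpha^r)$. Your factorization is indeed valid (one checks $p^m(x^m)p^r(x^r)/\theta=\alpha_{CD}\,q^r(x^r)\,q^m(x^m)=p(x)$ for $x\neq 0$, and $p^m(0)=p^r(0)=1-\theta=p(0)$ directly). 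In return, the paper's derivative-matching gives a bit more structural information, useful for instance when analyzing the Fisher information. Both approaches defer the check that $\hat\alpha_{CD}\in(0,1]$ to the same asymptotic argument.
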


The proof of Theorem~\ref{thr-MLE} relies on the following two lemmas, whose proofs are provided in Appendix A.3. (Theorem~\ref{thr-MLE} is then proved in Appendix A.4.)
\begin{lemma}
\label{thr-bottomLinks}
The solutions of the likelihood equations of the original tree and the reduced multicast tree are related via: (i) $\hat{\alpha}_k={\hat{\alpha}_k}^m$, $k<D$; and (ii) $\hat{\alpha}_{CD} = \hat{\alpha}^m_{agg}/(1-\prod_{i=1}^{P} \overline \beta^r_{f(C)_i})$.
\end{lemma}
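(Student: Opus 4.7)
The plan is to decompose the original log-likelihood as the sum of the reduced multicast tree's log-likelihood plus a residual that is independent of the below-$D$ parameters, and then match partial derivatives via the chain rule. Let $\pi:\Omega\to\Omega^m$ denote the reduction map that sends $x=(x_{R_t})_{t=1}^N$ to the binary pattern $x^m$ with $x^m_{R_t}=\mathbf{1}[x_{R_t}\neq [0,\dots,0]]$; by construction $p^m(x^m;\alpha^m)=\sum_{\pi(x)=x^m}p(x;\alpha)$ and $n^m(x^m)=\sum_{\pi(x)=x^m}n(x)$, consistently with Eq.~(\ref{eq-nRelation}).

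The key step is to write $p(x;\alpha)=p^m(x^m;\alpha^m)\,q(x\mid x^m;\alpha_{\mathrm{up}})$, where $\alpha_{\mathrm{up}}$ collects the success rates of the links strictly above $C$ and $q(\cdot\mid x^m;\cdot)$ is the conditional law of $x$ given $x^m$. The point is that $q$ depends only on $\alpha_{\mathrm{up}}$: if $x^m$ is all-zero then $q=1$; otherwise at least one receiver observed a non-zero vector, which forces $Y_D\neq [0,\dots,0]$, and in turn forces link $CD$ to have worked with $Y_D=Y_C$, so the conditional law of $Y_D$ coincides with that of $Y_C$ given $Y_C\neq [0,\dots,0]$, a distribution determined purely by $\alpha_{\mathrm{up}}$. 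Moreover, since the sub-tree rooted at $D$ is a loss-only multicast of $Y_D$, every receiver with $x^m_{R_t}=1$ observes the identical vector $Y_D$, so $q$ involves neither $\alpha_{CD}$ nor any $\alpha_k$ with $k<D$. Summing logarithms over the $n$ experiments yields $\mathcal{L}(\alpha)=\mathcal{L}^m(\alpha^m)+\mathcal{Q}(\alpha_{\mathrm{up}})$, with $\mathcal{Q}$ annihilated by $\partial/\partial\alpha_k$ for every $k\le D$.

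With this decomposition in hand both claims follow from the chain rule applied to the map $\alpha\mapsto\alpha^m$ of Eq.~(\ref{eq:mt}). For $k<D$, $\alpha_k$ enters $\alpha^m$ only through the coordinate $\alpha^m_k=\alpha_k$ (the aggregate $\alpha^m_{\mathrm{agg}}$ is a function of $\alpha_{CD}$ and $\alpha_{\mathrm{up}}$ alone), so $\partial_{\alpha_k}\mathcal{L}=\partial_{\alpha^m_k}\mathcal{L}^m$; the two likelihood equations therefore share the same root in $(0,1]$ and $\hat\alpha_k=\hat\alpha^m_k$, giving (i). For (ii), $\alpha_{CD}$ enters only through $\alpha^m_{\mathrm{agg}}$, and $\partial_{\alpha_{CD}}\mathcal{L}=\partial_{\alpha^m_{\mathrm{agg}}}\mathcal{L}^m\cdot\bigl(1-\prod_{i=1}^{P}\overline\beta^r_{f(C)_i}\bigr)$; since all link losses are strictly below one the multiplicative factor is positive, so the two equations are equivalent and solving $\hat\alpha^m_{\mathrm{agg}}=\hat\alpha_{CD}\bigl(1-\prod_i\overline\beta^r_{f(C)_i}\bigr)$ for $\hat\alpha_{CD}$ yields the displayed formula.

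The main obstacle is pinning down the factorization of $p(x;\alpha)$, and in particular verifying that $q$ has no hidden dependence on $\alpha_{CD}$ or on the below-$D$ link rates. This hinges on two structural facts that should be stated explicitly: conditioning on $Y_D\neq 0$ deterministically gives $Y_D=Y_C$ and erases $\alpha_{CD}$ from the conditional law of $Y_D$; and the below-$D$ sub-tree is a loss-only multicast, so the binary ``which receivers got something'' pattern $x^m$ already records all the stochasticity introduced below $D$. Once these two observations are made rigorous, the remainder of the argument is a routine chain-rule computation.
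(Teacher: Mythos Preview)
Your argument is correct and takes a genuinely different route from the paper's proof. The paper never writes the factorization $p(x;\alpha)=p^m(\pi(x);\alpha^m)\,q(x\mid x^m;\alpha_{\mathrm{up}})$; instead it proceeds computationally, borrowing from \cite{minc} the partition of $\Omega^m$ into the regions $\Omega^m(k)$, $\Omega^m(f^i(k))\setminus\Omega^m(f^{i-1}(k))$, and $(\Omega^m(0))^c$ on which $\partial\log p^m/\partial\alpha_k^m$ is constant, writes out the explicit derivative formulas in terms of the $\beta^m$'s, then repeats the same partition for the original tree's likelihood and matches the two term by term using the count identities $\sum_{x\in\Omega^m(k)}n(x)=\sum_{x^m\in\Omega^m(k)}n^m(x^m)$. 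For part~(ii) it compares $\partial\mathcal{L}^m/\partial\alpha^m_{\mathrm{agg}}$ and $\partial\mathcal{L}/\partial\alpha_{CD}$ directly via the explicit expressions for $\overline\beta^m_0$ and $\overline\beta^m_C$.

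Your approach is cleaner: the single structural observation that the conditional law of the \emph{content} of the received vectors given the \emph{pattern} $x^m$ depends only on the upstream parameters makes the chain rule do all the work, and it explains \emph{why} the derivative identities of the paper hold rather than verifying them by inspection. The paper's route, on the other hand, stays closer to the MINC machinery and makes the connection to Eq.~(\ref{eq-MCbeta1})--(\ref{eq-MCmainRelation}) more explicit, which is convenient for the subsequent computation of $\hat\alpha_{CD}$ in Appendix~A.2. One point you should tighten: when you say ``the two likelihood equations therefore share the same root,'' you are implicitly passing from individual partial derivatives to the full \emph{systems}; the clean statement is that if $\hat\alpha$ solves the original system then $\alpha^m(\hat\alpha)$ solves the reduced multicast system, and then uniqueness of the MINC critical point in $(0,1]$ (from \cite{minc}) closes the argument.
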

\begin{lemma}
\label{thr-topLinks}
The solutions of the likelihood equations of the original tree and the reduced reverse multicast tree are related via: (i) $\hat{\alpha}_k={\hat{\alpha}_k}^r$, $k>C$; and (ii) $\hat{\alpha}_{CD} = \hat{\alpha}^r_{agg}/(1-\prod_{j=1}^{Q} \overline \beta^m_{d(D)_j})$.
\end{lemma}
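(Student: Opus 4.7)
The plan is to establish Lemma \ref{thr-topLinks} as the dual of Lemma \ref{thr-bottomLinks}, leveraging the reversibility property (Theorem \ref{th_reverse} and its Corollary). The original tree $G$ has all coding points above all branching points; its dual $G^d$ (reverse every link, swap $S$ with $R$) has all coding points above all branching points as well, with the roles of $C$ and $D$ swapped: in $G^d$, what was the last coding point $C$ becomes the first branching point (with $Q=1$ parent, namely $D$, and multiple children that were the parents of $C$ in $G$), and $D$ becomes the last coding point (with multiple parents that were the children of $D$ in $G$, and a single child $C$). By the corollary to Theorem \ref{th_reverse}, the likelihood function of $G$ and that of $G^d$ coincide under the canonical bijection between outcomes, so their MLEs are identical as functions of the observed proportions.

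First, I would apply Lemma \ref{thr-bottomLinks}(i) to $G^d$: for every link $k$ that lies below the new last coding point in $G^d$, the MLE equals the corresponding MINC estimate on the multicast reduction of $G^d$. Translating back to $G$, these are precisely the links $k>C$, and the multicast reduction of $G^d$ is the reverse multicast reduction of $G$; hence $\hat{\alpha}_k = \hat{\alpha}^r_k$ for every $k>C$, which is part (i).

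Second, I would apply Lemma \ref{thr-bottomLinks}(ii) to $G^d$, which gives $\hat{\alpha}_{CD}= \hat{\alpha}^{m,d}_{agg}/(1-\prod_{i=1}^{P^d}\overline\beta^{r,d}_{f^d(C^d)_i})$. In $G^d$, the parents of the new last coding point (i.e., of $D$ viewed from the dual perspective) are exactly the nodes that were the children of $D$ in $G$, so $P^d=Q$ and $f^d(C^d)_i=d(D)_j$; moreover, the $\beta^{r,d}$ recursion on the ``upper'' part of $G^d$ is identical to the $\beta^m$ recursion on the lower part of $G$ because the defining equations \eqref{eq-MCbeta1}--\eqref{eq-MCbeta2} and \eqref{eq-RMCbeta1}--\eqref{eq-RMCbeta2} have the same functional form and are supplied with the same link success rates. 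Finally, $\hat{\alpha}^{m,d}_{agg}=\hat{\alpha}^r_{agg}$ by the reduction bijection, yielding the claimed expression for $\hat{\alpha}_{CD}$.

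Alternatively, one can repeat the factorization argument used for Lemma \ref{thr-bottomLinks} verbatim with the roles of ``multicast'' and ``reverse multicast'' interchanged: partition $\mathcal{L}(\alpha)$ in \eqref{eq-likelihoodSum} so that the terms depending on $\{\alpha_k:k>C\}$ and on the composite parameter $\alpha_{CD}(1-\prod_j \overline\beta^m_{d(D)_j})$ coincide, up to a constant, with the log-likelihood of the reduced reverse multicast tree parametrized by $\alpha^r$, and apply the RMINC identification \eqref{eq-RMC2}. The main obstacle is to verify the decoupling of partial derivatives: one must check that $\prod_j \overline\beta^m_{d(D)_j}$ depends only on $\{\alpha_k:k<D\}$ (which follows from \eqref{eq-MCbeta1}--\eqref{eq-MCbeta2}), so that the likelihood equation $\partial\mathcal{L}/\partial\alpha_k=0$ for $k>C$ reduces to the RMINC equation on the reduced tree, while the equation for $\alpha_{CD}$ decouples into the $\alpha^r_{agg}$-equation multiplied by the Jacobian factor $(1-\prod_j\overline\beta^m_{d(D)_j})$. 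Given the symmetry already established through reversibility, this amounts to bookkeeping and is essentially dictated by the proof of Lemma \ref{thr-bottomLinks}.
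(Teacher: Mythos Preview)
Your proposal is correct. The paper's own proof is simply the note ``Proof of Lemma~\ref{thr-topLinks} is similar to the proof of Lemma~\ref{thr-bottomLinks} above,'' i.e., it repeats the direct partial-derivative computation of Appendix~A.3 with the roles of multicast and reverse multicast interchanged---which is precisely what you describe as your \emph{alternative} approach.

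Your \emph{primary} approach, reducing Lemma~\ref{thr-topLinks} to Lemma~\ref{thr-bottomLinks} via the duality of Theorem~\ref{th_reverse}, is a genuinely different route. It is more conceptual: rather than redoing the three-way split of $\mathcal{L}$ and matching derivatives term by term, you invoke reversibility to transport the already-proved lemma from $G$ to $G^d$, then translate the conclusion back. This buys you a shorter argument and makes transparent why the two lemmas must have the same form. The only care needed (which you handle) is that reversibility is applied twice---once to identify the likelihood equations of $G$ with those of $G^d$, and once to identify the multicast reduction of $G^d$ with (the dual of) the reverse-multicast reduction of $G$, so that $\hat{\alpha}^{m,d}_{agg}=\hat{\alpha}^r_{agg}$ and $\beta^{r,d}_{f^d(C^d)_i}=\beta^m_{d(D)_j}$. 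The paper's direct approach, by contrast, is self-contained and does not rely on Theorem~\ref{th_reverse}, at the cost of repeating essentially the same calculation.
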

We note that the likelihood functions of the original tree and the reduced multicast (or reverse multicast) tree are different. What the aforementioned lemmas establish is that these likelihood functions are maximized for the same values of their common variables.

{\bf Complexity.} Algorithm~\ref{alg-MLE} is very efficient. In the first two steps, it calls MINC and RMINC. MINC (and thus RMINC) is known to be efficient by exploiting the hierarchy of the tree topology to factorize the probability distribution and recursively compute the estimates. The computation at each node is at worst proportional to the depth of the tree \cite{minc}. The last step, $\hat{\alpha}_{CD}$, uses the estimates $\hat{A}_k,\hat{\gamma}_k$ already computed in the first two steps.

{\bf Rate of Convergence of the MLE.} We can provide the rate of convergence of $\hat{\alpha}$ to the true value $\alpha$. The Fisher information matrix at $\alpha$ based on $X_{(R)}$ is obtained from $\mathcal{I}_{jk}(\alpha)=-E\frac{\partial^2\mathcal{L}}{\partial \alpha_j \partial \alpha_k}(\alpha)$ \cite{minc}. We have that:

\begin{theorem}
\label{thr-convergenceRate}
$\mathcal{I}(\alpha)$ is non-singular, and as $n\rightarrow \infty$, $\sqrt{n}(\hat{\alpha}-\alpha)$ converges in distribution to $\mathcal{N}(0,\mathcal{I}^{-1}(\alpha))$.
\end{theorem}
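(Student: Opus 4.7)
The plan is to verify the standard regularity conditions under which the classical asymptotic theory for maximum likelihood estimation (as in Lehmann, Chapter 7, cited earlier) delivers both consistency and asymptotic normality of $\hat{\alpha}$, and then to derive non-singularity of $\mathcal{I}(\alpha)$ as a direct consequence of link identifiability (Theorem~\ref{theorem_1}, assumed to hold for every link).

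First, I would check smoothness and interiority. Because $\overline{\alpha}_e\in[0,1)$ by the paper's standing assumption, the true parameter $\alpha$ lies in the open set $\{\alpha\in(0,1]^{|E|}:\alpha_e>0\}$, and $p(x;\alpha)$ is a polynomial in the coordinates of $\alpha$ (it is a sum, over link-failure patterns consistent with the observation $x$, of products of $\alpha_e$ and $\overline{\alpha}_e$). Hence $\mathcal{L}(\alpha)$ is infinitely differentiable, the interchange of differentiation and expectation used to define $\mathcal{I}_{jk}(\alpha)=-E\,\partial^2\mathcal{L}/\partial \alpha_j\partial \alpha_k$ is justified on a finite outcome space $\Omega$, and the other textbook regularity conditions (dominated differentiation, square-integrability of the score) hold trivially.

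Next, the non-singularity of $\mathcal{I}(\alpha)$. I would argue by contradiction: suppose there exists a non-zero vector $v\in\mathbb{R}^{|E|}$ with $v^{T}\mathcal{I}(\alpha)v=0$. Writing
\begin{equation}
\mathcal{I}(\alpha)=E\!\left[\bigl(\nabla_{\alpha}\log p(X_{(R)};\alpha)\bigr)\bigl(\nabla_{\alpha}\log p(X_{(R)};\alpha)\bigr)^{T}\right],
\end{equation}
this forces $v^{T}\nabla_{\alpha}\log p(x;\alpha)=0$ for every $x$ in the support of $P_{\alpha}$. Since $p(x;\alpha)$ is a polynomial (hence analytic) in $\alpha$, integrating along the direction $v$ implies $p(x;\alpha+tv)=p(x;\alpha)$ for all $x$ and all small $t$, so $P_{\alpha+tv}=P_{\alpha}$. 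This contradicts identifiability of every link (Theorem~\ref{theorem_1}), which says $P_{\alpha'}=P_{\alpha}$ implies $\alpha'_{e}=\alpha_{e}$ for every $e\in E$. Hence $\mathcal{I}(\alpha)$ is non-singular.

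For the asymptotic normality, the observations $X^{1}_{(R)},\ldots,X^{n}_{(R)}$ are i.i.d.\ from a distribution on the finite alphabet $\Omega$, so the empirical frequencies $\hat{p}(x)=n(x)/n$ converge to $p(x;\alpha)$ almost surely. Using identifiability together with the continuity of $\alpha\mapsto P_{\alpha}$ and the compactness of the closure of the parameter region, $\hat{\alpha}$ is consistent; in particular, with probability tending to one $\hat{\alpha}$ lies in the interior and satisfies the likelihood equation. Then the standard Taylor expansion of the score around $\alpha$ together with the CLT for the score (which has mean zero and covariance $\mathcal{I}(\alpha)$) and the continuous mapping theorem applied to the observed information (which converges to $\mathcal{I}(\alpha)$ by the law of large numbers) yields $\sqrt{n}(\hat{\alpha}-\alpha)\Rightarrow\mathcal{N}(0,\mathcal{I}^{-1}(\alpha))$. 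The crux of the whole argument is the non-singularity step; everything else is a routine verification of textbook hypotheses, and this is where the graph-theoretic identifiability from Theorem~\ref{theorem_1} plays its essential role in the asymptotic statement.
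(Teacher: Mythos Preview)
Your overall plan---verify the textbook regularity conditions for i.i.d.\ observations on a finite alphabet, then invoke the standard MLE asymptotics---is exactly what the paper does, only the paper compresses it to a single sentence citing \cite{minc,statistics}. The smoothness, interiority, and Taylor-expansion parts of your write-up are fine and more explicit than the paper.

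There is, however, a genuine gap in your non-singularity argument. From $v^{T}\mathcal{I}(\alpha)v=0$ you correctly deduce that $v^{T}\nabla_{\alpha}\log p(x;\alpha)=0$ for every $x$, i.e.\ the directional derivative of each $p(x;\cdot)$ in direction $v$ vanishes \emph{at the single point} $\alpha$. But the next sentence---``integrating along the direction $v$ implies $p(x;\alpha+tv)=p(x;\alpha)$ for all small $t$''---does not follow: a function (even a polynomial) can have zero derivative at one point without being constant along the line (think $t\mapsto t^2$). So you have not produced a distinct $\alpha'=\alpha+tv$ with $P_{\alpha'}=P_{\alpha}$, and Theorem~\ref{theorem_1} cannot be invoked as stated. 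What you have actually shown is failure of \emph{local} (first-order) identifiability, whereas Definition~\ref{def_ident} is a global condition; global injectivity of $\alpha\mapsto P_{\alpha}$ does not by itself force the Jacobian to have full rank.

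To close the gap you can either (i) appeal to Rothenberg's identification theorem, which under your regularity conditions gives the equivalence between local identifiability and non-singular Fisher information, together with the observation that global identifiability implies local identifiability; or (ii) mimic the direct computation in \cite{minc}, where non-singularity is established for the multicast tree by exhibiting an explicit factorization, and extend it via the MINC/RMINC decomposition of Section~\ref{sec-MLE}. The paper implicitly relies on route~(ii) through its citation of \cite{minc}.
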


The proof follows from the asymptotic properties of the MLEs \cite{minc, statistics}. Therefore, asymptotically for large $n$, with probability $1-\delta$ (for $1-\delta$ confidence interval), $\hat{\alpha}_k$ lies between the points:\footnote{$z_{\delta / 2}$ denotes the number that cuts off an area $\delta / 2$ in the right tail of the standard normal distribution.}
\begin{equation}
\label{eq-confidenceInterval1}
\alpha_k \pm z_{\delta / 2} \sqrt{\frac{\mathcal{I}_{kk}^{-1}(\alpha)}{n}}
\end{equation}

\begin{example}

We now illustrate our results by revisiting the example 5-link tree topology in Fig.~\ref{figure_1}(c). Note that here, following the notation described in Section~\ref{sec-model}, we use the notation $\alpha_A$, $\alpha_B$, $\alpha_E$, and $\alpha_F$, for the four edge links in Fig.~\ref{figure_1}(c), instead of $\alpha_{AC}$, $\alpha_{BC}$, $\alpha_{DE}$, and $\alpha_{DF}$, respectively, which were used in Example~\ref{ex:ex1}.

{\bf Maximum Likelihood Estimator.} The two source nodes $A$ and $B$ send probe packets $x_1=[1,0]$ and $x_2=[0,1]$, respectively. The space $\Omega$ consists of ten possible outcomes shown in Table~\ref{tabl_1}. Table~\ref{tabl_1} also shows the corresponding outcomes for the reduced multicast and reverse multicast trees. From Eq.(\ref{eq-gamma-r}) and Eq.(\ref{eq-gamma-m}), we have that:
\begin{displaymath}
\hat{\gamma}^r_A =  \hat{p}_1+ \hat{p}_3+ \hat{p}_4+ \hat{p}_6 + \hat{p}_7+ \hat{p}_9 
\end{displaymath}
\begin{displaymath}
\hat{\gamma}^r_B =  \hat{p}_2+ \hat{p}_3+ \hat{p}_5+ \hat{p}_6+ \hat{p}_8+ \hat{p}_9
\end{displaymath}
\begin{displaymath}
\hat{\gamma}^r_C =  \hat{\gamma}^m_D = \hat{p}_1+\hat{p}_2+\hat{p}_3+\hat{p}_4+\hat{p}_5+\hat{p}_6+\hat{p}_7+\hat{p}_8+\hat{p}_9= 1-\hat{p}_0
\end{displaymath}
\begin{displaymath}
\hat{\gamma}^m_E = \hat{p}_1+ \hat{p}_2+ \hat{p}_3+ \hat{p}_7+ \hat{p}_8+ \hat{p}_9 
\end{displaymath}
\begin{displaymath}
\hat{\gamma}^m_F =  \hat{p}_4+ \hat{p}_5+ \hat{p}_6+ \hat{p}_7+ \hat{p}_8+ \hat{p}_9
\end{displaymath}
We then solve Eq.(\ref{eq-MC1}) for $\hat{A}^m_k$ and Eq.(\ref{eq-RMC1}) for $\hat{A}^r_k$, and then we find $\hat{\alpha}_A$ and $\hat{\alpha}_B$ from Eq.(\ref{eq-RMC2}), $\hat{\alpha}_E$ and $\hat{\alpha}_F$ from Eq.(\ref{eq-MC2}), and $\hat{\alpha}_{CD}$ from Eq.(\ref{eq-alphaCDestimate}), as follows:
\begin{equation}
\label{eq-alphaAandB}
\hat{\alpha}_{A} = \frac{\hat{\gamma}^r_A+\hat{\gamma}^r_B-\hat{\gamma}^r_C}{\hat{\gamma}^r_B} \quad , \quad \hat{\alpha}_{B} = \frac{\hat{\gamma}^r_A+\hat{\gamma}^r_B-\hat{\gamma}^r_C}{\hat{\gamma}^r_A}
\end{equation}
\begin{equation}
\label{eq-alphaEandF}
\hat{\alpha}_{E} = \frac{\hat{\gamma}^m_E+\hat{\gamma}^m_F-\hat{\gamma}^m_D}{\hat{\gamma}^m_F} \quad , \quad \hat{\alpha}_{F} = \frac{\hat{\gamma}^m_E+\hat{\gamma}^m_F-\hat{\gamma}^m_D}{\hat{\gamma}^m_E}
\end{equation}
\begin{equation}
\label{eq-alphaCD}
\hat{\alpha}_{CD} = \frac{\hat{\gamma}^r_A \hat{\gamma}^r_B \hat{\gamma}^m_E \hat{\gamma}^m_F}{\hat{\gamma}^m_D (\hat{\gamma}^r_A+\hat{\gamma}^r_B-\hat{\gamma}^r_C)(\hat{\gamma}^m_E+\hat{\gamma}^m_F-\hat{\gamma}^m_D)}
\end{equation}

{\bf Confidence Intervals.} Fig.~\ref{fig-equation2} shows $\mathcal{I}^{-1}(\alpha)$ for the confidence intervals in Eq.(\ref{eq-confidenceInterval1}). We note that the confidence intervals for parameters $\hat{\alpha}$ can be obtained by inserting Eq.(\ref{eq-alphaAandB}), Eq.(\ref{eq-alphaEandF}), and Eq.(\ref{eq-alphaCD}) into Fig.~\ref{fig-equation2}.
\hfill{$\square$}
\end{example}

\subsection{MLE of a Single Link} \label{sec-estimation-onelink}

Section~\ref{sec-estimation1} provides a computationally efficient way to estimate all link loss rates at the same time, under the mild assumption that the tree is of the form depicted in Fig.~\ref{fig-generalTreeModel}. If one is allowed to pick the sources and the receivers in the tree, then one can ensure that this mild assumption holds. 

However, there are practical scenarios where one might not want to or might not be able to use this scheme. First, if we are not allowed  to choose the sources, \eg due to practical constraints, it is possible that the monitoring scheme does not have the desired property of Fig.~\ref{fig-generalTreeModel}, \ie all coding points may not be above all branching points. An example is Case 3 in the 5-link topology of Fig.~\ref{fig_basic}: all links are still identifiable, but the assumption does not hold and the MLE provided in the previous Section does not apply. Second, we are often not even interested in estimating the loss rates for all links; it is common that only one or a few bottleneck/congested links are of interest. In general topologies, focusing on a few, as opposed to all, links has the side benefit that we may not need to deal with cycles, if they do not appear in the paths that go through the links of interest.

In all these cases, we propose that one estimates the loss rate of one link at a time. Recall the discussion in Section \ref{sec-identify-one-link}. The conditions for identifiability of a link (say link $CD$ in Theorem \ref{theorem_1}) still apply, while the other four links  $AC$, $BC$, $DE$, and $DF$ in the 5-link topology can be interpreted as paths from/to the sources/receivers; \ie we do not care about the individual link loss rates on these paths. Depending on the constraints on the selection of sources, any of the 4 cases in the 5-link topology of Fig.~\ref{fig_basic} may be possible. We note that Table I and Algorithm~\ref{alg-MLE} correspond to Case 1 in the 5-link topology. Tables for the other three cases are provided in Appendix B.1.

In fact, similar MLE algorithms can be provided for all other 3 cases. For example, MINC and RMINC can be used for Cases 2 and 4 directly. Only Case 3 needs to be estimated similarly to Case 1 using reductions and Table VI. For Case 3, the reduced multicast tree will consist of $AC$, $CB$, $CD$ and $DF$. We use MINC on this tree to infer the loss rates $\overline \alpha_{AC}$ and $\overline \alpha_{CB}$. The reduced reverse multicast tree will consist of $AC$, $CD$, $ED$ and $DF$. We use RMINC on this tree to infer the loss rates $\overline \alpha_{ED}$ and $\overline \alpha_{DF}$. We can then replace these results in the likelihood function and find $\overline \alpha_{CD}$ by maximizing it. In general, an algorithm similar to Alg.~\ref{alg-MLE} can be developed to compute the MLE for the single link of interest: we first compute  MINC on the reduced multicast tree, then RMINC on the reduced reverse multicast tree, and then we estimate link $CD$ using a similar procedure as in Appendix A.2.

{\em Remarks:} Note that even when we focus on estimating a single link, the brute force approach appears to be computationally demanding even though it involves only 5 variables. Therefore, the efficient computation of the MLE for a single link is an important contribution on its own.  

\begin{algorithm}[t!]
\caption{\label{alg-subtree} Subtree Decomposition Algorithm:}
{\footnotesize
Consider a tree $G$, with sources $S$ and receivers $R$. Each source sends one probe packet. Each receiver receives at most one probe packet.
\begin{itemize}
\item Determine the coding points. These  partition $G$ into $|\mathcal{T}|\leq 2M-1$ subtrees.
\item  For each of the $|\mathcal{T}|$ subtrees:
\begin{itemize}
\item {\em If the multicast tree is rooted at a coding point:}
\begin{itemize}
\item if any of the descendant receivers receives a probe,
use this experiment as a measurement on the subtree.
\item otherwise, w.p. $p$ assume no node in $R$ received a probe packet, and w.p. $(1-p)$ ignore the experiment.
\end{itemize}
\item {\em If the multicast tree is rooted at a source $S_i$:}\\
Consider each coding point $C$ that acts as a receiver:
\begin{itemize}
\item if no descendant receivers $C(R)$ observed a
probe, assume,  w.p. $p$, that $C$ received a packet, and w.p. $(1-p)$, that it did not.
\item otherwise
\begin{itemize}
\item if at least one of $C(R)$ observed a linear combination of $x_i$,
deduce that  $C$ received $x_i$.
\end{itemize}
\end{itemize}
\end{itemize}
\end{itemize}
}
\end{algorithm}

\subsection{Heuristic Approaches for Loss Estimation} \label{sec-estimation2}

Beyond tree topologies, there is no known computationally efficient algorithm to compute the MLE of {\em all} link loss rates. In this Section,  we propose three heuristic estimation algorithms and evaluate their performance through simulation. The first two (subtree decomposition and MINC-like heuristic, in Sections \ref{sec:subtree} and \ref{sec:minc-like}, respectively) are specific to trees, while the third one (belief propagation, in Section \ref{sec:trees-bp}) applies also to general graphs.

\subsubsection{\label{sec:subtree}{\bf Subtree Decomposition}}

Algorithm~\ref{alg-subtree} partitions the tree into multicast subtrees separated by
coding points. Each coding point virtually acts as a receiver for
incoming flows and as a source for outgoing flows. As a result, each
subtree will either have a coding point as its source, or will have
at least one coding point as a receiver. In each subtree, we can
then use the ML estimator (MINC) proposed in \cite{minc}.

Note that we can only observe packets received at the edge of the
network, but not at the coding points. However, we can still infer
that information from the observations at the receivers downstream
from the coding point. The fact that we infer observations of the coding-points
from the observations of the leaves is what makes this
algorithm suboptimal, while MINC in each partition is optimal.

We introduce the probability $p$ in order to account for the fact that
if none of the receivers in $C(R)$ receives a packet,
this might be attributed to two distinct events: either
the coding point $C$ itself did not receive a packet,
or $C$ did receive a packet, which got subsequently lost
in the descendent edges. For example, in Fig.~$\ref{fig_ninelinks}$, consider the tree rooted
at $S_1$; if $R_2$ receives $x_1$ or $x_1+x_2$, we deduce that $x_1$
was received at node $4$. If $R_2$ receives $x_2$, we deduce that
$x_1$ was not received at node $4$. If $R_2$ does not receive a
probe packet, then, with probability $1-p$, we assume that $4$ did not
receive a probe packet. Ideally, $p$ should match the probability that $C$ correctly received a probe packet. This depends on the graph structure and on the loss probabilities downstream of $C$, and possibly prior information we may have about the link loss rates.

\begin{figure}[t!]
\centering
\vspace{-1.0em}
\psset{unit=0.038in}
\begin{pspicture}(5,5)(53,48)
\begin{small}
\rput(5,5){\circlenode[linecolor=red]{N10}{10}}\rput(-2,5){\large{$R_4$}}
\rput(5,25){\circlenode[linecolor=red]{N9}{9}}\rput(-1,25){\large{$R_3$}}
\rput(13,15){\circlenode[linecolor=red]{N6}{6}}
\rput(28,15){\circlenode[linecolor=red]{N5}{5}}
\rput(38,5){\circlenode[linecolor=red]{N2}{2}}
\rput(44,5){\large $S_2$}
\rput(28,30){\circlenode[linecolor=green]{N4}{4}}
\rput(18,40){\circlenode[linecolor=green]{N8}{8}}\rput(12,40){\large{$R_2$}}
\rput(43,30){\circlenode[linecolor=blue]{N3}{3}}
\rput(53,20){\circlenode[linecolor=blue]{N1}{1}}
\rput(59,20){\large $S_1$}
\rput(53,40){\circlenode[linecolor=blue]{N7}{7}}
\rput(59,40){\large{$R_1$}}
\ncline[linecolor=red]{<-}{N9}{N6}
\Bput{$\alpha_6$}
\ncline[linecolor=red]{<-}{N10}{N6}
\Bput{$\alpha_5$}
\ncline[linecolor=red]{<-}{N6}{N5}
\Aput{$\alpha_9$}
\ncline[linecolor=red]{->}{N5}{N4}
\Bput{$\alpha_8$}
\ncline[linecolor=green]{->}{N4}{N8}\Aput{$\alpha_1$}
\ncline[linecolor=blue]{<-}{N4}{N3}\Aput{$\alpha_7$}
\ncline[linecolor=blue]{->}{N3}{N7}\Aput{$\alpha_2$}
\ncline[linecolor=blue]{<-}{N3}{N1}\Aput{$\alpha_3$}
\ncline[linecolor=red]{<-}{N5}{N2}\Aput{$\alpha_4$}
\end{small}
\end{pspicture}
\caption{A network topology with $9$ links. The link orientation depicted corresponds to nodes $1$ and $2$ acting as sources of probes.} 
\label{fig_ninelinks}
\end{figure}
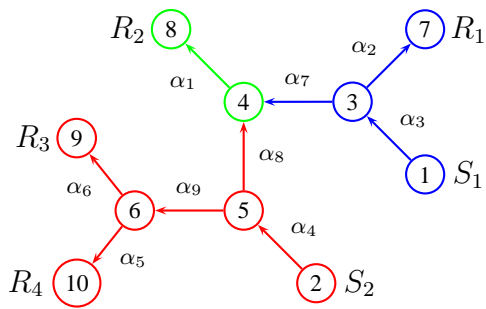

\subsubsection{\label{sec:minc-like}{\bf MINC-like Heuristic}}

For every multicast node, we can use the MINC algorithm described in \cite{minc}. For every
coding point, we can use RMINC described in Section~\ref{sec-MINCandRMINC}. 

Similarly to the subtree decomposition, we infer which probes have been received by an interior node $i$ from observations at the downstream receivers. In particular, if at least one receiver downstream of $i$ has received a probe with any content (the probe is from at least one source and potentially contains the \texttt{XOR} of probes from multiple sources), then we can infer that $i$ received the packet. This can be used to compute the probability $\gamma_i$, in the terminology of MINC \cite{minc}. If no downstream receiver got any probe, we decide w.p. $p$ whether the node $i$ received a probe or not, exactly the same as in the subtree decomposition. The reductions shown in Fig.~\ref{fig:reductions} use similar arguments and can serve as examples. 

Different from the subtree decomposition, which estimates the $\alpha$'s locally in each subtree, we use the mapping from $\gamma$'s to $\alpha$'s provided in MINC \cite{minc} to estimate the $\alpha$'s in the entire graph. This heuristic is optimal for multicast and reverse multicast configurations, and for configurations that are concatenations of the two, but suboptimal for any other configuration.

\subsubsection{\label{sec:trees-bp} {\bf Belief Propagation}}

We propose to use a Belief Propagation (BP) approach, similar to what was proposed in \cite{belief}.
Unlike the previous two heuristics, which are specific to tree topologies, the BP approach also applies to general graphs.
The first step in the BP approach is to create the factor graph corresponding to our estimation problem.
Fig.~\ref{fig_bipartite} shows the factor graph corresponding to the 9-link tree shown in Fig. \ref{fig_ninelinks}.
This is a bipartite graph: on one side there are the links (variable nodes),
whose loss rates we want to estimate; on the other side there are
the paths (function nodes) that are observed by each received probe.
An edge exists in the factor graph between a link and a path, if the
link belongs to this path in the original graph. Note that in tree
topologies, there exists exactly one path for every source-receiver pair;
while this is not the case in general graphs. Once the factor graph is created from the original graph, each
received probe triggers message passing and results in an estimate
of link success probabilities; these estimates from different probes
are then combined using standard methods \cite{belief}. The result
is an estimate ($\hat{\alpha}_e$) of the actual success probability
($\alpha_e$) of every link $e \in E$.

\subsection{\label{sec:trees-simulations}Simulation Results}

In this Section, we evaluate the heuristic estimators via simulation and we compare
them to each other as well as to multicast-based tomography. The main finding is that using more than one source helps: using multiple sources and network coding (even with suboptimal estimation) outperforms a single multicast tree (even with optimal estimation), thus demonstrating the usefulness of our approach.\footnote{Note that using more than one multicast sources, without network coding, would traditionally require to combine the observations from the two trees in a suboptimal way \cite{general}, thus further degrading the performance; that is why we skip the comparison and compare only against a single multicast tree and optimal estimation, which has the best performance among the baselines.}

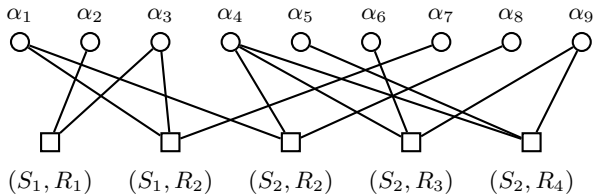
\begin{figure}[t!]
\begin{center}
\psset{unit=0.035in}
\begin{pspicture}(6,-5)(89,18)
\begin{small}
\rput(10,0){\rnode{W1}{\psframebox[]{}}}
\rput(10,-6){$(S_1,R_1)$}
\rput(28,0){\rnode{W2}{\psframebox[]{}}}
\rput(28,-6){$(S_1,R_2)$}
\rput(46,0){\rnode{W3}{\psframebox[]{}}}
\rput(46,-6){$(S_2,R_2)$}
\rput(64,0){\rnode{W4}{\psframebox[]{}}}
\rput(64,-6){$(S_2,R_3)$}
\rput(82,0){\rnode{W5}{\psframebox[]{}}}
\rput(82,-6){$(S_2,R_4)$}
\rput(5.5,15){\circlenode{N1}{}}
\rput(5.5,19){$\alpha_1$}
\rput(16,15){\circlenode{N2}{}}\rput(16,19){$\alpha_2$}
\rput(26.5,15){\circlenode{N3}{}}\rput(26.5,19){$\alpha_3$}
\rput(37,15){\circlenode{N4}{}}\rput(37,19){$\alpha_4$}
\rput(47.5,15){\circlenode{N5}{}}\rput(47.5,19){$\alpha_5$}
\rput(58,15){\circlenode{N6}{}}\rput(58.5,19){$\alpha_6$}
\rput(68.5,15){\circlenode{N7}{}}\rput(68.5,19){$\alpha_7$}
\rput(79,15){\circlenode{N8}{}}\rput(79,19){$\alpha_8$}
\rput(89.5,15){\circlenode{N9}{}}\rput(89.5,19){$\alpha_9$}
\ncline[]{-}{W1}{N2}
\ncline[]{-}{W1}{N3}
\ncline[]{-}{W2}{N3}
\ncline[]{-}{W2}{N7}
\ncline[]{-}{W2}{N1}
\ncline[]{-}{W3}{N4}
\ncline[]{-}{W3}{N8}
\ncline[]{-}{W3}{N1}
\ncline[]{-}{W4}{N4}
\ncline[]{-}{W4}{N9}
\ncline[]{-}{W4}{N6}
\ncline[]{-}{W5}{N4}
\ncline[]{-}{W5}{N9}
\ncline[]{-}{W5}{N5}
\end{small}
\end{pspicture}
\end{center} 
\vspace{-0.25em}
\caption{Bipartite graph corresponding to the 9-link example
tree in Fig.~\ref{fig_ninelinks}. It indicates which edges belong to which
observable paths.} \label{fig_bipartite}
\vspace{-0.25em}
\end{figure}

Consider the 45-link topology shown in Fig.~\ref{fig_45links}, where all links have the same success rate $\alpha$. We will estimate $\alpha$ and compare different methods in terms of their estimation accuracy. First, we did simulations for $\alpha=0.7$, a large number of probes, and repeated for many experiments. We looked at the mean square error ($MSE$)  at each link. The results are shown in Fig.~\ref{fig_all45links} for the following three algorithms:
\begin{itemize}
\item a single multicast source $S_1$ and maximum likelihood estimation (top
plot).
\item two sources $S_1,S_2$, network coding at the
middle node $C$, and the MINC-like heuristic (middle plot).
\item the same two sources and coding point, with the subtree
estimation  algorithm (bottom plot).
\end{itemize}
Notice that in the case of two sources, the 45-link topology is partitioned into 3 subtrees: one rooted at $A$ (where probe $x_1$ flows), another rooted at $D$ (where $x_2$ flows) and a third one rooted at $B$ (where $x_1+x_2$ flows).

One can make several observations from this graph. First, using two sources and network coding, even with suboptimal estimators, performs better than using a single multicast source and an ML estimator. Indeed, the residual entropy (which is the metric that summarizes the $MSE$ across all 45 links) is lower for two sources with the MINC-like ($ENT=-317.9$) and for the subtree-decomposition ($ENT=-314.9$) heuristics, than it is for the single source MLE ($ENT=-294.5$). This illustrates the benefit of using multiple sources.  Second, notice that the $MSE$ for individual links is smaller in the lower two graphs than in the top graph, for all links except for links $43$, $44$, $45$, for which it is significantly higher. This is no coincidence: links $43$, $44$, $45$ are the middle ones (CA, CB, CD in Fig.~\ref{fig_45links}). This is due to the fact that we cannot directly observe the packets received at the coding point C and we have to infer them from observations at the leaves of the subtree rooted at B. The performance of the heuristics could further improve by using the following tweak: we could estimate what probes are received at C, using observations from leaves not only in the subtree rooted at B, but also from the subtrees rooted at A and D.

The above simulations were for a single value of $\alpha=0.7$. We then exhaustively considered several values of $\alpha$ (same on all links) and $n$ (the number of probes). The results are shown in Fig.~\ref{fig_45links_vs_all}. We can see that, even with suboptimal estimation, using two sources consistently outperforms a single multicast source, even with MLE estimation. This is apparent in Fig.~\ref{fig_45links_vs_all}, where the $ENT$ metric for the single source (drawn in bold lines) is consistently above the other two algorithms.\footnote{Two observations on the $ENT$ metric: First, the differences in the value of $ENT$ are significant, although this is not visually obvious; recall that $ENT$ is defined by taking the sum of the $log$ of the $MSE$'s. Second, $ENT$ can be $<0$, it is the differential entropy that matters.}

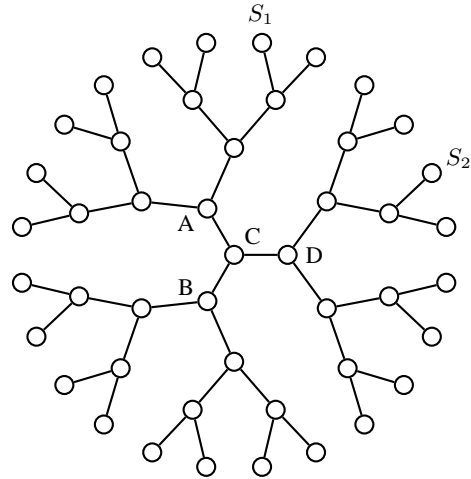
\begin{figure}
\vspace{1.0em}
\begin{center}
\psset{unit=0.028in}
\begin{pspicture}(-34,-40)(35,36)
\begin{small}
\rput(3.5,3.5){C}
\rput(0,0){\circlenode{C}{}}
\rput(-5,8.6603){\circlenode{A}{}}\rput(-9,6){A}
\rput(-5,-8.6603){\circlenode{B}{}}\rput(-9,-6){B}
\rput(10,0){\circlenode{D}{}}\rput(15,0){D}
\rput(0,-20){\circlenode{C1}{}}
\rput(-17.3205,-10.0){\circlenode{C2}{}}
\rput(-17.3205,10){\circlenode{C3}{}}
\rput(0,20){\circlenode{C4}{}}
\rput(17.3205,10){\circlenode{C5}{}}
\rput(17.3205,-10){\circlenode{C6}{}}
\rput(7.7646,-28.9778){\circlenode{D1}{}}
\rput(-7.7646,-28.9778){\circlenode{D2}{}}
\rput(-21.2132, -21.2132){\circlenode{D3}{}}
\rput(-28.9778, -7.7646){\circlenode{D4}{}}
\rput(-28.9778, 7.7646){\circlenode{D5}{}}
\rput(-21.2132,21.2132){\circlenode{D6}{}}
\rput(-7.7646,28.9778){\circlenode{D7}{}}
\rput(7.7646,28.9778){\circlenode{D8}{}}
\rput(21.2132,21.2132){\circlenode{D9}{}}
\rput(28.9778,7.7646){\circlenode{D10}{}}
\rput(28.9778, -7.7646){\circlenode{D11}{}}
\rput(21.2132,-21.2132){\circlenode{D12}{}}
\rput(15.3073, -36.9552){\circlenode{F1}{}}
\rput(5.2210, -39.6578){\circlenode{F2}{}}
\rput(-5.2210, -39.6578){\circlenode{F3}{}}
\rput(-15.3073, -36.9552){\circlenode{F4}{}}
\rput(-24.3505, -31.7341){\circlenode{F5}{}}
\rput(-31.7341, -24.3505){\circlenode{F6}{}}
\rput(-36.9552, -15.3073){\circlenode{F7}{}}
\rput(-39.6578, -5.2210){\circlenode{F8}{}}
\rput(-39.6578, +5.2210){\circlenode{F9}{}}
\rput(-36.9552, +15.3073){\circlenode{F10}{}}
\rput(-31.7341, +24.3505){\circlenode{F11}{}}
\rput(-24.3505, +31.7341){\circlenode{F12}{}}
\rput(-15.3073, +36.9552){\circlenode{F13}{}}
\rput(-5.2210,+39.6578){\circlenode{F14}{}}
\rput(5.2210, +39.6578){\circlenode{F15}{}}\rput(5,45){$S_1$}
\rput(15.3073, +36.9552){\circlenode{F16}{}}
\rput(24.3505, +31.7341){\circlenode{F17}{}}
\rput(31.7341, +24.3505){\circlenode{F18}{}}\rput(42,18){$S_2$}
\rput(36.9552, +15.3073){\circlenode{F19}{}}
\rput(39.6578,  5.22){\circlenode{F20}{}}
\rput(39.6578, -5.2210){\circlenode{F21}{}}
\rput(36.9552, -15.3073){\circlenode{F22}{}}
\rput(31.7341, -24.3505){\circlenode{F23}{}}
\rput(24.3505, -31.7341){\circlenode{F24}{}}
\ncline[]{-}{C}{A}
\ncline[]{-}{C}{B}
\ncline[]{-}{C}{D}
\ncline[]{-}{B}{C1}
\ncline[]{-}{B}{C2}
\ncline[]{-}{A}{C3}
\ncline[]{-}{A}{C4}
\ncline[]{-}{D}{C5}
\ncline[]{-}{D}{C6}
\ncline[]{-}{C1}{D1}
\ncline[]{-}{C1}{D2}
\ncline[]{-}{C2}{D3}
\ncline[]{-}{C2}{D4}
\ncline[]{-}{C3}{D5}
\ncline[]{-}{C3}{D6}
\ncline[]{-}{C4}{D7}
\ncline[]{-}{C4}{D8}
\ncline[]{-}{C5}{D9}
\ncline[]{-}{C5}{D10}
\ncline[]{-}{C6}{D11}
\ncline[]{-}{C6}{D12}
\ncline[]{-}{D1}{F1}
\ncline[]{-}{D1}{F2}
\ncline[]{-}{D2}{F3}
\ncline[]{-}{D2}{F4}
\ncline[]{-}{D3}{F5}
\ncline[]{-}{D3}{F6}
\ncline[]{-}{D4}{F7}
\ncline[]{-}{D4}{F8}
\ncline[]{-}{D5}{F9}
\ncline[]{-}{D5}{F10}
\ncline[]{-}{D6}{F11}
\ncline[]{-}{D6}{F12}
\ncline[]{-}{D7}{F13}
\ncline[]{-}{D7}{F14}
\ncline[]{-}{D8}{F15}
\ncline[]{-}{D8}{F16}
\ncline[]{-}{D9}{F17}
\ncline[]{-}{D9}{F18}
\ncline[]{-}{D10}{F19}
\ncline[]{-}{D10}{F20}
\ncline[]{-}{D11}{F21}
\ncline[]{-}{D11}{F22}
\ncline[]{-}{D12}{F23}
\ncline[]{-}{D12}{F24}
\end{small}
\end{pspicture}
\end{center} 
\vspace{-0.5em}
\caption{A tree with $45$ links used for simulating the suboptimal estimators.} \label{fig_45links}
\vspace{-0.5em}
\end{figure}

\begin{figure}[t!]
\centering
\includegraphics[scale=0.6]{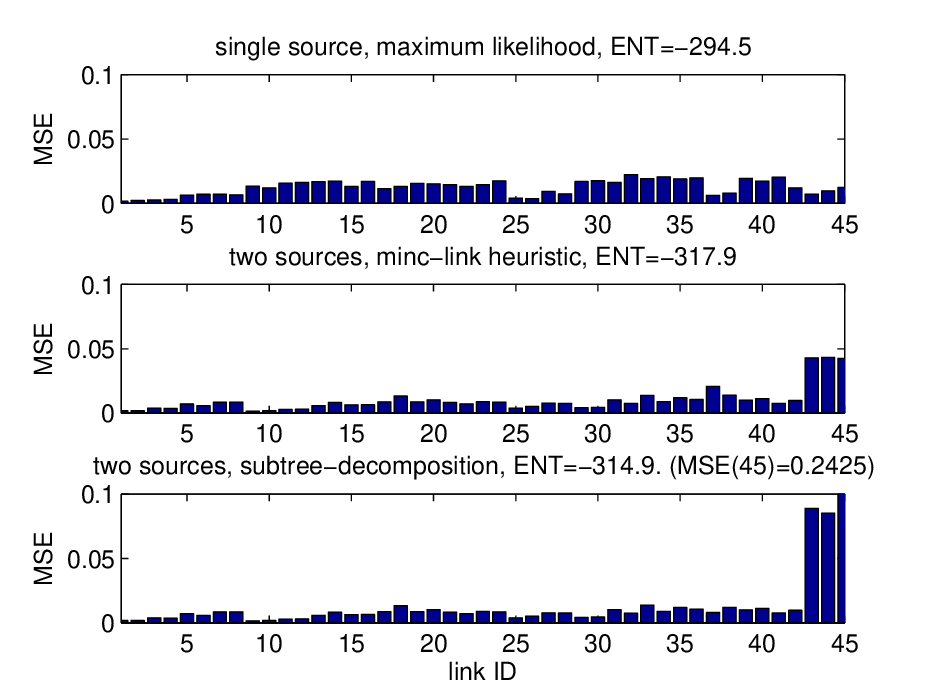}
\vspace{-2.0em}
\caption{Comparison of one multicast source + MLE vs. two sources + network coding + suboptimal estimation (subtree decomposition and MINC-like heuristic). We show the $MSE$ for each link in the 45-link topology.}
\label{fig_all45links}
\end{figure}

\begin{figure}
\centering
\subfigure[ENT vs. number of probes]{
  \includegraphics[width=2.5in]{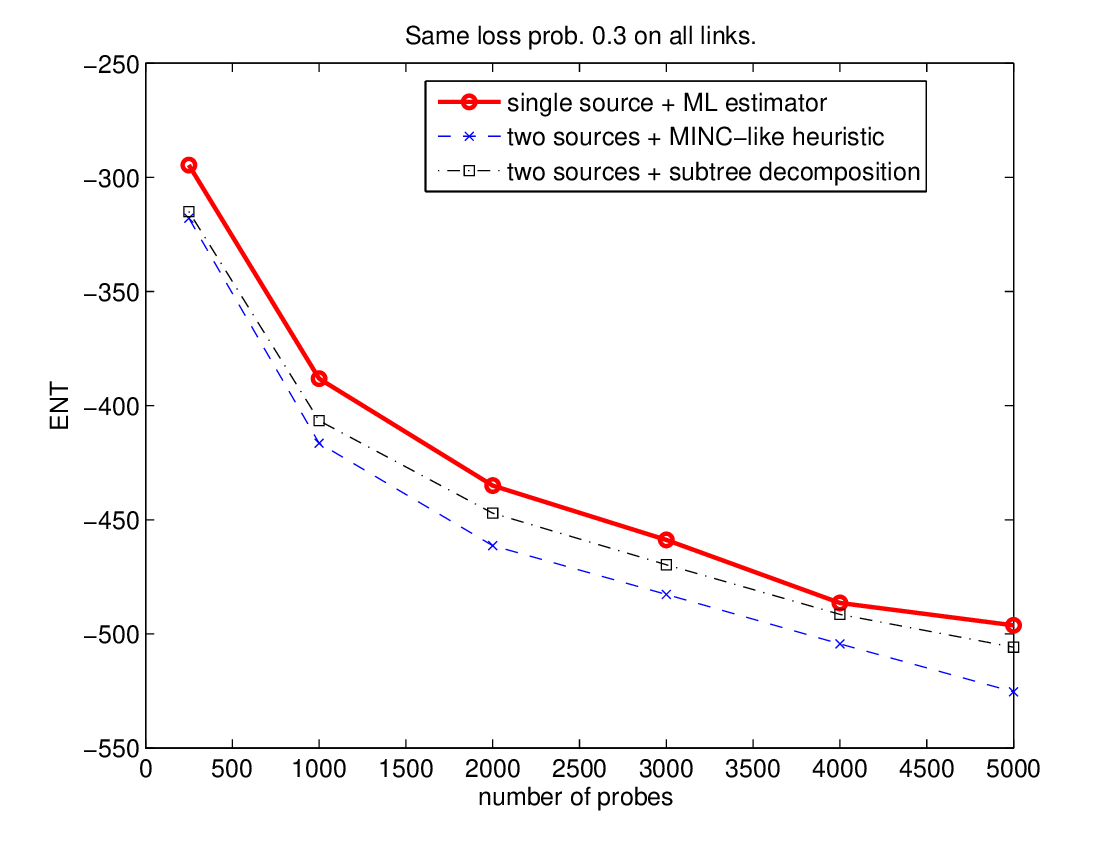}}
\subfigure[ENT vs. loss probability (same on all links)]{
\includegraphics[width=2.5in]{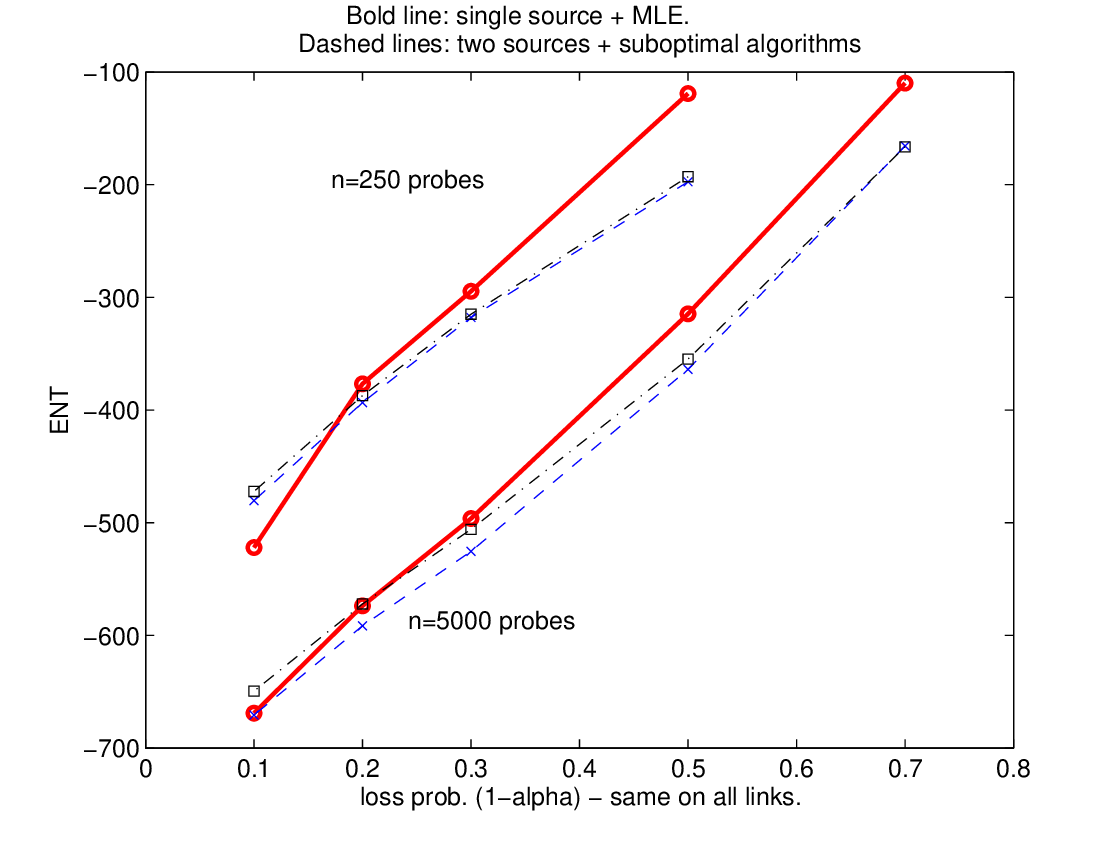}}
\caption{Comparison of one source with MLE, to two sources with  suboptimal estimation (MINC-like  and subtree estimation algorithms) for the 45-link tree. The comparison summarizes the error $ENT$ over all links.}
\label{fig_45links_vs_all} 
\end{figure}

In Fig.~\ref{figa}, we compare the MINC-like and the BP algorithms over the 45-link network,
in terms of the ENT measure, and as a function of the number of probes $n$. Both algorithms yield better performance (lower ENT values) as the number of sources increases from one to five.
The MINC-like algorithm performs better for  the multicast tree, in which case it coincides with the ML estimator, as well as for the two source tree. However, belief propagation offers significantly better performance for the case of three and five sources. This trend can be explained by looking at the number of cycles in the factor graph. A cycle is created in the factor graph of a network configuration when (1) two different paths have more than one link in common and (2) a set of $m$ paths, say $W_m$, covers a set $E_m$ of $m$ links, with each of the paths in $W_m$ containing at least two links in $E_m$. As the factor graph becomes more and more cyclic, the performance of the sum-product algorithm degrades.

Finally, in Fig.~\ref{figd}, we compare the performance of belief propagation to ML estimation using a single source. We considered two trees: the 45-link and another, randomly generated 200-link, tree. Because $ENT$ captures the error over all links, and the two topologies have different numbers of links, we use $ENT_{av}$ (defined as the $ENT$ value divided by the number of network links) for a fair comparison of the two topologies. $ENT_{av}$ for the 45-link tree is better (lower) than that of the 200-link tree for a given number of probes. We see that the BP algorithm closely follows the optimal ML estimator, for the range of number of probes and for both trees considered.

\section{\label{sec:general}General Topologies}

In this section, we extend our approach from trees to general topologies. The difference in the second case is the presence of cycles, which poses two challenges: (i) probes may meet more than once and (ii) probes may be trapped in loops. To deal with these challenges, in this section, we propose (i) an orientation algorithm for undirected graphs and (ii) probe coding schemes, whose design is more involved than in trees.

The approach followed by prior work on tomography over general networks was to cover the graph
with several multicast \cite{general} and/or unicast probes \cite{tomo-unicast-2, nowak}. This approach faces several challenges. (a) The selection of multicast/unicast probes so as to minimize the total bandwidth (cost) is an NP-complete problem. (b) Having several probes from different source-destination paths cross the same link leads to bandwidth waste (especially close to sources or receivers). (c) Finding an optimal and/or practical method to combine the observations from different multicast/unicast paths is a non-trivial problem, addressed in a suboptimal way \cite{general}.

\begin{figure}
\centering
\includegraphics[angle=-90,width=2.5in]{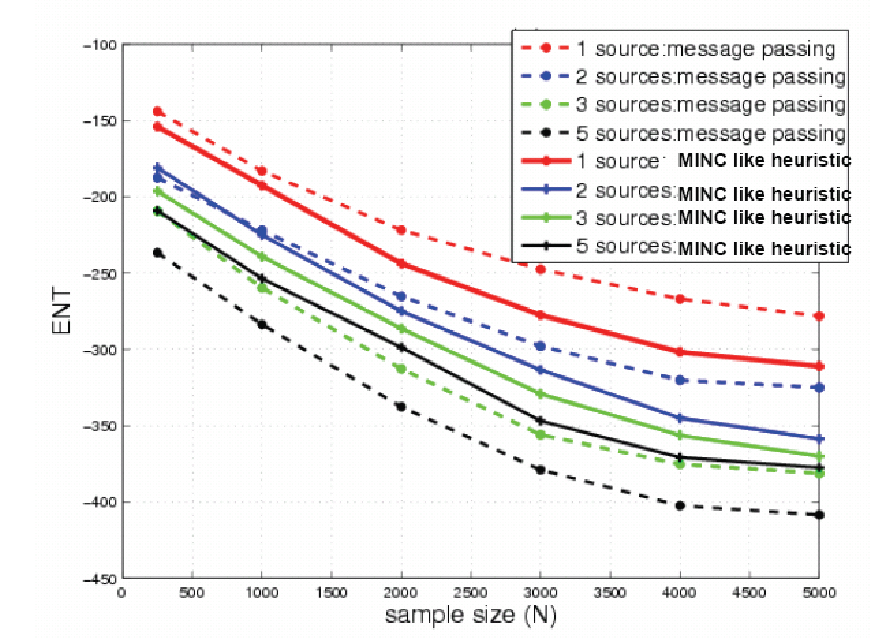}
\caption{Estimation error for two suboptimal algorithms (BP and MINC-like) for the 45-link tree. $ENT$ vs. number of probes.} 
\vspace{-0.25em}
\label{figa}
\end{figure}

\begin{figure}
\centering
\includegraphics[angle=0,width=2.5in]{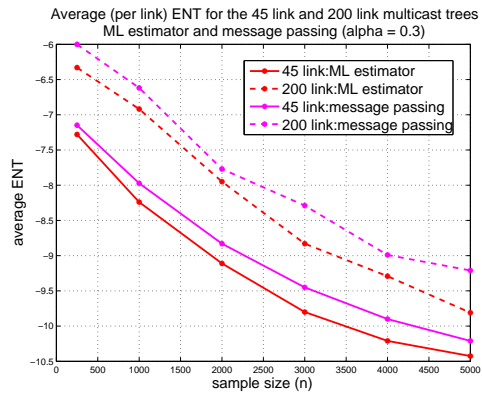}
\vspace{-0.25em}
\caption{Comparing BP to MLE for the 45-link and 200-link trees. $ENT_{av}$ is  $ENT$ divided by the number of links.}
\vspace{-0.5em}
\label{figd}
\end{figure}

In contrast, using network coding allows us to measure all links with a single probe per link
and brings the following benefits: (a) It makes the selection of routes so as to minimize the cost of linear complexity. (b) It eliminates the waste of bandwidth by having each link traversed by exactly one probe per experiment; furthermore, each network coded probe brings more information, as it observes several paths at the same time. (c) It does not need to combine observations from different experiments for estimation (as all links in the network are probed exactly once in one pass/experiment).

Because of the aforementioned features, the benefits of the network coding approach compared to traditional tomographic approaches are even more pronounced in general topologies than they were in tree topologies.

In this Section, we describe the framework for link loss tomography in general graphs.
 In particular, we address the four subproblems mentioned in Section \ref{sec:decomposition}: (1) identifiability of links (2) how to select the routing (3) how to perform the code design, and (4) what estimation algorithms to use. We evaluate our approach through extensive simulations on two realistic topologies: a small research network (Abilene), used to illustrate the ideas; and a large commercial ISP topology (Exodus), used to evaluate the performance in large graphs.

\subsection{\label{sec:general-identifiability}Identifiability}

The identifiability of an edge  given a fixed monitoring scheme follows  from Theorem \ref{theorem_1} in Section \ref{sec-identify-one-link}. CD is the edge we would like to identify, and we interpret the edges AC, BC,  DE and DF as paths that connect CD to sources and destinations. In particular, we are  able to identify the link loss rate of edge CD from the probes collected at the receivers, if we can reconstruct the table associated with one of the cases in Fig. \ref{fig_basic} (all tables are provided for completeness in Appendix B.1).

In a general topology,  it is desirable to be able to know the state of all paths, $\{\mathcal{P}\}$, that connect the sources to all receivers, at the end of each experiment. Let $\mathcal{P}(e)$ denote the set of paths that are routed from a source to a receiver, and employ an edge $e$. We refer to {\em path identifiability} as the ability to uniquely map each possible observation (received probes at all receivers) to the state of the paths $\{\mathcal{P}\}$, \ie which paths operated and which failed during the experiment. For a formal definition, please see Eq.(\ref{eq:constraints}) and the related discussion. From the state of the paths, we can tell which links worked (w.p. 1)  and which likely failed (with the associated probability). Moreover, knowing the state of the paths is particularly well suited for running the belief propagation algorithm that we use  for estimation of general graphs: indeed, message-passing in the BP algorithm is triggered by giving the state of the paths as input. Therefore, we will attempt to make the maximum number of path states distinguishable, by appropriate selection of coding. The following example indicates how the selection of a coding scheme can allow more or less path states to be distinguishable at a receiver.

\begin{figure}
\centering
\includegraphics[scale=0.6]{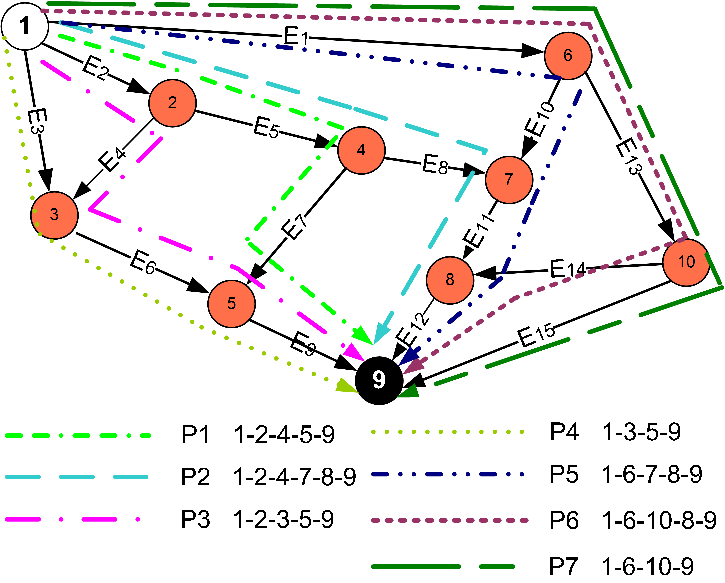}
\caption{Example of a general topology (Abilene). For one source
(node 1), we show the orientation of edges, the resulting receiver
(node 9) and the possible paths from the source to the receiver ($P_1, ...P_7$).}
\label{fig-abilene-photo} 
\end{figure}

\begin{example}\label{ex3}
Consider the network and edge orientation shown in Fig.~\ref{fig-abilene-photo}; this is based on a real backbone topology (Abilene \cite{abilene}), as will be discussed in detail in a later Section.
Node 1 acts as a source and node 9 as a receiver; assume that all intermediate nodes are only allowed to do \texttt{XOR} operations.

Note that paths $P_3$ and $P_1$ overlap twice: on edge $E_2$, and later on edge $E_9$. If all links in both paths  function, the \texttt{XOR} operations ``cancel'' each other out, resulting in  exactly the same observation with both paths being disrupted. More specifically, the following two events become indistinguishable: (i) all edges function:   node $5$ receives packet $x_2$ through
edge $E_7$ and packet $x_2+x_3$ through edge $E_6$, and  sends packet $x_3$ through edge $E_9$ to the receiver; (ii) edges  $E_4$ and $E_7$ fail, while all other edges function: node $5$ only receives packet $x_3$ from its incoming links, and again sends packet $x_3$ through edge $E_9$ to the receiver. On the other hand, if we allow coding operations over a larger alphabet, as in Example~\ref{ex1}, these two events result in observing the distinct packets (i) $3x_2+x_3$ and (ii) $x_3$ at the receiver.
\hfill{$\square$}
\end{example}

\subsection{Routing}
First, we discuss the case where we want to estimate the success rate associated with a specific subset of links, and we express the corresponding optimization problem as a LP that can be solved in polynomial time. Then, we examine the practical special case where we are interested in measuring all links, and which will be the main focus of the rest of the Section.

\subsubsection{\label{sec_cost} {\bf Minimum Cost Routing}}

Consider an arbitrary network topology, a given set $S$ of nodes  that can act as sources, a given set $R$ of nodes  that can act as receivers, and a set $I$ of edges whose link success rates we want to estimate.  Our goal is to estimate the success probability for all links in $I$ at the minimum bandwidth cost. That is, we assume that a cost $C(e)$ is associated with each edge $e$, that is proportional to the flow through the edge. We are interested in identifying the success rate $\alpha_e$ of edge $e\in I$. Let the $\rho$ be the rate of probes crossing that edge, in a manner consistent with the identifiability conditions for edge $e$.

{\em Remarks.} We note that the flow-based formulation of this problem does not rely on any major assumption. The accuracy of estimation depends only on the number of probes and not on the rate of the probe flows. The rates determine how quickly those n packets will be collected. {\em E.g.,} for smaller rates, it will take longer to collect the n packets. We also note that having flows coded together in an edge does not reduce the estimation accuracy. In fact, a coded packet observes more than one path, thus increasing the estimation accuracy vs. bandwidth tradeoff.

The minimum cost routing problem was shown to be NP-hard, when performing tomography with multicast trees \cite{layout}. Indeed, the problem of even finding a single minimum cost Steiner tree is NP-hard. In contrast, we show here that if we use  network coding, we can find the minimum cost routing in polynomial time. In the case of network coding, to ensure identifiability, we want to route flows so that the conditions in Theorem \ref{theorem_1} are satisfied. We will consider the flow-interpretation of paths in Theorem \ref{theorem_1}, {\em i.e.}, we will think of each path as a flow of fixed rate $\rho$. To ensure minimum cost, we want these flows to use the minimum resources possible.

Below we provide a Linear Programming (LP) formulation that allows to solve the minimum cost cover problem in polynomial time, provided that we allow intermediate nodes to combine probes.
We assume that there are no capacity constraints on the edges of the network, {\em i.e.,}
we can utilize each edge as much as we want. This is a realistic assumption, since the rate $\rho$ at which we send probe packets would be chosen to be a very small fraction of the network capacity, and nowhere close to consuming the whole capacity.

{\em Intuition.} Following an approach similar to \cite{Li_optimal}, we introduce conceptual flows that can share a link without contending for the link capacity. We associate with each edge
$e_i \in I$ one such conceptual flow $f^{i}$. We want each $f^{i}$ to bring probe packets to link $e_i=u_iv_i\in I$, in a manner consistent with the conditions of Theorem~\ref{theorem_1} for edge $e_i$. We  allow conceptual flows corresponding to different edges $e_i$ to share edges of the graph without contention, and will measure through a total flow $f$ the utilization of edges by probe packets. We use the condition $f^i\leq f$ to express the fact that each packet in $f$ might be the linear combination of several packets of
conceptual flows.

{\em Notation.} Let $C: E\rightarrow R^+$ be our cost function that associates a non-negative cost $C(e)$ with each edge $e$. We are interested in minimizing the total cost $\sum_{e} C(e) f(e)$, where $f(e)$ is the flow through edge $e$. We also denote by $f_{in}(v)/f_{out}(v)$ the total incoming/outgoing flow of vertex $v$ and with $f_{in}(e)/f_{out}(e)$  the total incoming/outgoing flow to edge $e$. The same notation but with the superscript $i$, {\em e.g.,} $f_{in}^i(u)$ has the same meaning but specifically for conceptual flow $f^i$. We connect all nodes in $S=\{S_i\}$ to a common source node $\cal{S}$ through a set of infinite-capacity and zero-cost edges $E_S=\{ {\cal{S}}S_{i} \}$. Similarly, we connect the nodes in $R=\{R_i\}$ to a common node $\cal{R}$ using  an infinite-capacity and zero-cost set of edges $E_R=\{R_i\cal{R}\}$.

We summarize the LP program for Minimum Cost Routing below:
{\footnotesize
\begin{align*}
  &  \min \sum_{e} C(e) f(e)  \\
 & f(e) \leq \rho  ~~~~ \forall  e \in E -E_S - E_R\\
 & f(e)=\rho   ~~~ \forall  e \in I\\
 &\mbox{Each conceptual flow $f^i$ }   \mbox{corresponding to  $e_i=u_iv_i$ satisfies the constraints:} \\
&   f^i(e)\leq f(e) ~~~ \forall  e \in E-e_i\\
&   f^i(e)\geq 0  ~~~ \forall e \in E\\
&   f_{in}^i(\mathcal{S})=0  \\
&   f_{out}^i(\mathcal{R})=0   \\
&   f_{in}^{i}(u)= f_{out}^{i}(u)  ~~~\forall u \in V-\{\mathcal{S,R},u_i,v_i\}\\
&  f_{in}^i(u_i)\geq \rho    \qquad\text{/*conceptual flow of rate at least } \rho \text{ gets into } (u_i, v_i) \text{*/}\\
&  f_{in}^i(u_i)+f_{out}^i(u_i)\geq 3\rho   \\
&  f_{out}^i(v_i)\geq \rho   \qquad  \text{/*conceptual flow of rate at least } \rho \text{ gets out of } (u_i, v_i) \text{*/} \\
&  f_{in}^i(v_i)+f_{out}^i(v_i)\geq 3\rho
\end{align*}
}
The idea is to lower-bound the probe rate $f(e)$, in edge $e$, given the conceptual flows and
the condition $f^i(e) \le f(e)$. Solving this LP will give us a set of flows and paths, for each
edge $e=(u_i,v_i)$. To ensure identifiability, we need to additionally select a coding scheme, so that the flows arriving  and leaving at $u_i$ and $v_i$ utilize distinct packets, {\em i.e.,} from the observable events at the sink, we can reconstruct for edge $e$ the probability of the events of one of the cases 1-4 in identifiability.

In summary, the minimum cost routing problem, so as to identify the loss rates of a predefined set of edges $I$, can be solved in linear complexity when network coding is used, while the same problem is NP-hard without network coding.

\subsubsection{\bf Routing (including Source Selection and Link Orientation) for Measuring {\em all} Links }

If we are interested in estimating the success rate of  {\em all} identifiable edges of the graph, as opposed to just a restricted set $I$ as in the previous Section, we do not need to solve the above LP. We can simply have each source send a probe and each intermediate node forward a combination of  its incoming packets to its outgoing edges. This  simple scheme utilizes each edge of the graph exactly once per time slot (set of probes sent by the sources) and thus, requires the minimum total bandwidth.  Moreover, if an edge is identifiable, there  exists a coding scheme that allows it to be so.
Example \ref{ex3} and Fig. \ref{fig-abilene-photo} demonstrate such a situation: the source (node 1) sends one probe per experiment, which gets routed and coded inside the network, crossing each link exactly once, and eventually arriving at the receiver (node 9).

{\bf Challenge I: Cycles.} One novel challenge we face in general topologies compared to trees is that probes may be trapped in cycles.
Indeed, if network nodes simply combine their incoming packets
and forward them towards their outgoing links, in a distributed
manner and without a global view of the network, then probes may get trapped in a positive feedback loop (cycle)
that consumes network resources without aiding the estimation process. The following example illustrates
such a situation.
\begin{example}\label{ex2}
Consider again the network shown in  Fig.~\ref{fig-abilene-photo},
but now assume that the orientation of edges $E_4$ and $E_6$ were
reversed. Thus, edges  $E_4$,  $E_5$,  $E_7$, and $E_6$ create a
cycle between nodes $2$, $4$, $5$, and $3$. The probe packets
injected by nodes $3$ and $2$ would not exit this loop.
\hfill{$\square$}
\end{example}

To address this problem, we could potentially equip intermediate nodes with additional functionalities, such as removal of packets that have already visited the same node. This is not practical because it requires keeping  state at intermediate nodes; furthermore, such operations would need to be repeated for every set of probes, leading to increased processing and complexity.

We take a different approach:  we remove cycles. Starting from an undirected graph $G=(V,E)$, where the degree of each node is either one (leaves) or at least three (intermediate nodes), we impose an orientation on the edges of the graph so as to produce a directed acyclic graph (DAG). Our approach is only possible if we are given some flexibility to choose nodes that can act as sources or receivers of probe packets, among all nodes, or among a set of candidate nodes.

There are many algorithms one can use to produce a DAG. Below we propose our own orientation algorithm, Alg. \ref{alg-senders}, that in addition to removing cycles, also achieves some goals related to our problem. In particular, starting from a set of nodes that act as senders $S\subset V$, Alg. \ref{alg-senders} selects an orientation of the graph and a set of  receivers so that (i) the resulting graph is acyclic, (ii) a small number of receiver nodes is selected\footnote{Given a set of sources, one can always produce an orientation and a set of receivers that comprise a DAG, which is what Alg. \ref{alg-senders} does. Conversely, given a set of receivers one can always produce an orientation and a set of sources that comprise a DAG. If both the sets of sources and receivers are fixed, a DAG may not always exist, depending on the topology.}, which is desired for the efficient data collection, and (iii) the resulting DAG leads to a factor graph that works well with belief propagation estimation algorithms. Alg. \ref{alg-senders} guarantees identifiability, but is heuristic with respect to criteria (ii) and (iii); it is important to note, however, that optimizing for criterion (iii) is an open research problem (as discussed in Section \ref{sec:general-bp}).

\begin{algorithm}[t]
\caption{\label{alg-senders} {\bf Orientation Algorithm:} Given
graph $G=(V,E)$ and senders $S \subset V$, find receivers $R \subset
V$ and orientation $\forall~e \in E$, s.t. there are no cycles and
all edges are identifiable.}
{\footnotesize
\begin{algorithmic}[1]

    \FORALL{undirected edges $e=(s,v_2),~s \in S$ }
    \STATE Set outgoing orientation $s \rightarrow v_2$
    \ENDFOR
 \STATE $R=\{s \in S$ that have incoming oriented edges$\}$
 \STATE $V_1=S$;
 \STATE $V_2=\{v_2\in V-V_1: s.t.~\exists~edge~(v_1,v_2)~from~v_1\in V_1\}$

\WHILE{ $V_2\ne \emptyset $ } \STATE Identify and exclude receivers:
find $r\in V_2$ without unset edges: $R := R \bigcup \{r\}$;
$V_2 := V_2-\{r\}$ \STATE Find nodes $U_1\subset V_2$ that
have the smallest number~of~edges~with~unset~orientation. \STATE
Find nodes $U_2\subset U_1$ that have the minimum distance from the
sources $S$. Choose one of them: $v^*\in U_2$.
 \STATE  Let $E^*=\{(v^*,w)\in E$ s.t. $w\in V-V_1\}$

    \FORALL{undirected edges $(v^*,w) \in E^*$}
    \STATE set direction to $v^* \rightarrow w$
    \ENDFOR
    \STATE Update $V_1 := V_1\bigcup \{v^*\}$
\STATE Update $V_2 := \{$(none-$V_1$) nodes one edge away
from current $V_1\}$
 \ENDWHILE

\end{algorithmic}
}
\end{algorithm}
We now describe Alg. \ref{alg-senders}. We sequentially visit the vertices of the graph,
starting from the source, and selecting an orientation for all edges of the visited vertex. This orientation can be thought of as imposing a partial order on the vertices of the graph: in a sense, no vertex is visited before all its parent vertices in the final directed graph.

Lines $1-3$ attempt to set all links attached to the sources as outgoing. If we allow an
arbitrary selection of sources, we may fall into cases where sources
contain links to other sources. In this case, one of the sources
will also need to act as a receiver, \ie we allow the set $S$ of
sources and the set $R$ of receivers to overlap.
In the main part of the algorithm, nodes are divided into three sets:
\begin{itemize}
\item A set of nodes $V_1$, which we have already visited and have
already assigned orientation to all their attached edges. Originally,
$V_1:=S$.
\item  A set of nodes $V_2$, which are one edge away from nodes in $V_1$ and
are the next candidates to be added to $V_1$.
\item The remaining nodes are either receivers $R$ or just nodes not visited yet $V_3 :=
V-V_1-V_2-R$. 
\end{itemize}
In each step of the algorithm, one node $v^* \in V_2$ is selected,
all its edges that do not have an orientation are set to outgoing,
and $v^*$ is added to $V_1 := V_1 \bigcup \{v^*\}$. Note
that the orientation of the edges going from $V_1$ to $V_2$ is already
set. However, a node $v \in V_2$ may have additional unset edges; if
it does not have unset edges, then it becomes a receiver $R := R \bigcup \{v\}$.

We include two heuristic criteria in the choice of $v^* \in V_2$:
(i) first we look at nodes with the smallest number of unset edges;
(ii) if there are many such nodes, then we look for the node with
the shortest distance from the sources $S$; if there are still many
such nodes, we pick one of them at random. The rationale behind
criterion (i) is to avoid creating too many receivers. The rationale
behind criterion (ii) is to create a set of paths from sources to
receivers with roughly the same path length. The criteria (i) and (ii)
are just optimizations that can affect the estimation performance\footnote{One could use
different criteria to rank the candidates $v^*$, so as to enforce
additional desirable properties. Here we used shortest path from the sources
to impose a breath-first progression of the algorithm and paths with roughly the same length.
One could also use other criteria to optimize for the alphabet size
and/or the complexity and performance of the estimation algorithms.}. The algorithm
continues until all nodes are assigned to either $R$ or $V_1$.

\lemma{Algorithm~\ref{alg-senders} produces an acyclic orientation.}\\
\begin{proof} At each step, a node is selected and all its
edges which do not have a direction are set as outgoing. This
sequence of selected nodes constitutes a topological ordering. At
any point of the algorithm, there are directed paths from nodes
considered earlier to nodes considered later. A cycle would exist if
and only if for some nodes $v_i$ and $v_j$: $v_j$ is selected at
step $j>i $ and the direction on the undirected edge $(v_i, v_j)$
is set to $v_i \leftarrow v_j$. This is not possible since if  there
were an edge $(v_i, v_j)$, it would have been set at the earlier step
$i$  at the opposite direction $v_i \rightarrow v_j$. Therefore,
the resulting directed graph has no cycle. It is possible, however,
that there are nodes with no outgoing edges, which become the
receivers.
\end{proof}

We note that the key point that enables us to create an acyclic orientation graph
for an undirected graph is that we allow the receivers to be one of
the outputs of the algorithm. Note that a similar algorithm can be
formulated for the symmetric problem, where the receivers $R$ are
given and the orientation algorithm produces a (reverse) orientation
and a set of sources $S$, s.t. that there are no cycles. However, if
 both $S$ and $R$ are fixed, there is no orientation algorithm
that guarantees the lack of cycles for all graphs.

\lemma{Algorithm~\ref{alg-senders} guarantees identifiability of every link in a
general undirected graph consisting of logical links ({\em i.e.,} with degree $\ge 3$), and for any choice of sources.}
\begin{proof}
The proof follows directly from the fact that the degree of each node is greater than or equal to three
(assuming logical links only), each edge bringing or removing the same amount of flow. Thus, either
the node is a source or a receiver, or the conditions of Theorem \ref{theorem_1} and Fig. \ref{fig_basic} are satisfied.
\end{proof}

\subsection{\label{sec:general-coding}Code Design}

{\bf Challenge II: Code Design affects Identifiability.}
Another novel challenge that we face in general topologies compared to trees is that simple \texttt{XOR} operations do not guarantee path identifiability, as we saw in Example \ref{ex3}. We deal with this challenge using linear operations over higher field sizes as the following example illustrates.

\begin{example}\label{ex1}
Let us revisit the general topology shown in Fig.~\ref{fig-abilene-photo} and briefly discussed in Example \ref{ex3}.
Node $1$ acts as a source: in each experiment, it sends probes $x_1$, $x_2$ and $x_3$
through its outgoing edges $E_1$, $E_2$ and $E_3$, respectively. Nodes $2$,
$4$, $6$, $10$ simply forward their incoming packets to all their
outgoing links. Node $3$ performs coding operations as follows: if
within a predetermined time-window it only receives probe packet
$x_2$, it simply forwards this packet. The same holds if it only receives
 probe packet $x_3$. If, however, it receives both packets $x_2$
and $x_3$, it linearly combines them  to create the packet $x_2+x_3$
that it then sends through its outgoing edge $E_6$. Nodes $5$, $7$
and $8$ follow a similar strategy.
If all links are functioning, node $5$ sends packet  $3x_2+x_3$,
node $7$ sends packet $x_1+x_2$ and finally, node $8$ sends packet
$3x_1+x_2$. The receiver node $9$ observes, in each experiment, three
incoming probe packets. {\em E.g.}, if it only observes the incoming
packet $x_3$, it knows that all paths from the source $S$ have
failed, apart from path $P_4$. Therefore, it infers that no packets
were lost on edges $E_3$, $E_6$, $E_9$. \hfill{$\square$}
\end{example}

More generally, we are interested in practical code design schemes that allow for identifiability of all edges in general topologies. We will achieve this goal by designing for path identifiability, which is a different condition.
In particular, we are interested in coding schemes that allow us to identify the maximum number of 
path states. This can be achieved by mapping the failure of each subset of paths to a distinct probe observed at the receivers.
For this to be possible, (i) the alphabet
size must be sufficiently large and (ii) the coding coefficients must be carefully assigned to edges.

Recall that receiver nodes only have incoming edges. Let $e_{R_j}$ be an edge adjacent to a receiver $R_j$ and $\mathcal{P}(e_{R_j})$ be the set of paths that connect all source nodes  to receiver $R_j$, and have $e_{R_j}$ as their last edge. We say that a probe coding scheme allows maximum  path identifiability if it allows the receiver $R_j$, by observing the received probes from edge $e_{R_j}$ at a given experiment, to determine which of the $\mathcal{P}(e_{R_j})$ paths have been functioning during this experiment and which have not.

\subsubsection{Alphabet Size}
There is a tradeoff between the field size and path identifiability.
On one hand, we want a small field size mainly for low computation (to do linear operations at intermediate nodes) and secondarily for
bandwidth efficiency (to use a few bits that can fit in a single probe packet). In practice, the latter is not a major
problem, because for each probe, we can allocate as many bits as the maximum IP packet size, which is quite large in the Internet.\footnote{The MTU (maximum transmission unit) on the Internet is at least 575 Bytes (4800 bits), and up to 1500 bytes (12000 bits), including headers. However, in simulation of realistic topologies, we did not need to use more than 18 bits.}
However, for computation purposes, it is still important that we keep the field size
as small as possible.  On the other hand, a larger field size makes it easier to achieve path identifiability.

For maximum path identifiability, there is the following loose lower bound on the required alphabet size.
\begin{lemma} Let $G=(V,E)$ be acyclic and let $\mathcal{P}_m$ denote the maximum number of paths sharing an incoming edge of any receiver $R_j$, {\em i.e.,} \mbox{$\mathcal{P}_m=\max_{e_{R_j}}$ $\mathcal{P}(e_{R_j})$}. The alphabet size must be greater than or equal to $\log{\mathcal{P}_m}$.
\end{lemma}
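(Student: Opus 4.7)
The plan is to read the bound as a counting argument localized at the worst edge: fix an edge $e_R$ incoming to some receiver $R$ that achieves the maximum $\mathcal{P}_m=|\mathcal{P}(e_R)|$, and then argue that enough distinct probe values must be receivable on $e_R$ alone to tell apart the different operational states of the $\mathcal{P}_m$ paths that terminate there.

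First I would set up the observation that, since $e_R$ is the last edge of every path in $\mathcal{P}(e_R)$, the receiver $R$'s only information about which of these $\mathcal{P}_m$ paths operated in a given experiment is exactly the symbol it sees on $e_R$. The maximum path identifiability requirement, as defined in the paragraph preceding the lemma, forces the map (operational state of the paths in $\mathcal{P}(e_R)$) $\mapsto$ (symbol on $e_R$) to be injective on the set of states that actually need to be distinguished.

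Next I would restrict attention to the $\mathcal{P}_m$ ``single survivor'' states: the state $\sigma_i$ in which exactly one path $P_i\in\mathcal{P}(e_R)$ is fully operational and every other path in $\mathcal{P}(e_R)$ has at least one failed edge. These $\mathcal{P}_m$ states are pairwise distinguishable by assumption, so they must be mapped to $\mathcal{P}_m$ pairwise distinct symbols on $e_R$. Hence the alphabet from which the symbol on $e_R$ is drawn must contain at least $\mathcal{P}_m$ elements, which in bits is $\log \mathcal{P}_m$; this is exactly the claimed bound. Note that since the bound is based only on singletons and ignores all the other $2^{\mathcal{P}_m}-\mathcal{P}_m-1$ subsets, it is indeed ``loose'' as advertised.

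The only subtlety I expect is justifying that the $\mathcal{P}_m$ single-survivor states really do have to produce $\mathcal{P}_m$ distinct symbols and not, say, collide with each other because upstream coding coefficients degenerate. This is handled by the standing code-design assumption preceding the lemma: the coding coefficients are chosen so that no nonempty set of surviving source contributions maps to a value that coincides with another surviving set's value (equivalently, no source probe is effectively ``zeroed out'' along the path leading to $e_R$). Under that assumption the injectivity on singletons is immediate and the counting argument closes.
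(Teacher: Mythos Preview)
Your approach is essentially the same as the paper's: both fix the edge $e_R$ attaining $\mathcal{P}_m$, consider the $\mathcal{P}_m$ ``single-survivor'' states (exactly one path in $\mathcal{P}(e_R)$ operational), and conclude by pigeonhole that at least $\mathcal{P}_m$ distinct symbols are needed on $e_R$.

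The one place where your emphasis differs from the paper is the ``subtlety'' paragraph. You worry about whether the single-survivor states produce distinct symbols; but since the lemma is a \emph{necessary} condition for maximum path identifiability, that injectivity is the hypothesis, not something to verify. The paper instead spends its one sentence of care on a different point you gloss over: that each single-survivor state is actually \emph{realizable} by some link-failure pattern. This needs the observation that two distinct paths cannot share all their edges, so for each $P_i$ there is a set of edge failures that kills every $P_j$, $j\neq i$, while leaving $P_i$ intact. Without that, the $\mathcal{P}_m$ states you count might not all be reachable outcomes of the loss model, and the pigeonhole argument would not bite. It is a minor point and easy to supply, but it is the genuine gap to close, not code degeneracy.
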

\begin{proof}
Assume that one of the $\mathcal{P}_m$ paths is functioning while
all the others are not. Since two paths cannot overlap in all edges,
there exists a set of edge failures such that this event occurs. For
the receiver to determine which of the $\mathcal{P}_m$ paths function
and which ones fail, it needs to receive at least $\mathcal{P}_m$ distinct
values. Essentially, the field size should be large enough to allow
for distinguishing among all possible paths arriving at each receiver.
Therefore, we need a field size $q \ge \mathcal{P}_m$.
\end{proof}

What the above lemma essentially counts is the number of distinct values that we need to
be to able to distinguish. This can be achieved using either scalar network coding over
a finite field $F_q$ of size $q$, or vector linear coding with vectors of appropriate length.
{\em E.g.,} see \cite{vnc} for an application to the multicast scenario, where scalar network coding 
over a finite field of size $q$ was treated as equivalent to vector network coding over the space of 
binary vectors  of length $\log q$.

The reader will immediately notice that there is an exponential number of paths and failure patterns.
We would like to note that this is not unique to our work, but inherent to tomography problems
that try to distinguish between exponentially large number of configurations, {\em e.g.,} transfer matrices
and their failure patterns in the passive tomography \cite{jaggi, jaggi-arxiv}.
Even in that case, simulations of large topologies, such as Exodus, showed that a moderate field size is sufficient in practice. However, in our case of active tomography, a potentially large alphabet size is needed only if one insists to infer the loss rates on {\em all links simultaneously}. In practice, one can infer the loss rates on links one-by-one, by carefully selecting the probes and measuring only the corresponding paths, thus creating the ``5-link'' motivating example, where \texttt{XOR} operations are sufficient.

\subsubsection{Code Design}

Having a large alphabet size is necessary but not sufficient to guarantee path identifiability.
We also need to assign coefficients $\{c_h\}$ so that the failure of every subset of paths leads to a distinct observable outcome (received probe content).
Here we discuss how to select these coefficients.

Consider a particular incoming edge $e_{R_j}$ to a receiver $R_j$ and let $m$ be the number of paths arriving at this edge from source $S_i$. Consider one specific path $h$ that connects source $S_i$ to $R_j$  via edges $e_{h_1},e_{h_2},...e_{R_j}$. The contribution $P_h$ from path $h$ to the observed probe is what we call a {\em path monomial}, {\em i.e.,} the product of coefficients on all edges across the path and of probe $\mathcal{X}_{S_i}$ sent by source $S_i$:
$$P_h=c_{h_1}\cdot c_{h_2}... \cdot c_{R_j} \cdot \mathcal{X}_{S_i}$$
For simplicity, we use $P_h$ to denote both a path and the corresponding path monomial. Note that each path consists of a distinct subset of edges; as a result, no path monomial is a factor of any other path monomial. We can  collect all the monomials $P_h$ in a column vector $\vec{P}_{e_{R_j}}=(P_1,\; P_2, \ldots P_m)$.

If all paths arriving at edge $e_{R_j}$ are working (no link fails), the received probe at that edge is the summation of the contributions $\vec{P}=(P_1,P_2, ...P_m)$ from all $m$ paths:
$$\textit{Probe received through } e_{R_j} \textit{ (when no loss)}= P_1+P_2+...P_m$$
In practice, however, any subset of these $m$ paths may fail due to loss on some links and the received probe becomes the summation of the subset of paths that did not fail. Let $\vec{X}=(x_1, x_2, ...x_m)$ be the vector indicating which paths failed: $x_k=0$ if path $k$  failed and $1$ otherwise.
Therefore, the probe received through $e_{R_j}$, in the case of loss, is:
$$\textit{Probe received through } e_{R_j} \textit{ (when loss)}=\vec{X}\cdot \vec{P}= \sum_{k=1}^m x_k\cdot P_k,$$
where $\vec{X}$ is the indicator vector corresponding to the loss pattern, \ie has entry zero if a path fails, and one otherwise. The vector $\vec{X}$ can take $2^m$ possible values; let $\vec{X_k}$ denote the $k^{th}$ possible value, $k=0,...2^m-1$. To guarantee identifiability, no two subsets $k,l$ of failed paths should lead to the same observed probe:
 $\vec{X_k} \cdot \vec{P} \neq \vec{X_l} \cdot \vec{P}$.

Therefore, a successful code design should lead to $2^m$ distinct probes, one corresponding to a different subset of paths failing. In other words, to guarantee identifiability, the coefficients $\{c_e\}_{e\in E}$ assigned to edges $E$ should be such that:
$\vec{X_k} \cdot \vec{P} - \vec{X_l} \cdot \vec{P} \ne 0,~\forall~k,l=0,...2^{m}-1$. We can write all these constraints together as follows, which is essentially the definition of {\em path identifiability},  mentioned in the beginning of Section~\ref{sec:general-identifiability}:
\begin{equation}
\label{eq:constraints}
\prod_{k,l =0,...2^{m}-1} (\vec{X_k} \cdot \vec{P}_{e_{R_j}} - \vec{X_l} \cdot \vec{P}_{e_{R_j}}) \ne 0
\end{equation}
Since each $P_h=c_{h_1}\cdot c_{h_2} ... \cdot c_{R_j}\cdot \mathcal{X}_{S_i}$ is a monomial, with variables the coding coefficients $\{c_e\}_{e\in E}$, the left hand side in Eq.(\ref{eq:constraints}) is a multivariate
polynomial $f(c_1,c_2,... c_{|E|})$ with degree in each variable at most $d\le 2^m$.

\begin{lemma}
The multivariate polynomial $f(c_1,c_2,... c_{|E|})$ at the left side of Eq.(\ref{eq:constraints}) is not identically zero.
\end{lemma}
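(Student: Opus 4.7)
The plan is to exploit the product structure of $f$: since $f$ is defined as a product over all ordered pairs $i \neq j$ of factors of the form $g_{ij}(c_1, \ldots, c_{|E|}) = (\vec{X_i} - \vec{X_j}) \cdot \vec{P}_{e_R}$, and the polynomial ring $F_q[c_1, \ldots, c_{|E|}]$ is an integral domain, it suffices to show that each individual factor $g_{ij}$ is a nonzero polynomial whenever $i \neq j$.

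To handle a single factor, I would first note that for $i \neq j$, the indicator vector $\vec{X_i} - \vec{X_j}$ is a nonzero vector whose entries lie in $\{-1, 0, +1\}$ (as the components of $\vec{X_i}$ and $\vec{X_j}$ lie in $\{0,1\}$). Hence $g_{ij}$ can be written as a linear combination $\sum_{k=1}^{n} \epsilon_k P_k$ of the path monomials $P_k = c_{k_1} c_{k_2} \cdots c_R \cdot \mathcal{X}_S$, where the $\epsilon_k \in \{-1, 0, +1\}$ are not all zero, and these coefficients are nonzero in $F_q$ regardless of the characteristic.

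The key step is to argue that such a linear combination cannot collapse to the zero polynomial. Here I would invoke the observation already highlighted just before the lemma: because the DAG contains no cycles, any path from $S$ to $R$ is simple, so each variable $c_e$ appears with multiplicity at most one in any $P_k$. Consequently every $P_k$ is a squarefree monomial in the $c_e$'s, and two distinct paths $k \neq k'$, which must differ in at least one edge, yield distinct monomials $P_k$ and $P_{k'}$. Since distinct monomials are linearly independent in the polynomial ring, the sum $\sum_k \epsilon_k P_k$ is nonzero as a polynomial whenever at least one $\epsilon_k$ is nonzero.

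The main (minor) obstacle is just being precise about why distinct paths give distinct monomials: this relies on acyclicity (ensured by Algorithm~\ref{alg-senders}) so that paths are simple and their edge sets uniquely determine them; without this, a variable $c_e$ could appear to higher multiplicity and cancellations would be possible. Putting the three pieces together, each $g_{ij}$ is a nonzero element of $F_q[c_1, \ldots, c_{|E|}]$, and by the integral domain property their product $f$ is also nonzero, which is what we wanted.
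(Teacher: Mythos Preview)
Your proof is correct and follows essentially the same strategy as the paper: factor $f$ as a product, then show each factor $(\vec{X_i}-\vec{X_j})\cdot\vec{P}_{e_R}$ is a nonzero polynomial. The only difference is in how you establish nonvanishing of a single factor: the paper picks a coordinate $k$ where $\vec{X_i}$ and $\vec{X_j}$ differ and exhibits an explicit evaluation (set the variables in $P_k$ to $1$, the rest to $0$) that makes the factor nonzero, using the fact that no $P_l$ divides $P_k$; you instead observe directly that the $P_k$ are distinct squarefree monomials and hence linearly independent. Both arguments rest on the same structural fact (distinct paths have distinct edge sets), so this is a minor variation rather than a genuinely different route.
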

\begin{proof}
The ``grand'' polynomial is not identically zero because each factor in the product $(\vec{X_k} \cdot \vec{P}_{e_{R_j}} - \vec{X_l} \cdot \vec{P}_{e_{R_j}})$ is a nonzero polynomial in $\{c_h\}.$ Indeed, $\vec{X_k}$ and $\vec{X_l}$ differ in at least one position, say $g$, corresponding to a monomial $P_g$. Consider the following assignment for the variables $\{c_h\}$. Assign to all the variables in this monomial a value equal to one. Assign to all other variables $\{ c_h\}$ a value of zero. Since no monomial is a factor of any other monomial, this implies that the vector $\vec{P}_{e_{R_j}}$ takes value one at position $g$, and zero everywhere else. Thus, this assignment results in a non-zero evaluation for the polynomial $(\vec{X_k} \cdot \vec{P}_{e_{R_j}} - \vec{X_l} \cdot \vec{P}_{e_{R_j}})$, and as a result, this cannot be identically zero.
\end{proof}

Up to now, we have considered paths that employ the same incoming edge. We can repeat exactly the same procedure for all
incoming edges, and generate, for each such edge, a polynomial in the variables $\{c_h\}$. Alternatively, we could also find these polynomials by calculating the transfer matrix
between the sources and the specific receiver node using the state-space representation of the network and the algebraic tools developed in
\cite{algebraic}. Either way, the code design consists of finding values for the variables $\{c_h\}$ so that the product of all polynomials, $f$, evaluates to a nonzero value.
There are several different ways to find such assignments, extensively
studied in the network coding literature, {\em e.g.,} \cite{schwartz, harvey, ho-random}.
One way to select the coefficients is randomly, and this is the approach we follow in the simulations.
In that case, it is well-known that we can make the probability that $f(c_1,c_2,... c_{|E|})=0$ arbitrarily
small, by  selecting the coefficients randomly over a large enough field\footnote{From the Schwartz-Zippel Lemma
\cite{schwartz}, which has been instrumental for network coding \cite{ho-random},
we know the following. If $f(c_1,c_2,... c_{|E|})$ is a non-trivially zero polynomial with degree at most $d$ in each variable, and we choose $\{c_e\}_{e\in E}$ uniformly
at random in $F_q$ with $q>d$, then the probability that $f(c_1,c_2,... c_{|E|})=0$ is at most $1-{(1-\frac{d}{q})}^{|E|}$.}.

{\bf Deterministic Operation.} We emphasize that although the coefficients may be selected randomly (at setup time), the operation of intermediate nodes (at run time) is deterministic.
 At setup time, we select the coefficients and we verify the identifiability conditions, and select new coefficients if needed for the conditions to be met. After the selection is finalized, we learn the coefficients and use the same ones at each time slot. Learning the coefficients is important in order to be able to infer the state of the paths and links.

{\bf State Table and Complexity Issues.}
Once the coefficients are randomly selected, we need to check whether the constraints
summarized in Eq.(\ref{eq:constraints}) are indeed satisfied. If they are satisfied, the code design guarantees
identifiability; if they are not satisfied, then we can make another random selection and check again. One could
also start from a small field size and increase it after a number of failed trials.

\begin{algorithm}[t]
\caption{\label{alg-observations} Deduce State of the Paths from the Observations}
{\footnotesize
\begin{algorithmic}
\FORALL{ $S_i \in Senders $ }
        \FORALL{ $R_j \in Receivers $ }
                \FORALL{ incoming links $e_{R_j}$ }
                \STATE Map the observed probe to the state of all paths from $S_i$ to $R_j$ coming through link $e_{R_j}$.
                \ENDFOR
        \ENDFOR
\ENDFOR
\end{algorithmic}
}
\end{algorithm}

\begin{figure}[t]
\centering{\includegraphics[scale=0.22]{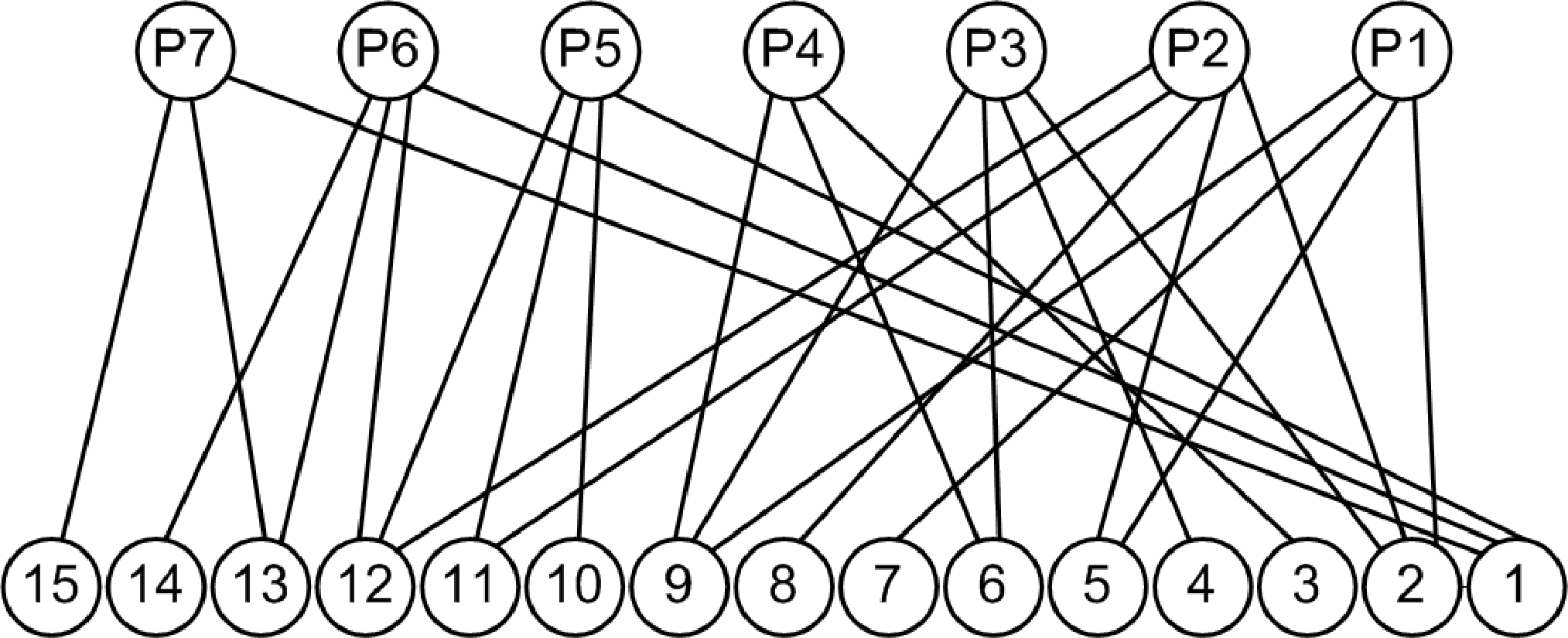}}
\vspace{-0.75em}
\caption{Factor graph corresponding to the Abilene graph (shown in Fig.~\ref{fig-abilene-photo}).
It maps the 15 links to the 7 observable paths at the single receiver (9).
It is used for the belief propagation estimation algorithm.\label{figs-fg}}
\vspace{-0.25em}
\end{figure}

The evaluation of Eq.(\ref{eq:constraints}) above requires to check an exponential number of
constraints, up to $2^m$, where $m$ is the number of paths for a triplet (source, receiver, edge at receiver). Because the current orientation algorithm does not exclude any edges in the process of building the DAG, we might end up with a large number of paths depending on the connectivity of the topology and the selection of the sources\footnote{{\em E.g.,} for the Abilene topology shown in Fig. \ref{fig-abilene-photo}, with 1 source, there were at most three paths per $(S_i,R_j,e_{R_j})$ triplet, but for the larger Exodus topology (described in Section \ref{sec:general-simul}) with 5 sources, the average and maximum number of paths per triplet were $9$ and $25$, respectively (for a specific selection of sources in both topologies).}. This motivated us to look into ways for reducing the number of paths per triplet\footnote{For example, if we are willing to accept less than 100\% path identifiability, we can randomly assign coefficients without checking for identifiability conditions. From the observed probes at the receivers, we then infer the subset of paths that failed by looking up a table which is pre-computed by solving a subset sum problem. 
If we identify one or more subsets of paths that when failing lead to the same observed probe, we can use a heuristic, {\em i.e.,} pick one of the candidate subsets, their union or intersection. We then feed the state of the paths to the BP estimation algorithm. This is the approach we follow in the simulation Section.}. Even putting aside the exponential number of paths for a moment, the problem is essentially a subset sum: we receive a symbol at a receiver and we would like to know which combinations of non-failed paths add up to this number. This is a well-known NP-hard problem.

\begin{figure*}[t!]
\centering
\vspace{-2.0em}
{\includegraphics[scale=0.315,angle=90]{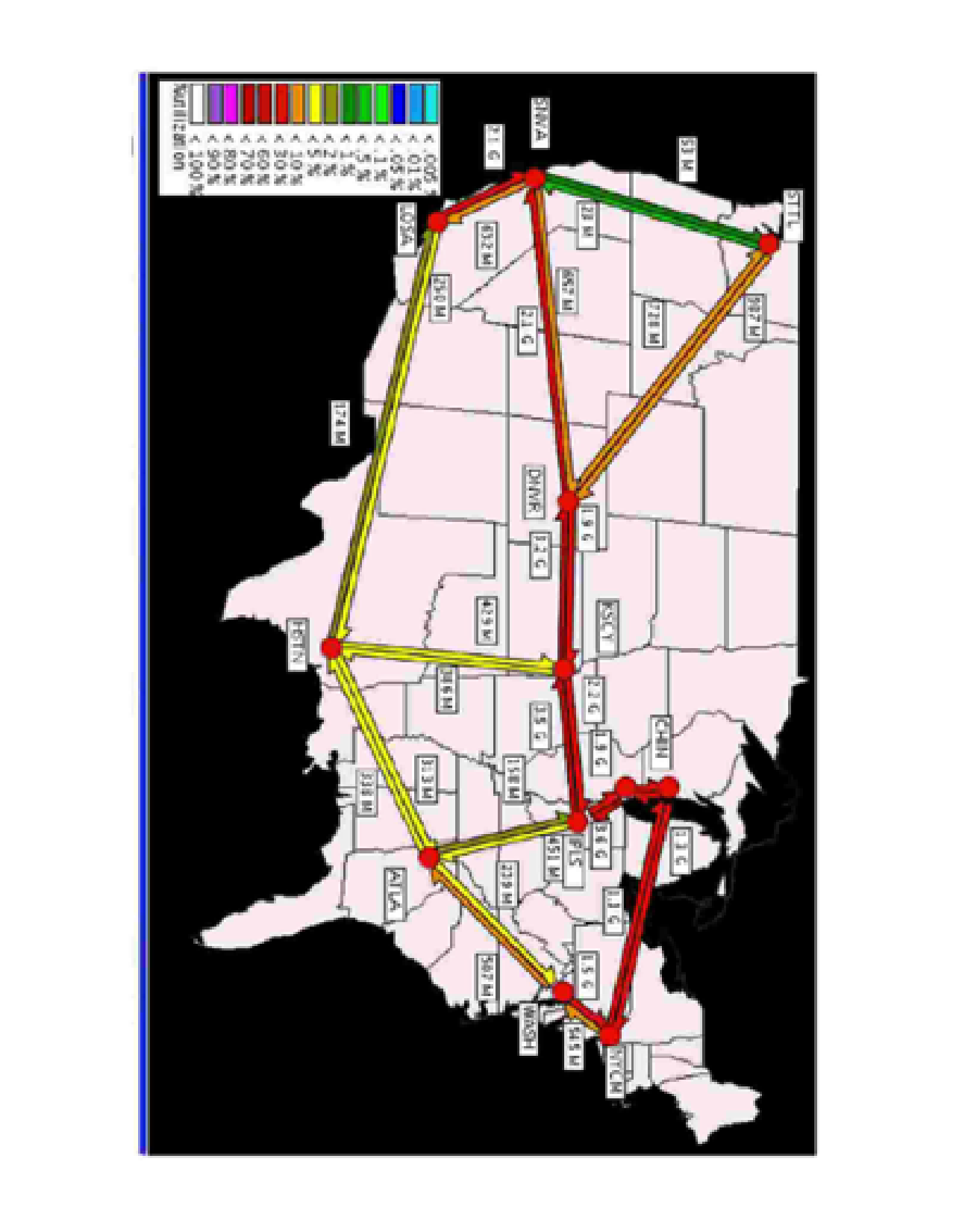}}
{\includegraphics[scale=0.3, angle=90]{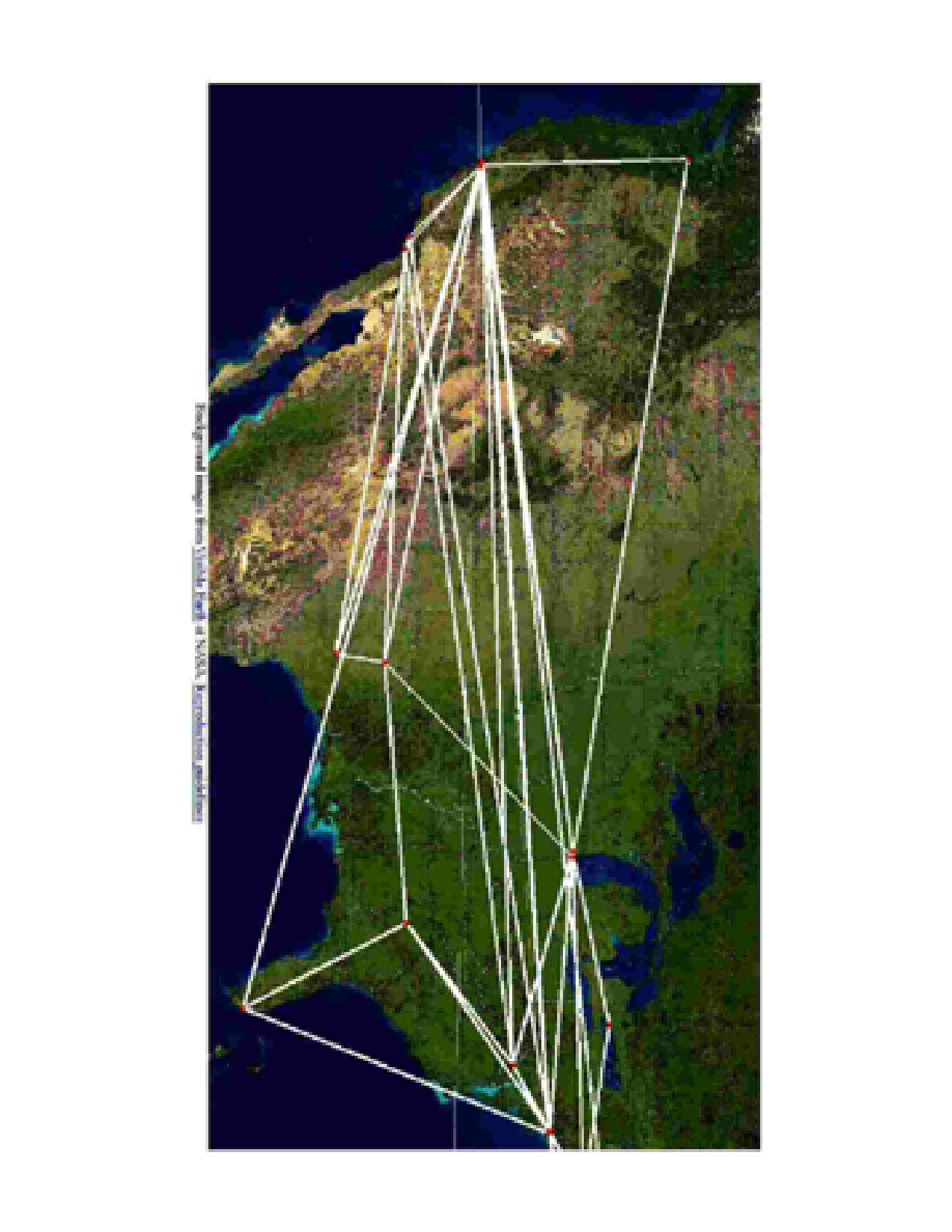}}
\vspace{-3.0em}
\caption{Topologies used in simulations. (a) Left: Abilene Backbone Topology (small research network). (b) Right: Exodus POP Topology (large ISP).} \label{figs-isp-maps}
\end{figure*}

This being said, we do not expect this to be a source of high complexity in practice for several reasons. First, the algorithm that maps the received symbol to a state of paths can be run offline and the table can be computed and stored. This is a static scenario, since coding coefficients remain the same across scenarios. Therefore, we incur setup complexity once in the beginning, but not during run time. All we need to do every time we receive a symbol is
just a table lookup, which is inexpensive ($O(1)$), when implemented using hash tables. 
Second, this design is only necessary if one wants to infer {\em all}
links at the same time, which may be an overkill in practice. The most typical use of our framework in
practice will be for inferring the loss rates of a few congested specific links of interest, in which case we do not need to keep track of the state of all paths, and the size of the table reduces.

\subsection{\label{sec:general-bp}Loss Estimation using Belief Propagation}
For our approach to be useful in practice, we need to employ a low
complexity algorithm that allows to quickly estimate the loss rate
on every link from all the observations at the receiver. Because MLE is quite involved for general graphs,
especially large ones, we use a suboptimal algorithm instead; in
particular, we use the Belief Propagation (BP) approach that we also used
for trees, see Section \ref{sec:trees-bp}.

There are two steps involved in the algorithm for each round of received probes. First, from the observations, we need to deduce the state of the paths traversed by these probes, as described in Algorithm \ref{alg-observations}. The second step is to use the Belief-Propagation (BP) algorithm, to approximate Maximum Likelihood (ML) estimation.
Once we know which paths worked and which failed in this round, we feed this information into the factor graph, which triggers iterations, and
leads to the estimate of the success rate. Similarly to trees, the factor graph is again a bipartite graph, between links
and paths containing these links. For example, Fig. \ref{figs-fg} shows the bipartite graph corresponding to the Abilene topology of Fig.\ref{fig-abilene-photo}, which we have been discussing in all the examples in this Section.

The main difference in the general graphs compared to the trees is that there are multiple
(instead of exactly one) paths between a source and a receiver; this has two implications. The first implication is that the design of the coding scheme must allow us to deduce the state of these multiple paths between a source, a receiver and
an incoming edge at the receiver $(S_i,R_j, e_{R_j})$; this has been extensively discussed in the previous Section on code design. The second
implication is that there are more cycles in the factor graph of a general graph, which affects the estimation accuracy of the BP algorithm.

In general, the performance of the BP algorithm depends on the properties of the factor graph. Several problems have been identified in the BP
literature depending on the existence of cycles, the ratio of factors vs. variables (\eg links per path) and other structural properties (stopping sets,  trapping sets, diameter). Fixing such BP-specific problems
are outside the scope of this paper and is a research topic on its own. However, we did address two of the aforementioned problems, using existing
proposals from the BP literature.
First, for performance enhancement in the presence of cycles in the factor
graph, we used a modification of the standard BP, similar to what was proposed
in the context of error correcting codes \cite{bp1}. The idea  is to combat the overestimation
of beliefs by introducing a multiplicative correction factor $a<1$ for messages passing between
variables (links) and factors (paths)\footnote{In the same way, we could also use an additive correction factor instead. Making those factors adaptive could give even better results.
In the same paper \cite{bp1}, additional modifications of the factor graph (junction tree algorithm, and generalized belief propagation) to deal
with cycles have been proposed, which we did not implement in this paper. Other possible modifications of the BP include: \cite{bp2}, a multistage iterative decoding algorithm that combines belief propagation with ordered
statistic decoding, and reaches close to the performance of MLE although with a higher complexity than BP; and \cite{bp3}, which
uses a probabilistic schedule for message passing between variable nodes and check nodes in the factor
graph instead of simple message flooding at every iteration.}. Second, we designed the orientation algorithm to traverse the actual topology in a breadth-first manner in order to produce short paths and thus small ratio
of links per path in the factor graph, which has a good effect on the BP performance. More generally, we note that the properties of the factor graph depend on the orientation algorithm. One could optimize
the orientation algorithm to achieve desired properties of the factor graph. In this paper, we have not done modifications other than the two
mentioned above because (i) the overall estimation worked well in all the practical cases we tried, and (ii) the design of a factor graph for better BP performance is a research topic on its own and outside the scope of this work.

\subsection{\label{sec:general-simul}Simulation Results}

We now present extensive simulation results over two realistic topologies.

\subsubsection{Network Topologies}

We used two realistic topologies for our simulation, namely the
backbones of Abilene and Exodus shown in Fig.~\ref{figs-isp-maps}.
Abilene is a high-speed research network operating in the US and
information about its backbone is available online \cite{abilene}.
Exodus is a large commercial ISP, whose backbone map was inferred by
the Rocketfuel project \cite{rocketfuel}. Both topologies were
pre-processed to create logical topologies that have degree at least~3.
For Exodus, nodes with degree~$2$
were merged to create a logical link between the neighbors of such
nodes, while nodes with degree 1 were filtered; the resulting logical
topology contains $48$ nodes and $105$ links. For the Abilene topology,
due to its small size, in addition to merging some links in tandem,
more links were added; the modified topology comprises of $10$ nodes
and $15$ links, and is the one shown in Fig.~\ref{fig-abilene-photo} and used as an example of a general topology throughout Section \ref{sec:general}.

\begin{figure}[t!]
\centering \subfigure[All possible placements of one
source]{\includegraphics[scale=0.25,angle=-90]{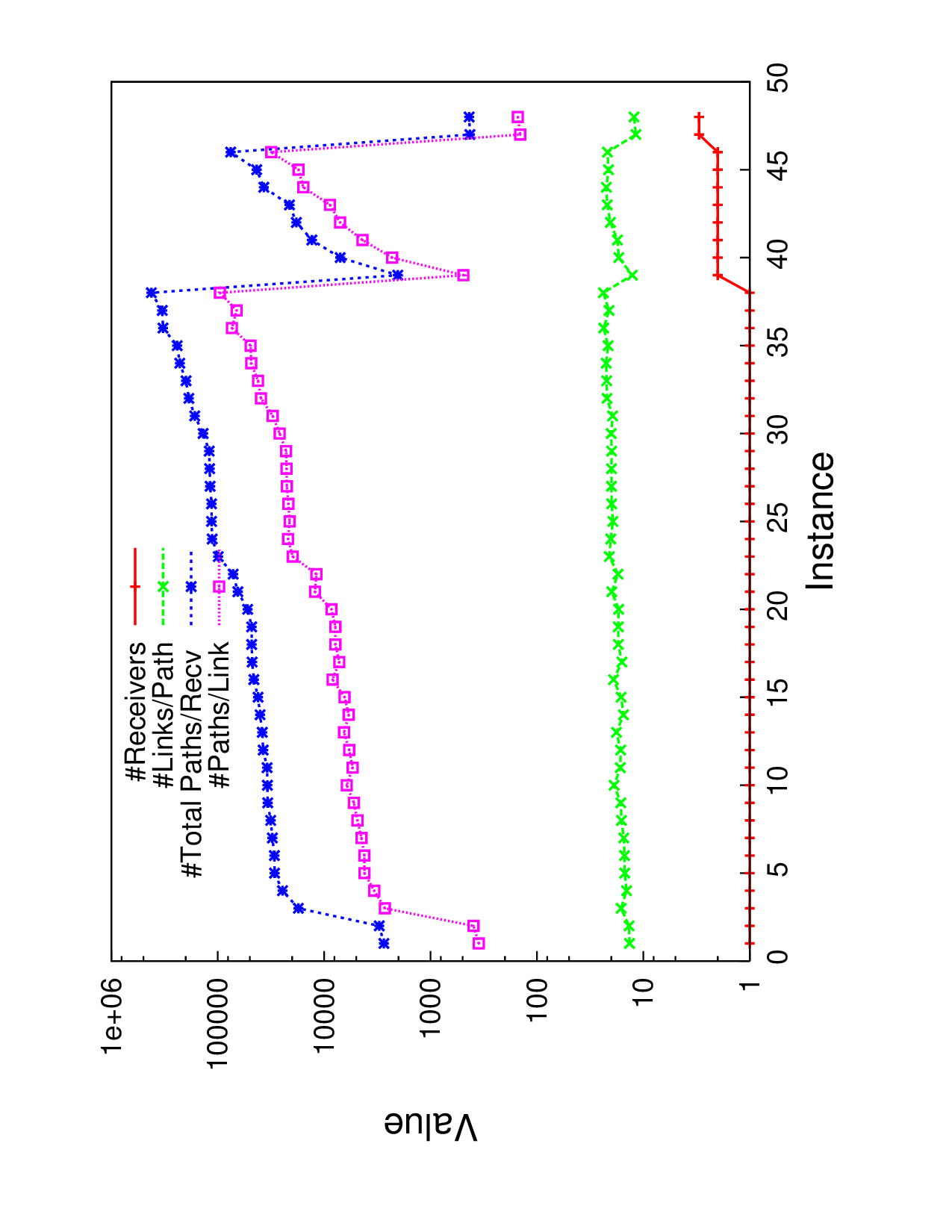}}
\subfigure[All possible placements of two
sources]{\includegraphics[scale=0.25,angle=-90]{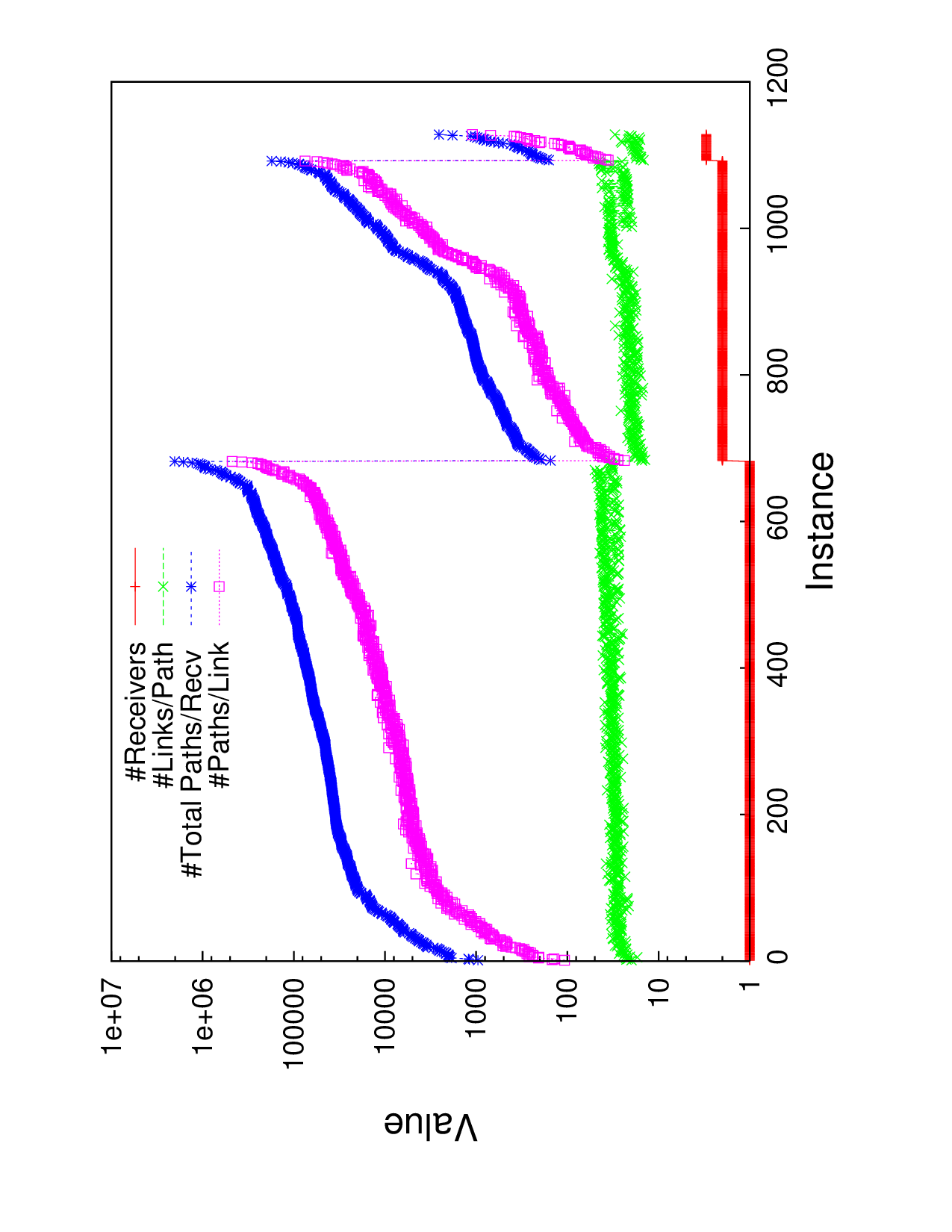}}
\caption{Running the Orientation Algorithm on the Exodus topology.} \label{fig-orientation-exodus} 
\end{figure}

\begin{table}[t!]
\scriptsize
\centering
\begin{tabular}[]{|l|l|l|l|l|l|l|l|}
\hline
Topology & {Srcs}-{Recvs}                   &  Coding      & Links /   &  Paths /      &  Edge Disj. \\
~        & ~     &  Points     &  Path     &    Link       &  Paths     \\
\hline
Abilene    & \{1\}-\{9\}        &     4         &  3.85    &    1.8        &    3               \\
\hline
                & \{5\}-\{6\}        &     4        &  3.71    &   1.73        &    3              \\
\hline
                & \{9\}-\{2\}        &     4        &   4.28   &     2.0       &    2            \\
\hline
                & \{1,9\}-\{7\}      &     5        &   3.25     &   1.73        &    4                \\
\hline
                & \{3,6\}-\{9\}      &     5        &    4      &    2.13      &    4             \\
\hline
                & \{9,6\}-\{4\}      &     5        &   3.25     &   1.73       &    4              \\
\hline
                & \{1,5,9\}-\{7\}    &     5        &   3.2       &    2.13        &    5               \\
\hline
                & \{1,4,10\}-\{9\}   &     6        &     3      &      2.33       &   6                 \\
\hline
Exodus         & \{39,45\}-\{30,40\}       &    25     &    9.47     &    56.47      &     4             \\
\hline
\end{tabular}
\vspace{0.5em}
\caption{Properties of the orientation graphs produced by Alg. \ref{alg-senders} for
different topologies and choices of sources.}
\label{table2-results}
\vspace{-0.75em}
\end{table}

For all simulations, the link losses on different links are assumed independent,
and may take large values as they reflect  losses on logical links, comprising
of cascades of physical links, as well as events related to congestion control within the network.

\begin{figure}[t]
\centering
{\includegraphics[scale=0.3, angle=-90]
{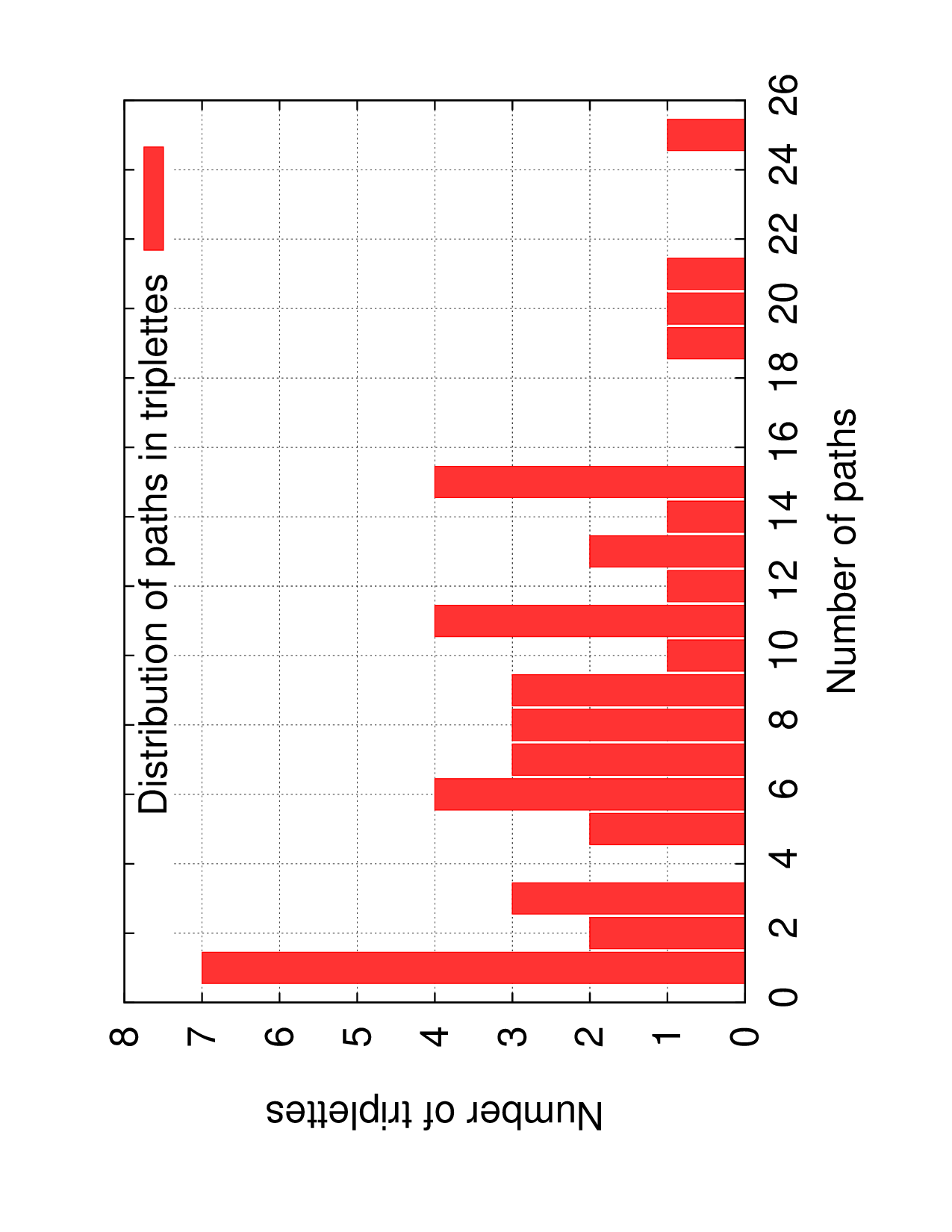}}
\caption{Distribution of the number of paths for all triplets $(S_i,R_j,e_{R_j})$ for the Exodus topology. \label{fig:triplettes}}
\vspace{-0.25em}
\end{figure}

\begin{figure}[t!]
\centering
{\includegraphics[scale=0.3, angle=-90]
{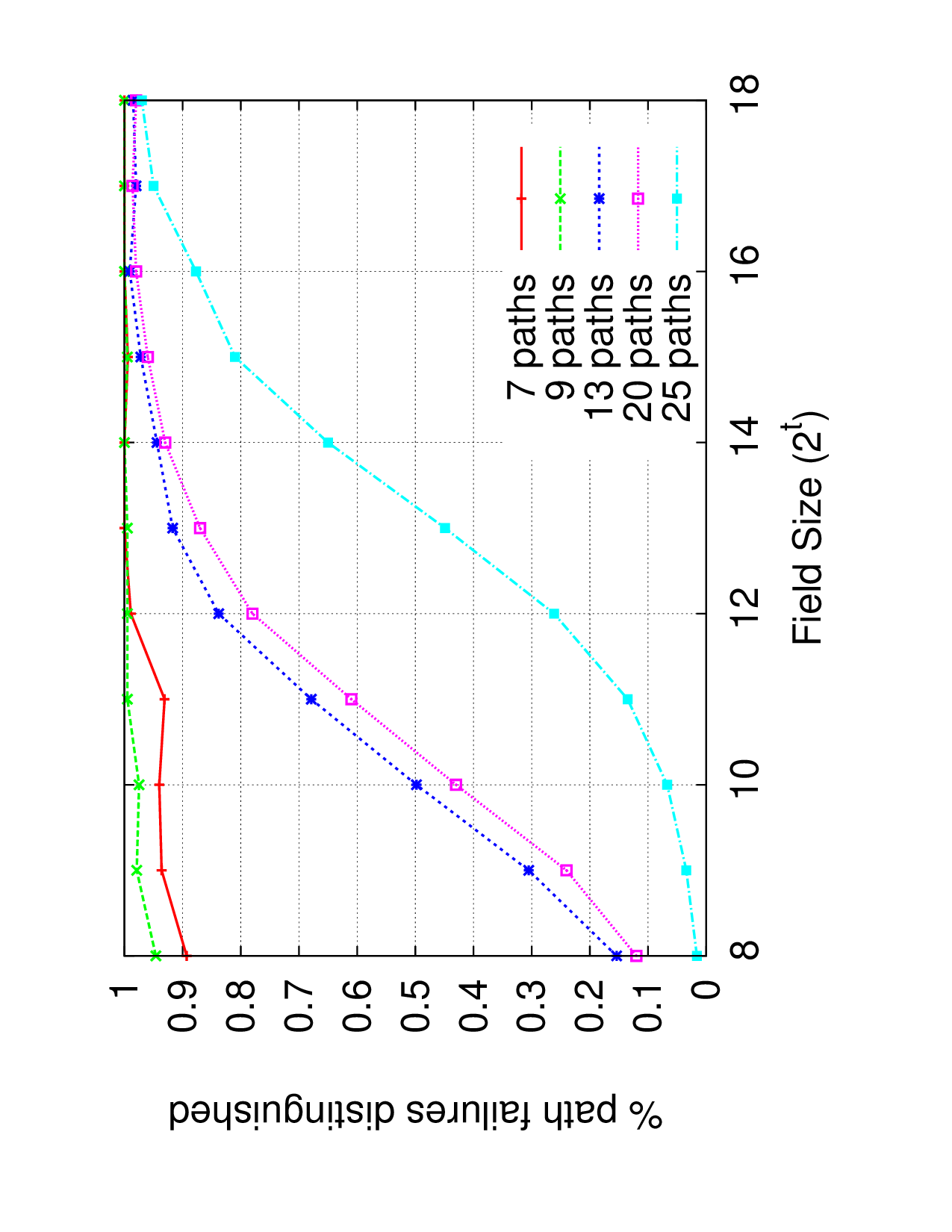}}
\vspace{-0.25em}
\caption{Random code design for the Exodus topology. The X-axis shows the field size over which we choose the coding coefficients randomly: finite fields with different sizes ($F_{2^8}-F_{2^{18}}$). The Y-axis shows the effect on path identifiability (probability of success, defined as the \% of the paths in a triplet $(S_i,R_j,e_{R_j})$ that we can uniquely distinguish from the observed outcome). 
\label{fig:coeffs-identifiability}}
\vspace{-0.75em}
\end{figure}

\subsubsection{Results on the Orientation Algorithm}

In Fig.~\ref{fig-orientation-exodus}, we consider the Exodus topology
and we run the orientation algorithm for all possible placements of
one and two sources; we call each placement an ``instance''. We are
interested in the following properties of the orientation produced
by Alg. \ref{alg-senders}:
\begin{itemize}
\item the number of receivers: a small number allows
for local collection of probes and easier coordination.
\item the number of distinct paths per receiver: this relates to the
alphabet size and it is also desired to be small.
\item the number of paths per link and links per path: these affect the
performance of the belief propagation algorithm.
\end{itemize}
Fig.~\ref{fig-orientation-exodus} shows the above four metrics, sorting
the instances first in increasing number of receivers and then in
increasing paths/receiver. The following observations can be made.
First, the number of receivers produced by our orientation algorithm
is indeed very small, as desired. Second, the number of links per
path is almost constant, because by construction, the orientation
algorithm tries to balance the path lengths. Third, the
paths/receiver and paths/link metrics, which affect the alphabet
size and the quality of the estimation, can be quite large; however, they
decrease by orders of magnitude for configurations with a few
receivers; therefore, such configurations should be chosen in
practice. Finally, Table~\ref{table2-results} considers different
choices of sources in the (modified) Abilene and Exodus topologies,
and shows some properties of the produced orientation.

\begin{figure}[t]
\centering \subfigure[Estimated vs. real success rate (for 3000 probes)]
{\includegraphics[scale=0.4]{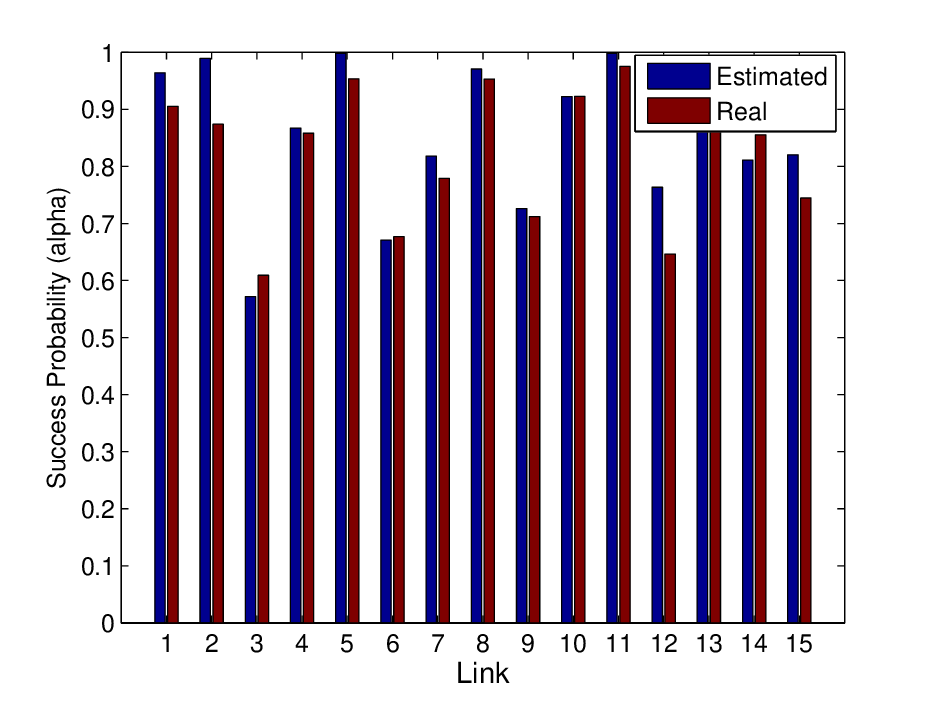}}
\subfigure[$ENT$ metric vs. number of probes]
{\includegraphics[scale=0.4]{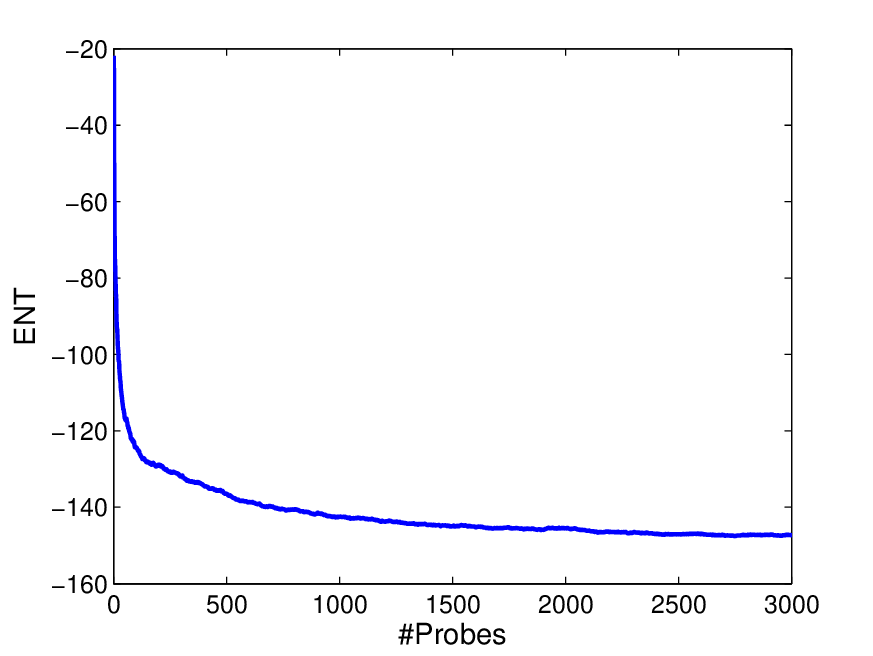}}
\caption{(Modified) Abilene topology. Loss rates ($\overline\alpha$'s) are
different across links: they are assigned inversely proportional to the
bandwidth of the actual links as reported in
\cite{abilene}. 
The resulting average loss rate is $17\%$.}
\label{fig:sc2-mse-alllinks} 
\vspace{-0.5em}
\end{figure}

\subsubsection{\label{sec:evaluate-code}Evaluation of Random Code Design for Real Topologies}

In this Section, we simulate {\em random code design schemes} for the example topologies of Abilene and Exodus.

Consider a particular incoming edge $e_{R_j}$ to a receiver $R_j$ and let $m$ be the number of paths arriving at this edge from the same source $S_i$. If two subsets of paths lead to the same probe, then they are indistinguishable, which leads to lack of
identifiability. In practice, since many of the paths for a triplet $(S_i,R_j,e_{R_j})$ share links between them, we have much less than $2^m$
possible distinct probes. The exact number depends on the connectivity of the topology. In the simulations, the content of the probe from each subset
of paths is used as a key to a hash table.
If two subsets lead to the same probe, then they will end up into the same bucket. The number of unique buckets in the hash
table gives us the number of different combinations of failed/non-failed paths that are distinguishable from each other.
We normalize this number by the total number of possible distinct subsets, and we call this number the probability of success (path identifiability)
of the code design for this particular triplet $(S_i,R_j,e_{R_j})$.

For the {\em Abilene topology} (10 nodes, 15 links), using one source and the orientation algorithm, we obtained a DAG with 1 receiver (Fig.~\ref{fig-abilene-photo}). The maximum number of paths observed for an incoming edge at the receiver was 3. A random choice of coding coefficients over a finite field of size $2^6$ was sufficient to achieve 100\% identifiability of all paths on all edges.

\begin{figure}[t!]
\centering \subfigure[one source: node 1]
{\includegraphics[scale=0.45]{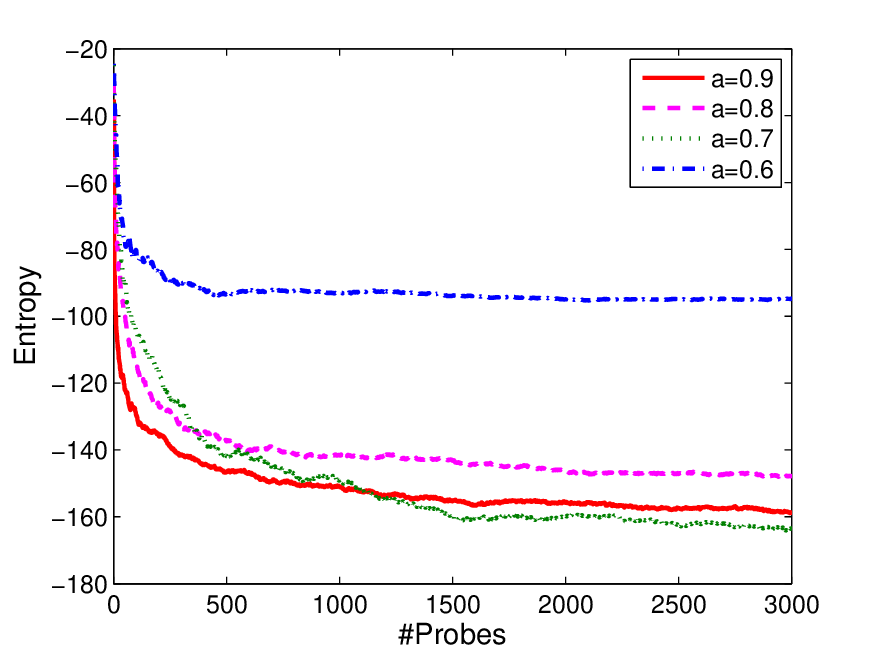}}
\subfigure[two sources: nodes 1 and 9]
{\includegraphics[scale=0.45]{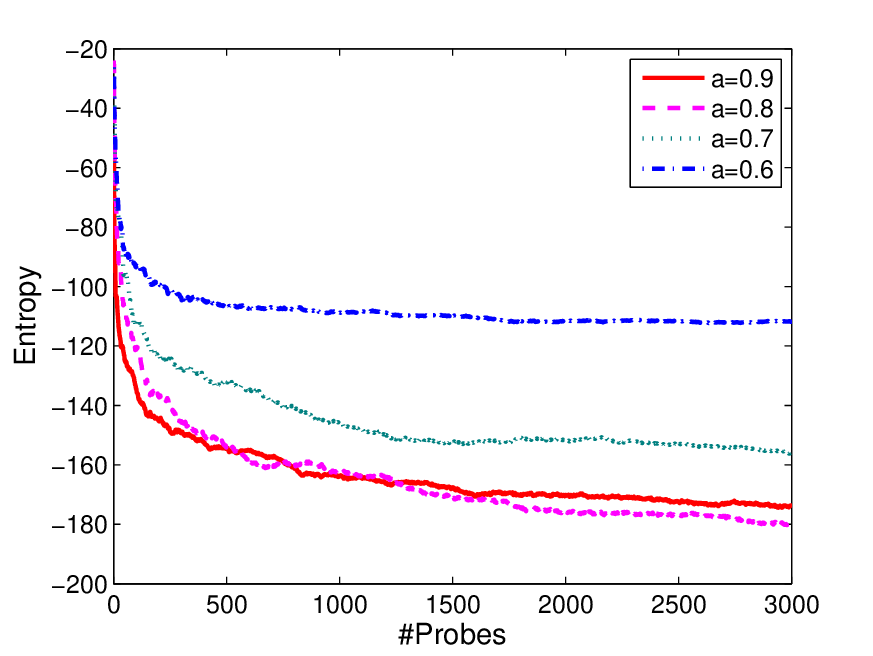}}
\vspace{-0.25em}
\caption{Abilene topology with the same $\alpha$ on all links. 
} \label{fig:abilene-ent-samealpha} 
\vspace{-0.75em}
\end{figure}

For the {\em Exodus topology} (48 nodes, 105 links), we select 5 sources, apply the orientation algorithm, and get three receivers. Fig.~\ref{fig:triplettes} shows the distribution of the number of paths for all triplets $(S_i,R_j,e_{R_j})$. There are 16 incoming edges to all three receivers, 44 triplets $(S_i,R_j,e_{R_j})$ and 377 paths from the sources to the receivers in total; this leads to an average of 9 paths and a maximum of 25 paths per triplet $(S_i,R_j,e_{R_j})$. We visit all nodes in a random order and we assign coefficients from a finite field with increasing size ($2^{10}-2^{18}$).

In Fig.~\ref{fig:coeffs-identifiability}, we show the probability of success in terms of path identifiability for five such triplets $(S_i,R_j,e_{R_j})$, with 7, 9, 13, 20 and 25 number of paths, respectively. The values are averaged over 5 different runs for each field size value. When we use random code selection over a field of size $2^{16}$ or larger, we get  good results: for a field of size $2^{18}$  or larger, we get almost 100\% success for all triplets. These are good results for a large realistic topology such as Exodus, since almost 100\% success is achieved with much less bits than the 1500 bytes of an IP packet. Random assignment of coefficients over a set of prime numbers leads to success probability above 98\% when we use up to prime 907 and field size $2^{18}$ for the linear operations. 

\subsubsection{\label{sec:general-bp-results}Results on Belief-Propagation (BP) Inference}
This Section presents results on the quality of the BP
estimation for different  assignments of loss rates to the links of the two considered topologies.

In Fig.~\ref{fig:sc2-mse-alllinks}, we consider the
Abilene topology with loss rates inversely proportional to the
bandwidth of the actual links; the intuition for this assignement is that links with high bandwidth are less likely to be congested.  We see that the estimation error for
each link ($MSE$) and for all links ($ENT$) decreases quickly. In
Fig.~\ref{fig:abilene-ent-samealpha}, the same topology is considered,
but with the same $\alpha$ on all links: again $ENT$ decreases with
the number of probes; as expected, the larger the $\overline \alpha$, the
slower the convergence; there is not a big difference between having
one or two sources in this case. Fig.~\ref{fig:exodus-ent} shows the
estimation error $ENT$ for the Exodus topology with uniform loss
rates. Finally, Table~\ref{table1-results} shows the results for
different numbers and placements of sources in the (modified) Abilene
topology. Unlike Fig.~\ref{fig:abilene-ent-samealpha}, Table~\ref{table1-results} shows that the choice of sources matters and that increasing the number of sources helps in decreasing the $ENT$.

\subsubsection{NC-Tomography vs. Multicast Tomography}

We finally compare the network-coding approach to traditional multicast tomography for general topologies \cite{general}. In the traditional approach, multiple multicast trees are used to cover the general topology, and the estimates from different trees are combined into one, using approaches in \cite{general}.

\begin{figure}[t!]
\centering
{\includegraphics[scale=0.45]{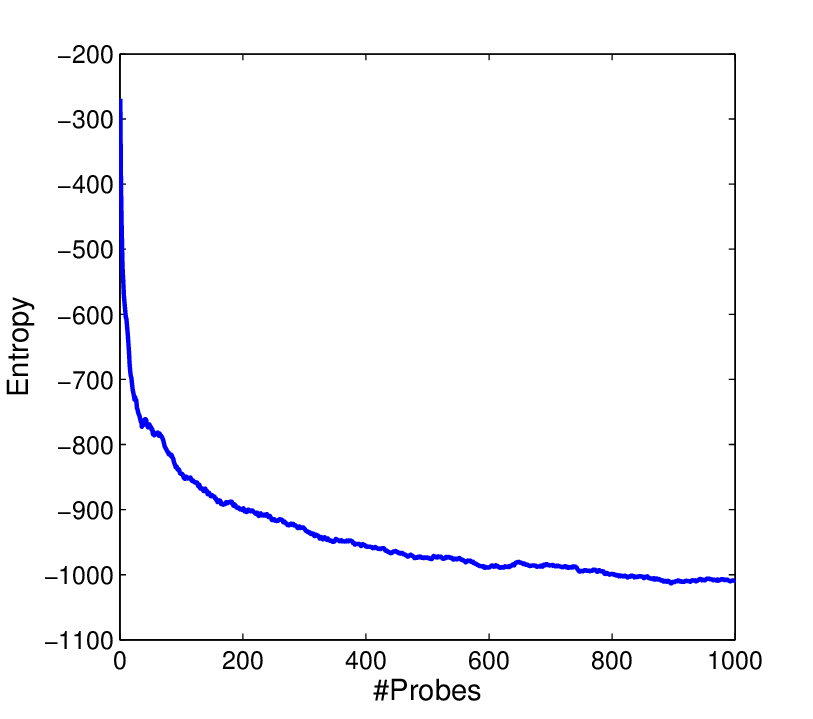}}
\vspace{-1.0em}
\caption{Exodus topology, considering different loss rates across
links: uniformly in $[1\%, 35\%]$. \label{fig:exodus-ent}}
\vspace{-0.5em}
\end{figure}

\begin{table}[t]
\scriptsize
\centering
\begin{tabular}[t]{|l||l|l|l|l|l|l|}
\hline
                       &      \multicolumn{6}{c|}{Entropy for loss rate same over all links } \\
{Srcs}-{Rcvs} &         $\overline \alpha$=0.05 &  $\overline \alpha$=0.1 &  $\overline \alpha$=0.15  &  $\overline \alpha$=0.2   & $\overline \alpha$=0.25   &  $\overline \alpha$=0.3  \\
\hline \hline
 \{1\}-\{9\}           &         -178.6  & -158.8 & -147.9 & -147.7  & -161.6  & -163.5  \\
\hline
\{5\}-\{6\}           &         -178.1  & -158.3 & -149.6 & -154.5  & -160.4  & -156.5  \\
\hline
 \{9\}-\{2\}            &         -176.1  & -163.3 & -155.8 & -161.2  & -166.6  & -151.7 \\
\hline
 \{1,9\}-\{7\}          &         -189.3  & -173.9 & -166.5 & -180.3  & -171.7  & -156.2 \\
\hline
 \{3,6\}-\{9\}          &         -186.2  & -176.2 & -171.3 & -177.8  & -166.7  & -151.4 \\
\hline
 \{9,6\}-\{4\}          &         -186.9  & -174.1 & -169.5 & -178.7  & -173.2  & -165.4 \\
\hline
\{1,5,9\}-\{7\}       &          -199.8  & -190.6 & -180.9 & -184.4  & -172.3  & -166.9  \\
\hline
 \{1,4,10\}-\{9\}       &         -186.4  & -183.9 & -178.3 & -182.3  & -177.3  & -173.2  \\
\hline
\end{tabular}
\vspace{0.25em}
\caption{Quality of Estimation for the (modified) Abilene topology and for different choices of source(s).} \label{table1-results}
\vspace{-1.25em}
\end{table} 

Fig. \ref{fig:comparison-to-traditional}(a) shows the topology we used in the comparison, which is taken from \cite{general}: Nodes $\{0,1,2,5\}$ are sources, nodes $\{12,...19\}$ are receivers, and all remaining nodes (shown as boxes) are intermediate nodes. When the traditional approach is used, probes are sent from each of the four sources to all receivers using a multicast tree, an estimate is computed from every tree, and then, the four estimates are combined into one using the minimum variance weighted average \cite{general}. When the network coding approach is used, the same four sources and the same receivers are used, but probes are combined at intermediate nodes $\{6,7\}$.
For a fair comparison, the same belief-propagation algorithm has been used for estimation over multicast trees and using the network coding approach.
 Fig.~\ref{fig:comparison-to-traditional}(b) shows the performance of both schemes. We see that the network coding approach achieves a better error vs. number of probes tradeoff. The main benefit in this case comes from the fact that the network coding approach eliminates the overlap of the multicast trees below nodes 6 and 7.
 
 \begin{figure}[t!]
\centering
\subfigure[A simulation topology from \cite{general}. Nodes $\{0,1,2,5\}$ are sources, nodes
$\{12,...19\}$ are receivers, and all remaining nodes (shown as boxes) are intermediate nodes.]
{\includegraphics[scale=0.3, angle=-90]{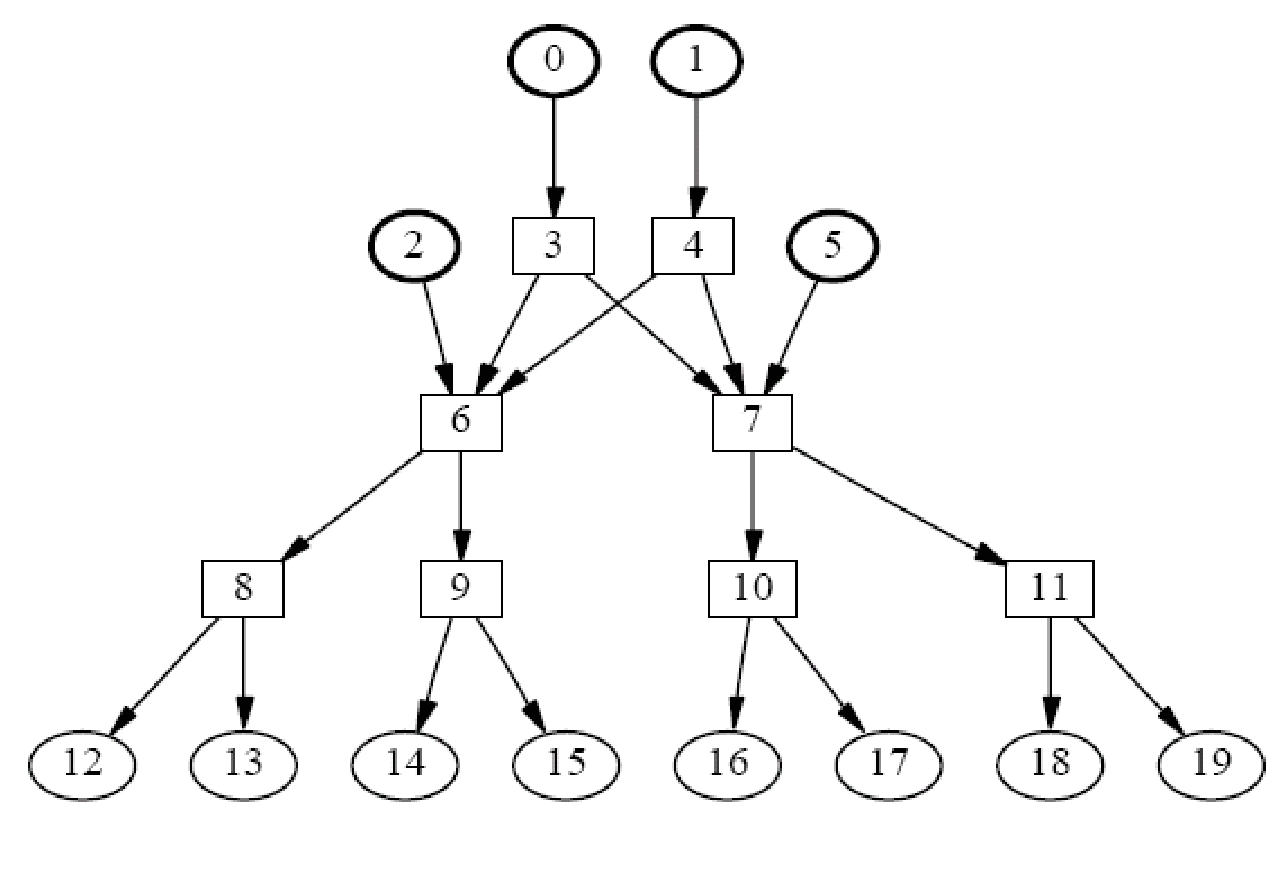}}
\subfigure[Performance of tomography: error ($ENT$) vs. number of probes. Solid and dashed lines correspond to the network coding approach and the traditional approach, respectively. All links have loss rate $\overline \alpha=0.04$.]
{\includegraphics[scale=0.27]{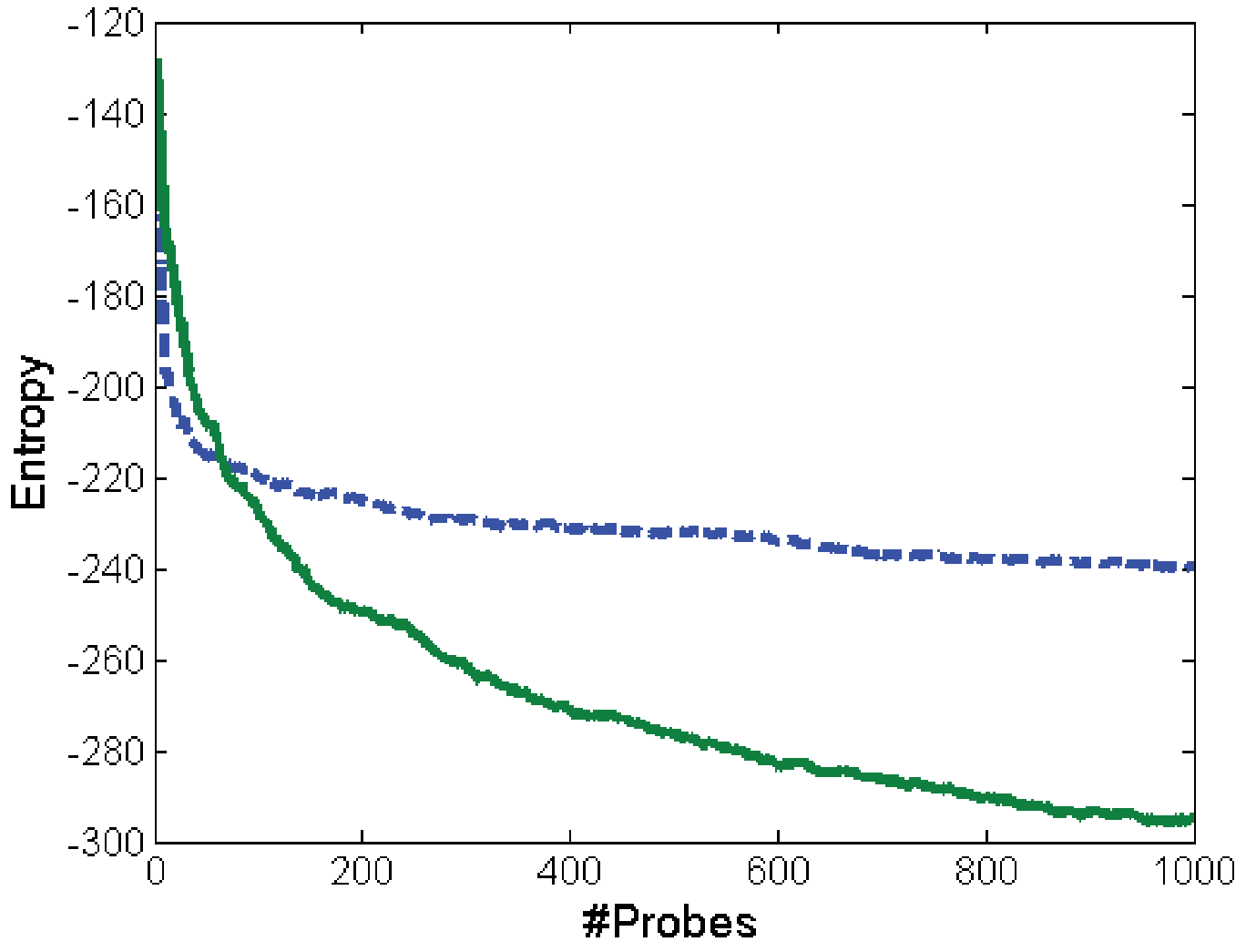}}
\vspace{-0.3em}
\caption{Comparison of network coding approach to traditional tomography. In both cases, the same sources and receivers are used. In the traditional case, four multicast trees are used and the estimates are combined using methods from \cite{general}. In the network coding case, probes are combined wherever they meet in the network (nodes 6 and 7). 
\label{fig:comparison-to-traditional}}
\vspace{-0.7em}
\end{figure}
 
There is of course a wealth of other tomographic techniques that are not simulated here. (For example, we could cover a general graph with unicast probes, but this would perform worse than using multicast probes.) The reason is that \cite{general} is directly comparable to our approach and thus highlights the intuitive benefits of network coding, everything else being equal. Network coding ideas could also be developed for and combined with other tomographic approaches. 

\begin{figure*}[t!]\centering
\begin{center}
\psset{unit=0.045in}
\begin{pspicture}(5,40)(105,100)
\psset{linewidth=0.5mm}
\begin{small}

\rput(15,103){ \large{\em (a) Real Topology}}
\rput(5,95){\circlenode{A}{A}}
\rput(25,95){\circlenode{B}{B}}
\rput(15,77){\circlenode{C}{C}}
\rput(15,63){\circlenode{D}{D}}
\rput(5,50){\circlenode{E}{E}} 
\rput(25,50){\circlenode{F}{F}}
\ncline[linewidth=0.5mm,linestyle=dashed, linecolor=blue]{->}{A}{C}\Bput{$x_1$}
\ncline[linewidth=0.5mm,linestyle=dashed, linecolor=red]{->}{B}{C}\Aput{$x_2$}
\ncline[linewidth=0.5mm,linecolor=purple]{->}{C}{D}\Aput{$x_1+x_2$}
\ncline[linewidth=0.5mm,linestyle=dashed, linecolor=purple]{->}{D}{E}\Bput{$x_1+x_2$}
\ncline[linewidth=0.5mm,linestyle=dashed, linecolor=purple]{->}{D}{F}\Aput{$x_1+x_2$}

\rput(55,103){\large{\em (b) Multicast Tree}} 
\rput(55,95){\circlenode{B2}{}}
\rput(55,63){\circlenode{D2}{D}}
\rput(45,50){\circlenode{E2}{E}}
\rput(65,50){\circlenode{F2}{F}}
\ncline[linewidth=1mm,linecolor=black]{->}{B2}{D2}\Bput{$x_1$ and/or $x_2$}
\ncline[linewidth=0.5mm,linecolor=purple]{->}{D2}{E2}
\ncline[linewidth=0.5mm,linecolor=purple]{->}{D2}{F2}

\rput[Br]{90}(63,95){$\underbrace{\text{~``aggregate''~link~ABCD~~~~}}$}

\rput(95,103){\large{\em (c) Reverse Multicast Tree}}
\rput(85,95){\circlenode{A3}{A}} 
\rput(105,95){\circlenode{B3}{B}}
\rput(95,77){\circlenode{C3}{C}}
\rput(95,50){\circlenode{F3}{}}
\ncline[linewidth=0.5mm, linestyle=dashed, linecolor=blue]{->}{A3}{C3}\Bput{$x_1$}
\ncline[linewidth=0.5mm, linestyle=dashed, linecolor=red]{->}{B3}{C3}\Aput{$x_2$}
\ncline[linewidth=1mm,linecolor=purple]{->}{C3}{F3}\Bput{$x_1$, $x_2$, or $x_1+x_2$}

\rput[Br]{90}(103,79){$\underbrace{\text{~``aggregate''~link~CDEF~~~~}}$}

\end{small}
\end{pspicture}
\end{center}
\vspace{-2.0em}
\caption{\label{fig:reductions} {\bf Reductions.} {\bf (a)} depicts the real topology based on conditions 1(b) and 2(b). The goal is to identify the loss rate of link $CD$. $A, B$ are sources and $E,F$ are receivers. $AC, BC, DE, DF$ can be either links or paths from/to the sources/receivers. In {\bf (b)}, we reduce the real topology to a multicast tree with three links: ``aggregate'' link $ABCD$ (which transmits {\em some} symbol, $x_1, x_2$ or $x_1\oplus x_2$, below $D$), and links $DE, DF$ (which broadcast that symbol). In {\bf (c)}, we reduce the real topology to a reverse multicast tree with three links: $AC, BC$ and ``aggregate'' link $CDEF$ (which transmits the symbol coming in $CD$ to at least one receiver). As shown in detail in Table \ref{tabl_1}, the observations in the reduced topologies are simply unions of disjoint observations in the original topology, and their probabilities are the sum of the probabilities of the corresponding observations in the original topology.}
\end{figure*}
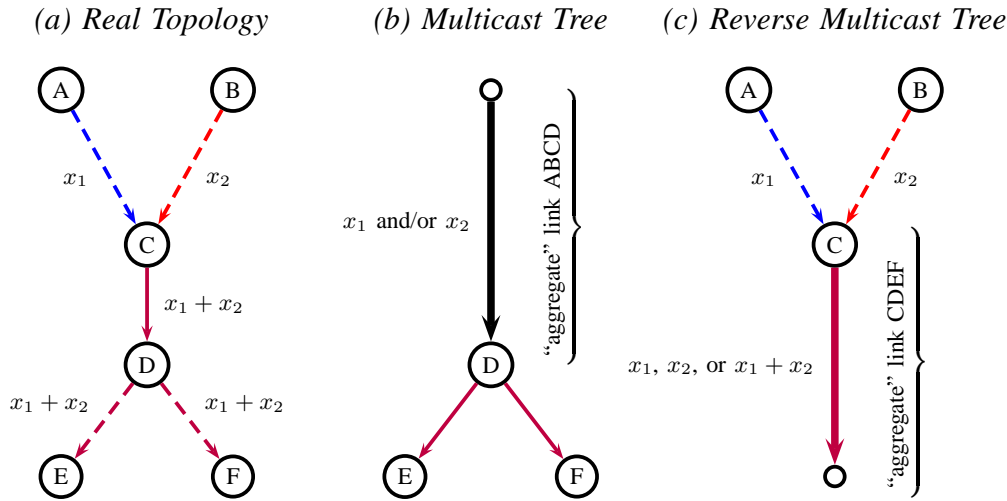

\section{\label{sec:conclusion}Conclusion}

In this paper, we revisited the well-studied and hard problem of link loss tomography using new techniques in networks equipped with network coding capabilities. We developed a novel framework for estimating the loss rates of some or all links in this setting. We considered trees and general topologies. We showed that network coding capabilities can improve virtually all aspects of loss tomography, including identifiability, routing complexity, and the tradeoff between estimation accuracy and bandwidth overhead.

\appendices
\section{Proofs of Theorems}

\subsection*{Appendix A.1: Proof of Theorem~\ref{theorem_1}}

\begin{proof}
To prove that conditions 1 and 2 are necessary, consider that 
condition 1 is not satisfied. Then $C$ can only receive one stream of
probe packets, since it is connected to only one source. There
exists an edge $e$ through which this stream of probe packets
arrives at node $C$. The link success rate associated with link $CD$
cannot be distinguished from the link success rate associated with
link $e$. More formally, if $\alpha_e$ is the success probability associated with link $e$
and $\alpha_{CD}$ is the success probability associated with link $CD$, then the variables $\alpha_e$ and $\alpha_{CD}$ appear always together ({\em e.g.,} in the expression $1-\alpha_e\alpha_{CD}$ in the probability function $P_{\alpha}$). Therefore there are many pairs of values $(\alpha_e, \alpha_{CD})$ that lead to the same $P_{\alpha}$. According to definition \ref{def_ident}, this means that link $CD$ is not identifiable.
 Similar arguments hold for the other conditions and this completes the forward argument.

\smallskip
Next, we prove that conditions 1 and 2 are sufficient for identifying link $CD$.

 First, let us consider Case 1, where Conditions 1(b) and 2(b) are satisfied. The remaining cases are similar and are discussed at the end of this proof. These conditions mean that the paths involving link $CD$ should be as depicted in Fig. \ref{fig:reductions}(a): $AC, BC, DE, DF$ can be either links or paths from/to the sources/receivers, respectively. In the latter case (when $AC,BC$ and $DE, DF$ depict paths), the path success probability can be computed from the success rates of the corresponding links. Essentially, Case 1 (also shown in Fig. \ref{fig_basic} -- 5-links, Case 1) generalizes the motivating example of Section \ref{sec:singlelink}, where the links $AC, BC, DE, DF$ are replaced by paths $AC, BC, DE, DF$ with the same success probability.

In Definition \ref{def_ident}, and consistently with \cite{minc}, we defined the links as identifiable iff the probability distribution $P_\mathbf{\alpha}$ uniquely determines the parameters $\mathbf{\alpha}$\footnote{Recall that $\alpha$ refers to the vector of all success probabilities, and $\alpha_e$ refers to the success probability of one particular edge $e$.}, {\em i.e.,} iff  for $\alpha, \alpha' \in (0,1]^{|E|}$, $P_\alpha=P_{\alpha'}$ implies $\alpha=\alpha'$. To establish the identifiability of link $CD$, we repeatedly apply the identifiability result for a 3-link multicast tree (from \cite{minc})  and for a reverse multicast tree (leveraging the reversibility property in Theorem \ref{th_reverse}, Section \ref{sec-MINCandRMINC}). Consider the two reductions of the actual 5-link topology (as described in Section~\ref{sec-reductions}), to a multicast tree (MT) shown in Fig.~\ref{fig:reductions}(b), and to a reverse multicast tree (RMT) shown in Fig.~\ref{fig:reductions}(c), respectively.

In case of the 3-link multicast tree consisting of $ABCD$ and $DE, DF$, Theorems 2 and 3 in \cite{minc} guarantee that  $\alpha_{DE}$, $\alpha_{DF}$, and $\alpha_{ABCD}$ are identifiable. Namely, $P_{\alpha'}^m= P_{\alpha}^m$ implies ${\alpha'}^m={\alpha}^m$.

On the other hand, since the MLE for the reverse multicast tree has the same functional form as the  multicast tree (as described in Section \ref{sec-MINCandRMINC}), using again the main result of \cite{minc}, we have that $P_{\alpha'}^{r}= P_{\alpha}^{r}$ implies ${\alpha'}^{r}={\alpha}^{r}$.

{\em Proving identifiability in the original topology, via contradiction}.
Consider the 5-link tree in Fig. \ref{fig:reductions}(a), and assume that there exist $\alpha, \alpha' \in (0,1]^{|E|}$ for which $P_{\alpha}=P_{\alpha'}$ and $\alpha \neq \alpha'$.

Use the multicast tree reduction to map the success rates $\alpha$  to  $\alpha^m$ and associated probabilities  $P_\alpha$ to  $P_\alpha^m$.
Similarly,  reduce the success rates $\alpha'$ to  $\alpha'^m$, and associated probabilities $P_{\alpha'}$ to  $P_{\alpha'}^m$. Since $P_{\alpha}=P_{\alpha'}$, we conclude that  $P_{\alpha}^m=P_{\alpha'}^m$. Because the topology in  Fig. \ref{fig:reductions}(b) is identifiable \cite{minc}, we conclude that $\alpha^m=\alpha'^m$. This implies that:
\begin{equation}
\label{eq11}
{\alpha'}_{DE}={\alpha'}_{DE}^m=\alpha_{DE}^m=\alpha_{DE}
\end{equation}
\begin{equation}
\label{eq12}
{\alpha'}_{DF}={\alpha'}_{DF}^m=\alpha_{DF}^m=\alpha_{DF}
\end{equation}
\begin{equation}
\label{eq13}
\begin{split}
(1-{\overline{\alpha'}}_{AC} {\overline{\alpha'}}_{BC}) {\alpha'}_{CD} & = {\alpha}_{ABCD}'^m \\
& =\alpha_{ABCD}^m=(1-\overline\alpha_{AC} \overline\alpha_{BC}) \alpha_{CD} \\
\end{split}
\end{equation}
Applying similar arguments for the reduction to a reverse multicast tree, we get that $\mathbf{\alpha}^{r}=\mathbf{\alpha'}^r$, and as a result:
\begin{equation}
\label{eq21}
{\alpha'}_{AC}={\alpha'}_{AC}^r=\alpha_{AC}^r=\alpha_{AC}
\end{equation}
\begin{equation}
\label{eq22}
{\alpha'_{BC}}={\alpha'}_{BC}^r=\alpha_{BC}^r=\alpha_{BC}
\end{equation}
\begin{equation}
\label{eq23}
\begin{split}
(1-{\overline{\alpha'}}_{DE} {\overline{\alpha'}}_{DF}) {\alpha'}_{CD} & ={\alpha}_{CDEF}'^r \\
& =\alpha_{CDEF}^r=(1-\overline\alpha_{DE} \overline\alpha_{DF}) \alpha_{CD} \\
\end{split}
\end{equation}
From Equations (\ref{eq11})-(\ref{eq23}), we conclude that $\alpha=\alpha'$, which is a contradiction. Therefore, $P_{\alpha}=P_{\alpha'}$ implies that $\alpha=\alpha'$, {\em i.e.,} identifiability. 

The remaining cases (combinations of clauses (a), (b), (c) in Conditions 1 and 2, other than 1(b) and 2(b)) are shown in Fig. \ref{fig_basic}. For example, Condition 1(a) or 2(a) corresponds to the 3-link multicast or reverse multicast tree, and the MINC MLE can then be used directly on these trees. Conditions 1(c) or 2(c) lead to the Cases 2-4 in Fig. \ref{fig_basic}, and similar reductions as in Case 1 can be used to prove identifiability. This completes the proof. 
\end{proof}

\subsection*{Appendix A.2: Estimating $\alpha_{CD}$}

\begin{proof}
Let us denote the outcomes in which link $CD$ has worked by $x_{CD}$; the outcomes in which at least one of the upstream paths to $C$ has worked by $x_{up}$; and the outcomes in which at least one of the downstream paths after $D$ has worked by $x_{dn}$. For the intersection of any two of these outcomes, \eg $x_{up}$ and $x_{dn}$, we use the notation $x_{up,dn}$. The independence of link loss rates indicates that $x_{up}$, $x_{dn}$, and $x_{CD}$ are independent. Therefore: 
\begin{equation}
\label{eq-alphaCDproof}
\hat{\alpha}_{CD} = \hat{p}(x_{CD})=  \hat{p}(x_{CD}|x_{up,dn}) = \frac{\hat{p}(x_{CD} \, \& \, x_{up,dn})}{\hat{p}(x_{up})\hat{p}(x_{dn})}
\end{equation}
The numerator equals $1-\hat{p}([0,0,\cdots,0])=\hat{\gamma}^r_C = \hat{\gamma}^m_D$. Also we have that:
\begin{equation}
\begin{split}
\hat{p}(x_{dn}) & =\hat{p}(x_{dn}|x_{up,CD})=\hat{p}(x_{dn}|X_D\neq [0,...,0])\\
& =1-\hat{p}(x^c_{dn}|X_D\neq [0,...,0])=1-\prod_{j=1}^{Q} \overline \beta^m_{d(D)_j}
\end{split}
\end{equation}
We can derive a similar expression for $\hat{p}(x_{up})$. Therefore:
\begin{equation}
\label{eq-alphaCDproof-2}
\hat{\alpha}_{CD}=\frac{1-\hat{p}([0,0,\cdots,0])}{(1-\prod_{i=1}^{P} \overline \beta^r_{f(C)_i})(1-\prod_{j=1}^{Q} \overline \beta^m_{d(D)_j})} 
 \end{equation}
By writing Eq.(\ref{eq-MCbeta1}) for $\overline \beta^m_D$ in Fig.~\ref{fig-generalMCreduction}, and by writing Eq.(\ref{eq-RMCbeta1}) for $\overline \beta^r_C$ in Fig.~\ref{fig-generalRMCreduction}, we conclude that:
\begin{equation} 
\label{eq-denominator}
1-\prod_{j=1}^{Q} \overline \beta^m_{d(D)_j} = \frac{\beta^m_D}{\alpha^m_{agg}} = \frac{\gamma^m_D}{A^m_D} \; , \; 1-\prod_{i=1}^{P} \overline \beta^r_{f(C)_i} = \frac{\beta^r_C}{\alpha^r_{agg}} = \frac{\gamma^r_C}{A^r_C}
\end{equation}
Eq.(\ref{eq-alphaCDestimate}) then follows from replacing these results into Eq.(\ref{eq-alphaCDproof-2}).
\end{proof}

\subsection*{Appendix A.3: Proof of Lemma~\ref{thr-bottomLinks}} 

\begin{proof}
In \cite{minc}, it has been shown that the likelihood function of the reduced multicast tree in Fig.~\ref{fig-generalMCreduction}, $\mathcal{L}^m(\alpha^m)$, can be written as the sum of three distinct parts in which the derivative $\partial \log p^m(x^m)/\partial \alpha^m_k$ is constant. These parts are $\Omega^m(k)$, the $\Omega^m(f^i(k)) \backslash \Omega^m(f^{i-1}(k))$, which we represent by $\Omega^m_2$ for simplicity, for $i=1,2,\cdots,l^m(k)$, and ${(\Omega^m(0))}^c$. The derivative in these parts is equal to $\frac{1}{\alpha^m_k}$, $\frac{1}{\overline \beta^m_{f^{i-1}(k)}}\frac{\partial \overline \beta^m_{f^{i-1}(k)}}{\partial \alpha^m_k}$, and $\frac{1}{\overline \beta^m_{0}}\frac{\partial \overline \beta^m_{0}}{\partial \alpha^m_k}$, respectively. Thus, the likelihood equation can be written as: 
\begin{equation}
\label{eq-MClikelihoodDerivative}
\begin{split}
\frac{\partial \mathcal{L}^m}{\partial \alpha^m_k} & =  \frac{1}{\alpha^m_k} \sum_{x^m\in \Omega^m(k)} n^m(x^m) \\
& + \sum_{i=1}^{l^m(k)} \{ \frac{1}{\overline \beta^m_{f^{i-1}(k)}}\frac{\partial \overline \beta^m_{f^{i-1}(k)}}{\partial \alpha^m_k} \sum_{x^m\in \Omega^m_2} n^m(x^m) \} \\
& + \frac{1}{\overline \beta^m_{0}}\frac{\partial \overline \beta^m_{0}}{\partial \alpha^m_k} \sum_{x^m\in {(\Omega^m(0))}^c} n^m(x^m)
\end{split}
\end{equation}
Similarly, we can split the likelihood function of the original tree, $\mathcal{L}(\alpha)$, into three parts in which $\partial \log p(x)/\partial \alpha_k$ is constant. These parts will be similar to those of a multicast tree, only with $\Omega^m(k)$ as defined for the original tree in Section~\ref{sec-MINCandRMINC}, and with $l^m(k)$ representing the number of ancestors of node $k$ up to node $C$ (instead of the root 0 in the multicast tree). The derivative $\partial \log p(x)/\partial \alpha_k$ over these parts is also similar to the multicast tree, \ie $\frac{1}{\alpha_k}$, $\frac{1}{\overline \beta^m_{f^{i-1}(k)}}\frac{\partial \overline \beta^m_{f^{i-1}(k)}}{\partial \alpha_k}$, and $\frac{1}{\overline \beta^m_{C}}\frac{\partial \overline \beta^m_{C}}{\partial \alpha_k}$, respectively. Therefore, we have that: 
\begin{equation}
\label{eq-generalLikelihoodDerivative}
\begin{split}
\frac{\partial \mathcal{L}}{\partial \alpha_k} & =  \frac{1}{\alpha_k} \sum_{x\in \Omega^m(k)} n(x) \\
& + \sum_{i=1}^{l^m(k)} \{ \frac{1}{\overline \beta^m_{f^{i-1}(k)}}\frac{\partial \overline \beta^m_{f^{i-1}(k)}}{\partial \alpha_k} \sum_{x\in \Omega^m_2} n(x) \} \\
& + \frac{1}{\overline \beta^m_{C}}\frac{\partial \overline \beta^m_{C}}{\partial \alpha_k} \sum_{x\in {(\Omega^m(C))}^c} n(x)
\end{split}
\end{equation}
{\em (i) $\hat{\alpha}^m_k$ vs. $\hat{\alpha}_k$, $k<D$.} We first compare the solutions $\hat{\alpha}^m_k$ of Eq.(\ref{eq-MClikelihoodDerivative}) and $\hat{\alpha}_k$ of Eq.(\ref{eq-generalLikelihoodDerivative}) for $k<D$. From Eq.(\ref{eq-nRelation}), we have:
\begin{equation}
\label{eq-equality1}
\sum_{x\in \Omega^m(k)} n(x) = \sum_{x^m\in \Omega^m(k)} n^m(x^m)
\end{equation}
\begin{equation}
\sum_{x\in \Omega^m_2} n(x) = \sum_{x^m\in \Omega^m_2} n^m(x^m)
\end{equation}
\begin{equation}
\label{eq-equality3}
\sum_{x\in {(\Omega^m(C))}^c} n(x) = \sum_{x^m\in {(\Omega^m(0))}^c} n^m(x^m)
\end{equation}
Therefore, for any link $k$ located below node $D$, we have that:
\begin{equation}
\label{eq-MCreduction1}
\frac{\partial \mathcal{L}^m}{\partial \alpha^m_k} = \frac{\partial \mathcal{L}}{\partial \alpha_k} \Longrightarrow \hat{\alpha}^m_k=\hat{\alpha}_k, \quad k<D
\end{equation}
{\em (ii) $\hat{\alpha}_{agg}^m$ vs. $\hat{\alpha}_{CD}$.} For $\alpha^m_{agg}$ and $\alpha_{CD}$, Eq.(\ref{eq-MClikelihoodDerivative}) and Eq.(\ref{eq-generalLikelihoodDerivative}) consist of only the first and the last terms. We have that: 
\begin{equation}
\label{eq-likelihoodDerivative-agg}
\frac{\partial \mathcal{L}^m}{\partial \alpha^m_{agg}} =  \frac{1}{\alpha^m_{agg}} \sum_{\Omega^m(D)} n^m(x^m) + \frac{1}{\overline \beta^m_{0}}\frac{\partial \overline \beta^m_{0}}{\partial \alpha^m_{agg}} \sum_{{(\Omega^m(0))}^c} n^m(x^m)
\end{equation}
\begin{equation}
\label{eq-likelihoodDerivative-CD}
\frac{\partial \mathcal{L}}{\partial \alpha_{CD}} =  \frac{1}{\alpha_{CD}} \sum_{x\in \Omega^m(D)} n(x) + \frac{1}{\overline \beta^m_{C}}\frac{\partial \overline \beta^m_{C}}{\partial \alpha_{CD}} \sum_{x\in {(\Omega^m(C))}^c} n(x)
\end{equation}
Thus, $\frac{\partial \mathcal{L}^m}{\partial \alpha^m_{agg}} \neq \frac{\partial \mathcal{L}}{\partial \alpha_{CD}}$, but the definition of $\overline \beta^m_k$ indicates that:
\begin{equation}
\label{eq-betaBar0}
\overline \beta^m_0 = 1-\alpha^m_{agg}(1-\prod_{j=1}^{Q} \overline \beta^m_{d(D)_j})
\end{equation}
\begin{equation}
\label{eq-betaBarC}
\overline \beta^m_C = 1-(1-\prod_{i=1}^{P} \overline \beta^r_{f(C)_i})\alpha_{CD}(1-\prod_{j=1}^{Q} \overline \beta^m_{d(D)_j})
\end{equation}
From Eq.(\ref{eq-equality1}), Eq.(\ref{eq-equality3}), Eq.(\ref{eq-betaBar0}), and Eq.(\ref{eq-betaBarC}), we find out that the solutions $\hat{\alpha}^m_{agg}$ of Eq.(\ref{eq-likelihoodDerivative-agg}) and $\hat{\alpha}_{CD}$ of Eq.(\ref{eq-likelihoodDerivative-CD}) are related via:
\begin{equation}
\label{eq-MCreduction2}
\hat{\alpha}_{CD} = \frac{\hat{\alpha}^m_{agg}}{1-\prod_{i=1}^{P} \overline \beta^r_{f(C)_i}}
\end{equation}
\end{proof}

{\em Note:} The proof of Lemma~\ref{thr-topLinks} is similar to the proof of Lemma~\ref{thr-bottomLinks} above. 

\begin{table}[t!]
\scriptsize
\centering
\begin{tabular}{|c|c|c|c|c|c|c|c|}
\hline
\multicolumn{3}{|c|}{Received at} & \multicolumn{5}{c|}{Is link ok?}  \\
\hline
B & E & F &    AC & BC & CD & DE & DF   \\
\hline
- & -  & -   & \multicolumn{5}{c|}{Multiple possible events}  \\
\hline
- & - & $x$              & 1 & 0 & 1 & 0 & 1  \\
\hline
- & $x$ & -              & 1 & 0 & 1 & 1 & 0  \\
\hline
-  & $x$ & $x$           & 1 & 0 & 1 & 1 & 1  \\
\hline
$x$ & - & -              & 1 & 1 & 0 & * & * \\
\hline
$x$ & - & -              & 1 & 1 & 1 & 0 & 0  \\
\hline
$x$ & - & $x$            & 1 & 1 & 1 & 0 & 1  \\
\hline
$x$ & $x$ & -            & 1 & 1 & 1 & 1 & 0  \\
\hline
$x$ & $x$ & $x$          & 1 & 1 & 1 & 1 & 1  \\
\hline
\end{tabular}
\vspace{0.5em}
\caption{Case 2\label{conf_2}}
\vspace{-1.5em}
\end{table}

\begin{table}[t]
\scriptsize
\centering
\begin{tabular}{|c|c|c|c|c|c|c|}
\hline
\multicolumn{2}{|c|}{Received at} & \multicolumn{5}{c|}{Is link ok?}  \\
\hline
B & F & AC & BC & CD & DE & DF   \\
\hline
- & -     & \multicolumn{5}{c|}{Multiple possible events}  \\
\hline
- & $x_1$                & 1 & 0 & 1 & 0 & 1  \\
\hline
- & $x_2$                & 1 & 0 & 0 & 1 & 1  \\
\hline
- & $x_2$                & 0 & * & * & 1 & 1  \\
\hline
- & $x_1 \oplus x_2$     & 1 & 0 & 1 & 1 & 1  \\
\hline
$x_1$  & -               & 1 & 1 & 0 & 0 & 1  \\
\hline
$x_1$  & -               & 1 & 1 & * & * & 0  \\
\hline
$x_1$ & $x_1$            & 1 & 1 & 1 & 0 & 1  \\
\hline
$x_2$ & $x_2$            & 1 & 1 & 0 & 1 & 1  \\
\hline
$x_1$ & $x_1 \oplus x_2$ & 1 & 1 & 1 & 1 & 1  \\
\hline
\end{tabular}
\vspace{0.5em}
\caption{Case 3\label{conf_3}}
\vspace{-1.5em}
\end{table}

\begin{table}[t!]
\scriptsize
\centering
\begin{tabular}{|c|c|c|c|c|c|}
\hline
Received at & \multicolumn{5}{c|}{Is link ok?}  \\
\hline
F & AC & BC & CD & DE & DF   \\
\hline
-      & \multicolumn{5}{c|}{Multiple possible events}  \\
\hline
$x_1$                & 1 & 0 & 1 & 0 & 1  \\
\hline
$x_2$                & 0 & 1 & 1 & 0 & 1  \\
\hline
$x_3$                & 0 & 0 & 1 & 1 & 1  \\
\hline
$x_3$                & * & * & 0 & 1 & 1  \\
\hline
$x_1 \oplus x_2$     & 1 & 1 & 1 & 0 & 1  \\
\hline
$x_1 \oplus x_3$     & 1 & 0 & 1 & 1 & 1  \\
\hline
$x_2 \oplus x_3$     & 0 & 1 & 1 & 1 & 1  \\
\hline
$x_1 \oplus x_2 \oplus x_3$  & 1 & 1 & 1 & 1 & 1  \\
\hline
\end{tabular}
\vspace{0.5em}
\caption{Case 4\label{conf_4}}
\vspace{-1.5em}
\end{table}

\begin{figure}[t!]
\centering
\subfigure[Estimator vs. number of probes]{\includegraphics[width=2.6in]{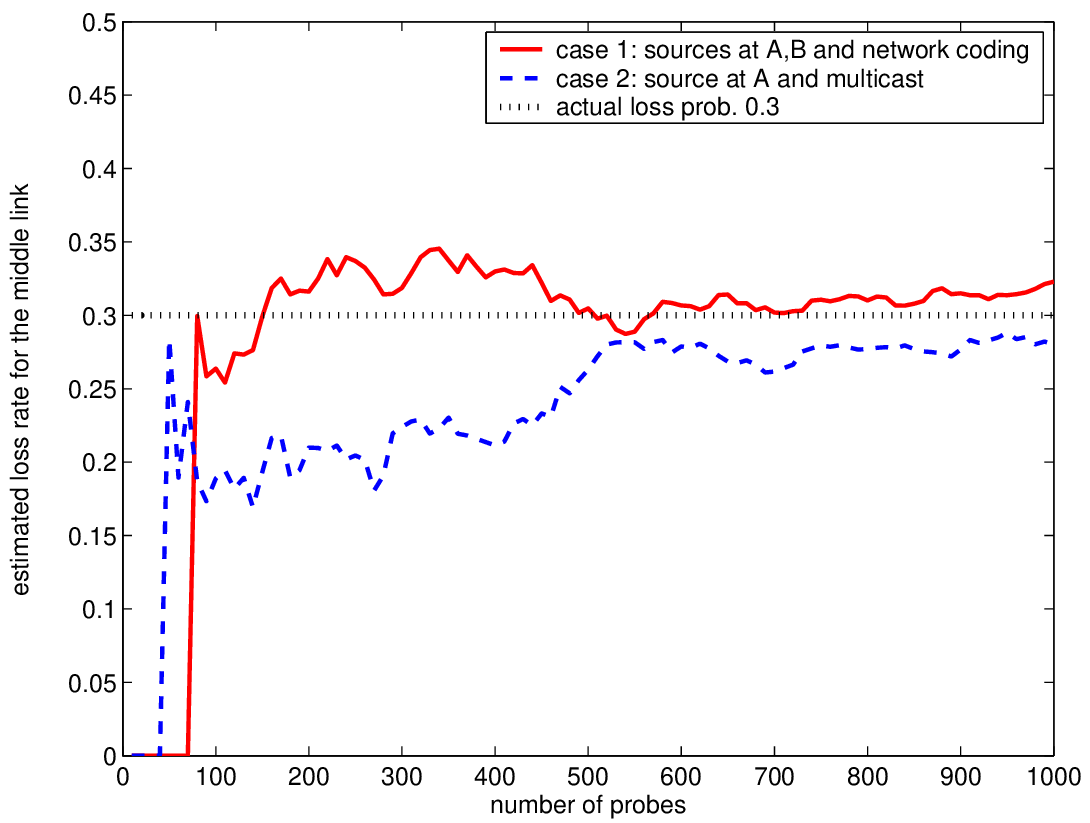}}
\subfigure[Estimation variance vs. number of probes]{\includegraphics[width=2.6in]{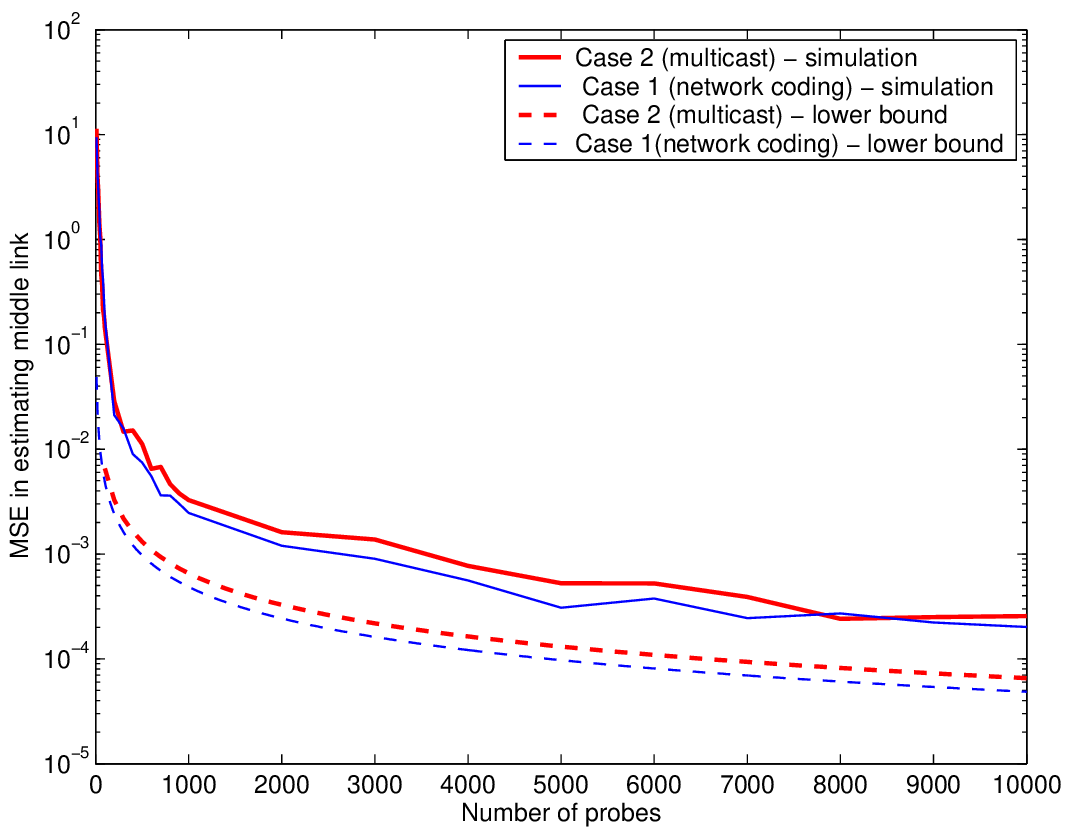}}
\vspace{-0.25em} 
\caption{Convergence of the ML estimator for cases 1, 2} 
\label{fig_exp_conv} 
\vspace{-0.75em}
\end{figure}

\subsection*{Appendix A.4: Proof of Theorem~\ref{thr-MLE}}

 \begin{figure*}[t]
\centering
\subfigure[All links have the same $\overline \alpha$]{
  \includegraphics[width=2.6in]{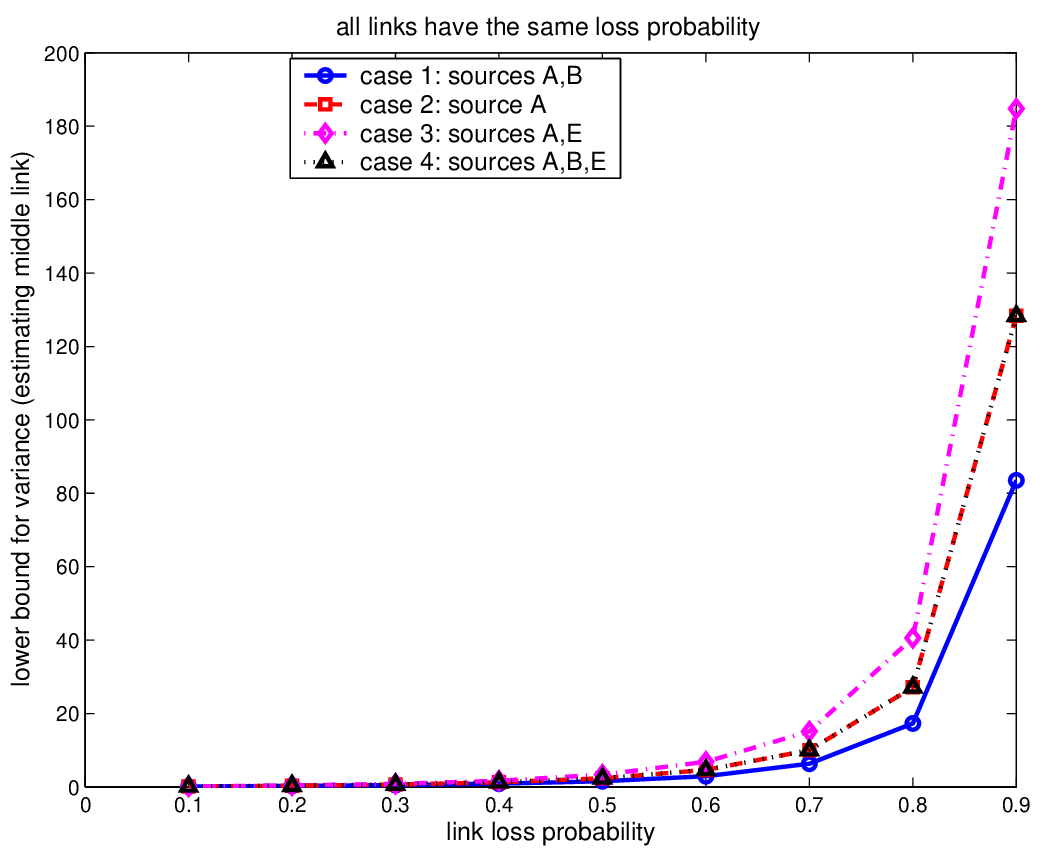}}
  \hspace{0.1\textwidth}
\subfigure[All edge links have the same $\overline \alpha_{edge}=0.5$.]{
  \includegraphics[width=2.6in]{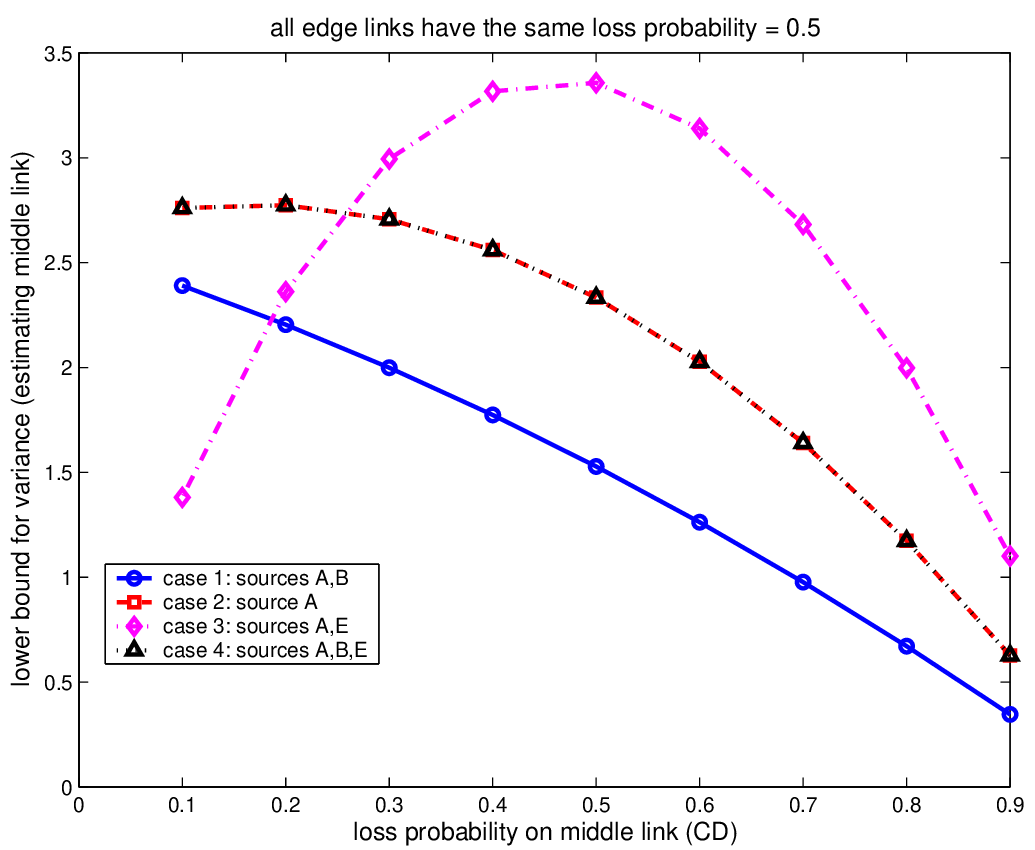}}
\vspace{-0.5em} 
\caption{Comparing the 4 cases in Fig.~\ref{fig_basic} in terms of the lower-bound of variance.} \label{fig_fischer_5}
\end{figure*}

\begin{figure*}[t]
\centering
\subfigure[All edge links have the same $\overline\alpha_{edge}$. Consider
all possible combinations of ($\overline\alpha_{edge},\overline\alpha_{middle}$).]{
  \includegraphics[width=2.6in]{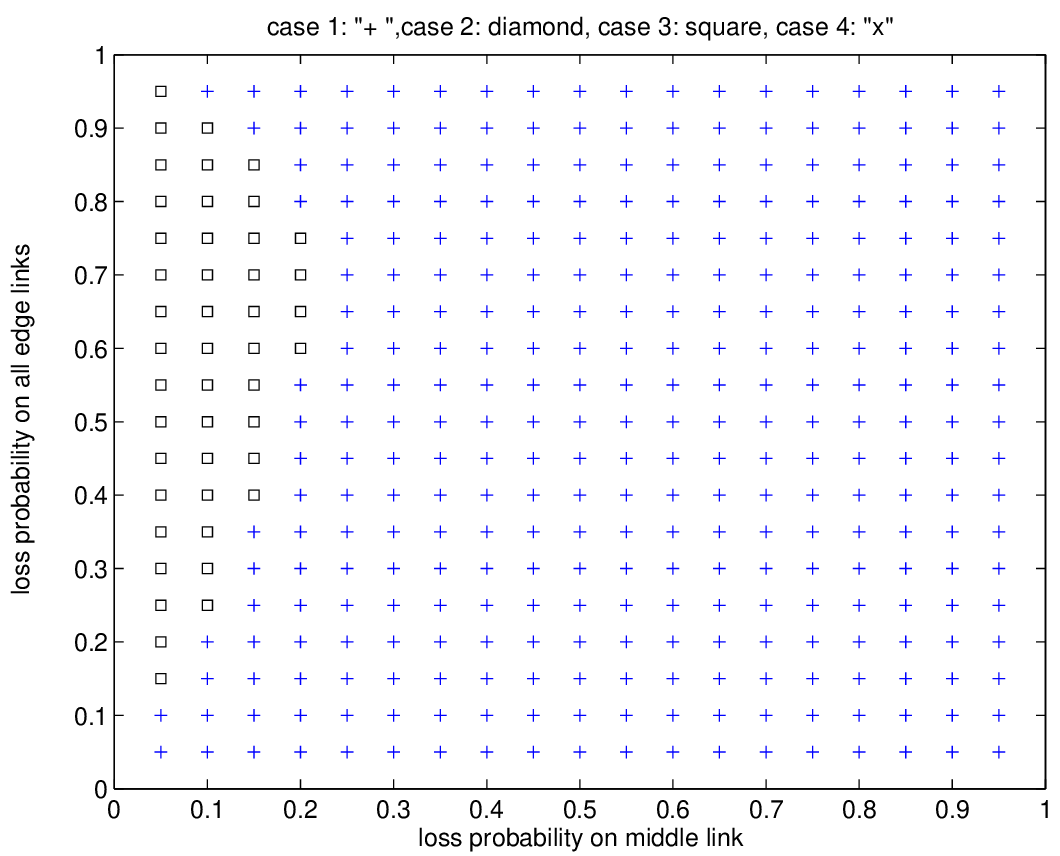}}
  \hspace{0.1\textwidth}
\subfigure[$\overline\alpha_{AC}=\overline\alpha_{BC}=\overline\alpha_{s}$, $\overline\alpha_{DE}=\overline\alpha_{DF}=\overline\alpha_{r}$, $\overline\alpha_{CD}=0.8$. Consider all combinations of ($\overline\alpha_{s},\overline\alpha_{r}$).]{
  \includegraphics[width=2.6in]{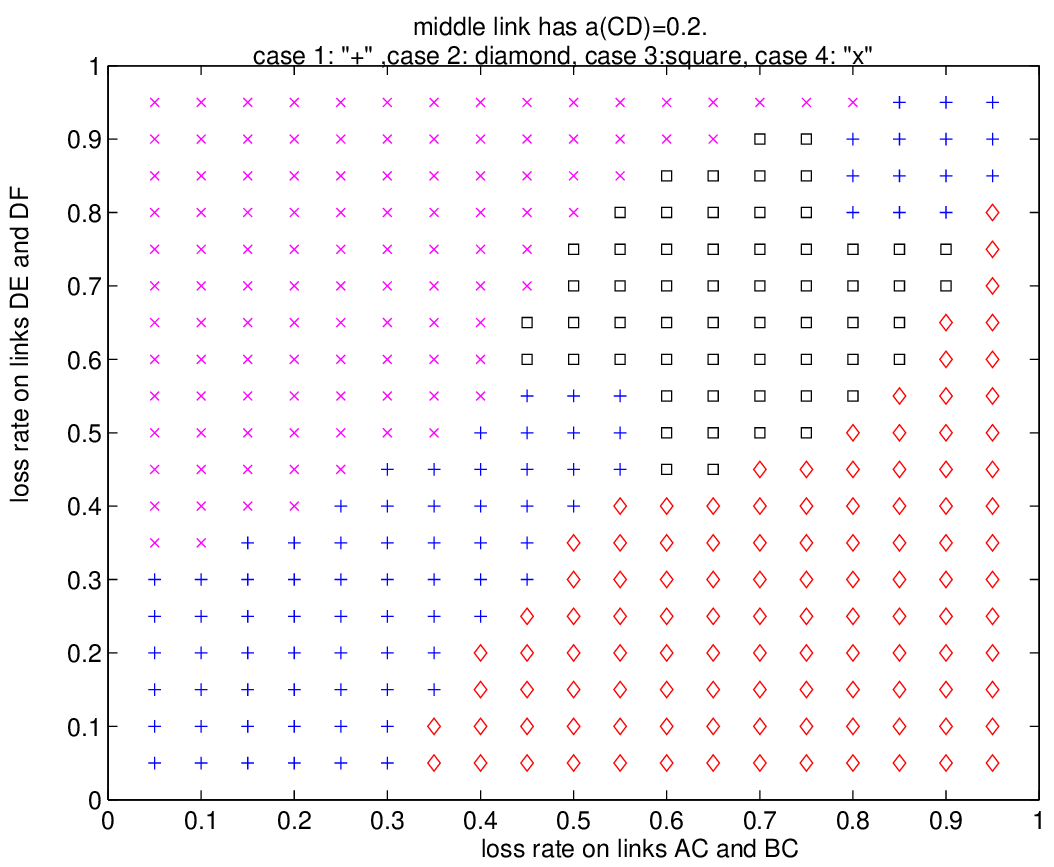}}
\vspace{-0.25em} 
\caption{We indicate which Case (among the four) performs better (has the lowest Cramer-Rao bound), for a given combination of loss rates on all 5 links.} \label{fig_sweeps}
\vspace{-0.25em}
\end{figure*}

\begin{proof}
In \cite{minc}, it has been shown that ${\hat{\alpha}}_{k}^m$ in Eq.(\ref{eq-MC2}) are the MLE of the multicast tree. Therefore, ${\hat{\alpha}}_{k}={\hat{\alpha}}_{k}^m$, $k<D$, are also the MLE  of the corresponding links in the original tree. In addition, by following the same approach as in \cite{minc} and due to the reversibility property, one can show that ${\hat{\alpha}}_{k}={\hat{\alpha}}_{k}^r$, $k>C$, are also the MLE  of the corresponding links in the original tree. For $\hat{\alpha}_{CD}$,  since $\hat{\alpha}^m_{agg}=\hat{A}^m_D$ and using Eq.(\ref{eq-denominator}), one can obtain Eq.(\ref{eq-alphaCDestimate}) from Eq.(\ref{eq-MCreduction2}). Therefore, Eq.(\ref{eq-alphaCDestimate}) is a solution of $\frac{\partial \mathcal{L}}{\partial \alpha_{CD}} = 0$. Furthermore, from Eq.(\ref{eq-likelihoodDerivative-CD}) and Eq.(\ref{eq-betaBarC}), we have that: 
\begin{equation}
\frac{\partial^2 \mathcal{L}}{\partial \alpha_{CD}^2} =  \frac{-1}{\alpha_{CD}^2} \sum_{x\in \Omega^m(D)} n(x) - \frac{1}{{\overline \beta^m_{C}}^2}(\frac{\partial \overline \beta^m_{C}}{\partial \alpha_{CD}})^2 \sum_{x\in {(\Omega^m(C))}^c} n(x)
\end{equation}
This is always negative. Therefore, $\mathcal{L}$ is concave in $\alpha_{CD}$ and Eq.(\ref{eq-alphaCDestimate}) is the unique solution of the likelihood equation. This solution is also in the desired range $(0,1]$, because from Eq.(\ref{eq-alphaCDproof-2}), we have that: 
\begin{displaymath}
\hat{\alpha}_{CD}>0 \Longleftrightarrow \hat{p}([0,0,\cdots,0])<1
\end{displaymath}
\ie not all packets are lost, which is the default assumption in tomography: no inference can be made without data. Also: 
\begin{displaymath}
\hat{\alpha}_{CD}<1 \Longleftrightarrow 1-\hat{p}([0,...0]) < 
(1-\prod_{i=1}^{P} \overline \beta^r_{f(C)_i})(1-\prod_{j=1}^{Q} \overline \beta^m_{d(D)_j})
\end{displaymath}
This is asymptotically true for $\alpha_{CD}>0$, because as $n\rightarrow \infty$, the percentage of packets that are {\em not} lost approaches the probability $(1-\prod_{i=1}^{P} \overline \beta^r_{f(C)_i})\alpha_{CD}(1-\prod_{j=1}^{Q} \overline \beta^m_{d(D)_j})$, which is $<(1-\prod_{i=1}^{P} \overline \beta^r_{f(C)_i})(1-\prod_{j=1}^{Q} \overline \beta^m_{d(D)_j})$. Therefore, Eq.(\ref{eq-alphaCDestimate}) is the MLE of $\alpha_{CD}$ in the original tree. 
\end{proof}

We now provide additional details and simulation results on the effect of the number and location of sources.

\section{The effect of the number and location of sources}

\subsection*{Appendix B.1: Various Configurations for the 5-link Topology}
Let us consider again the four cases shown in Fig.~\ref{fig_basic} for the basic 5-link topology.
The first case, also shown in Fig. \ref{figure_1}, has been discussed in length in Table \ref{tabl_1} and in Section \ref{sec:singlelink}. The corresponding tables used for estimation in Cases 2, 3 and 4 of Fig. \ref{fig_basic} are shown for completeness in Tables \ref{conf_2}, \ref{conf_3} and \ref{conf_4}.

\subsection*{Appendix B.2: Simulation Results for the 5-link Topology}

 Consider again the basic 5-link topology of Fig. \ref{fig_basic} and focus on estimating the middle link CD. Here we show that, even though with network coding links are identifiable for all four cases, the estimation accuracy differs.

In Fig.~\ref{fig_exp_conv}, we assume that all $5$ links have
$\overline\alpha=0.3$ and we look at the convergence of the MLE vs. number of
probes for {\em Case 1} (using network coding) and for {\em Case 2} (multicast
probes with source $A$). Fig.~\ref{fig_exp_conv}(a) shows the estimated
value (for one loss realization). Both estimators converge to the
true value, with the network coding being only slightly faster in
this scenario.

In Fig. \ref{fig_exp_conv}(b), we plot the mean-squared error of the
MLE  for  {\em Case 1} (using network coding) and for {\em Case 2} (multicast)
across number of probes. For comparison, we have also
plotted the Cramer-Rao bound  for link $CD$, which is consistent
with the simulation results. For this scenario, {\em Case 1} does slightly
better than {\em Case 2}, but not by a significant amount. This motivated
us to exhaustively compare all four cases  in Fig.~\ref{fig_basic},
for all combinations of loss rates on the $5$ links.

Fig.~\ref{fig_fischer_5} plots the Cramer-Rao bound for the four
cases as a function of the link-loss probability on the middle link.
The left plot assumes that $\overline\alpha$ is the same for all five links,
while the right plot looks at the case where the edge links have a fixed loss rate equal to $0.5$. We observe that {\em Case 1} shows to
achieve a lower $MSE$ bound. Interestingly, the curves for {\em Case
2} (multicast) and {\em Case 4} (reverse multicast) coincide. The difference
between the performance of different cases is more evident in the
right plot (Fig.~\ref{fig_fischer_5}(b)).

In Fig.~\ref{fig_sweeps}, we systematically consider possible
combinations of loss rates on the 5 links, and we show which case
estimates better the middle link. In the left figure, we assume that
all edge links have the same loss rate and we observe that for most
combinations of $(\overline\alpha_{middle},\overline\alpha_{edge})$, {\em Case 1} (shown in ``+'') performs better. In the right plot, we assume that the middle
link is fixed at $\overline\alpha_{CD}=0.8$ and that
$\overline\alpha_{AC}=\overline\alpha_{BC}=\overline\alpha_{s}$, $\overline\alpha_{DE}=\overline\alpha_{DF}=\overline\alpha_{r}$.
Considering all combinations ($\overline\alpha_{s},\overline\alpha_{r}$), each one
of the four cases dominates for some scenarios. An interesting
observation is, again, the symmetry between {\em Case 2} (multicast) and
{\em Case 4} (reverse multicast).

\section*{Acknowledgements}
The authors would like to thank Suhas Diggavi and Ramya Srinivasan for interactions on the problem of source selection.


\vspace*{-1.0\baselineskip} 
\begin{biography}[{\includegraphics[width=1in,clip,keepaspectratio]{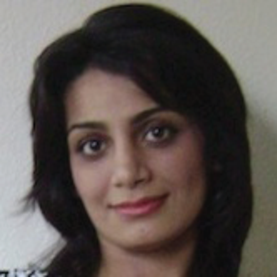}}]{Pegah Sattari}
(S'08) received the B.S. degree in Electrical Engineering from Sharif University of Technology, Tehran, Iran, in 2006, and the M.S. and Ph.D. degrees in Electrical and Computer Engineering from the University of California, Irvine, in 2007 and 2012, respectively. She is currently a senior software engineer at Jeda Networks Inc.. Her research interests include network measurement and analysis, network coding, and network tomography/inference problems. 
\end{biography}

\vspace*{-2.0\baselineskip} 
\begin{biography}[{\includegraphics[width=1in,clip,keepaspectratio]{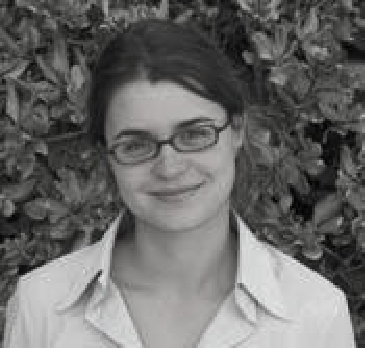}}]{Athina Markopoulou}
(S'98, M'02, SM'12) is an Associate Professor in the EECS Department at the University of California, Irvine. She received the Diploma degree in Electrical and Computer Engineering from the National Technical University of Athens, Greece, in 1996, and the M.S. and Ph.D. degrees in Electrical Engineering from Stanford University in 1998 and 2003, respectively. She has been a postdoctoral fellow at Sprint Labs and at Stanford University, and a member of the technical staff at Arastra Inc.. Her research interests include network coding, network measurement and security, media streaming and online social networks. She received the NSF CAREER award in 2008.
\end{biography}

\vspace*{-2.0\baselineskip} 
\begin{biography}[{\includegraphics[width=1in,clip,keepaspectratio]{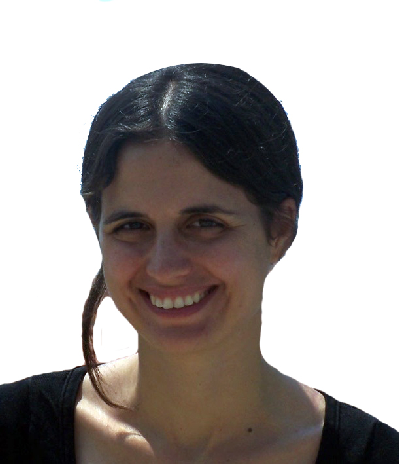}}]{Christina Fragouli}
(M'00) is an Associate Professor in the School of Computer and Communication Sciences, EPFL, Switzerland. She received the B.S. degree in Electrical Engineering from the National Technical University of Athens, Greece, in 1996, and the M.Sc. and Ph.D. degrees in Electrical Engineering from the University of California, Los Angeles, in 1998 and 2000, respectively. She has worked at the Information Sciences Center, AT\&T Labs, and the National University of Athens. She has also visited Bell Labs and DIMACS, Rutgers University. Her research interests include network coding, network information flow theory and algorithms, and connections between communications and computer science. She received the ERC Starting Grant from the European Research Council in 2009.
\end{biography}

\vspace*{-2.0\baselineskip}
\begin{biography}[{\includegraphics[width=1in,clip,keepaspectratio]{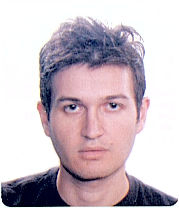}}]{Minas Gjoka} received his B.S. (2005) degree in Computer Science at the Athens University of Economics and Business, Greece, and his M.S. (2008) and Ph.D. (2010) degrees in Networked Systems at the University of California, Irvine. He is currently a postdoc at the University of California, Irvine. His research interests are in the general areas of networking and distributed systems, with emphasis
on online social networks, peer-to-peer systems, and network measurement.
\end{biography}

\end{document}